\documentclass[aps,prl,twocolumn,superscriptaddress,nofootinbib,floatfix]{revtex4-2}

\usepackage[T1]{fontenc}
\usepackage[utf8]{inputenc}
\usepackage{amsmath,amssymb,mathtools,bm}
\usepackage{graphicx}
\usepackage{booktabs}
\usepackage[dvipsnames]{xcolor}
\usepackage{soul}
\usepackage[colorlinks=true,citecolor=blue,linkcolor=blue,urlcolor=blue]{hyperref}

\newcommand{\cA}{\mathcal A}
\newcommand{\cH}{\mathcal H}
\newcommand{\cM}{\mathcal M}
\newcommand{\cP}{\mathcal P}
\newcommand{\cS}{\mathcal S}
\newcommand{\cV}{\mathcal V}
\newcommand{\RR}{\mathbb R}

\newcounter{prltheorem}
\newenvironment{prltheorem}{%
  \par\smallskip\noindent\refstepcounter{prltheorem}\textbf{Theorem \theprltheorem}\ \begingroup\itshape
}{%
  \par\endgroup\smallskip
}

\newcommand{\update}[1]{\textcolor{black}{#1}}
\newcounter{prlcorollary}
\newenvironment{prlcorollary}[1]{%
  \par\smallskip\noindent\refstepcounter{prlcorollary}\textbf{Corollary \theprlcorollary\ (#1).}\ \begingroup\itshape
}{%
  \par\endgroup\smallskip
}
\newif\ifshowchanges
\showchangestrue
\ifshowchanges
  
  \newcommand{\remove}[1]{{\color{lightgray}\st{#1}}}

  \newcommand{\mremove}[1]{{\color{lightgray}#1}}
  
\else
  
  \newcommand{\remove}[1]{}

  \newcommand{\mremove}[1]{}
  
\fi

\begin{document}

\title{Algebraic Speedups for Exact Inversion of Hamiltonian Evolutions}
\author{Jizhe Lai}
\author{Mingrui Jing}
\author{Erdong Huang}
\affiliation{Thrust of Artificial Intelligence, Information Hub,\\
The Hong Kong University of Science and Technology (Guangzhou), Guangzhou 511453, China}
\affiliation{QudeLeap Research, Shanghai 200030, China}
\author{Xin Wang}
\email{felixxinwang@hkust-gz.edu.cn}
\affiliation{Thrust of Artificial Intelligence, Information Hub,\\
The Hong Kong University of Science and Technology (Guangzhou), Guangzhou 511453, China}

\begin{abstract}
Deterministic exact inversion of an arbitrary $d$-dimensional unitary requires {$\Theta(d^2)$} coherent forward calls in the worst case. We ask how this cost changes for Hamiltonian evolution $U(x)=\exp(i\sum_j x_jH_j)$ when the generators are known but the parameters are hidden. For one-parameter families with a fixed eigenbasis, we show that additive relations among the distinct eigenvalues determine the optimal query number exactly, and we construct the corresponding inversion protocol.
For general families, we prove that repeated symmetry sectors do not affect the exact query complexity and give an automatic construction for combining inverses from inequivalent active sectors. We also give a sufficient phase-alignment condition under which family-specific structure can reduce 
{the query number.}
These results establish structure-dependent bounds for reversing the unknown dynamics arising in Tavis-Cummings out-of-time-order correlator protocols, collective-spin echo verification, and passive multimode links, without requiring prior knowledge or explicit estimation of the underlying coupling strengths.
\end{abstract}

\maketitle

\paragraph*{Introduction.---}
Reversing a many-body quantum evolution is a basic operation for
probing dynamics.  Loschmidt echoes, echo-based verification, and
out-of-time-order correlator (OTOC) protocols compare a forward
evolution with a backward branch to diagnose irreversibility, control
errors, and information scrambling
~\cite{Oreshkov2015TimeReversal,Swingle2016MeasuringScrambling}.
For an evolution $U=e^{-iHt}$ {with Hamiltonian $H$}, the ideal backward branch is $U^\dagger=e^{iHt}$, formally obtained by changing $t$ to $-t$.
In an experiment, however, the hardware may 
{keep}
the interaction graph, the allowed coupling terms, and their symmetries, while the realized coupling strengths and detunings remain imperfectly calibrated or slowly drift~\cite{BaireyAradLindner2019,CastanedaWiebe2025PseudoChoi}. Knowing the form of the interactions is therefore not equivalent to having direct access to the sign-reversed Hamiltonian.

One possible route is to learn the realized Hamiltonian and then synthesize its inverse.  Such a procedure produces a finite-precision classical or quantum representation and incurs an accuracy-dependent cost ~\cite{mohseni2008quantum,Bisio2010,Zhu2024HamiltonianRecognition}.
Dynamical reversal  requires the action of the original evolution, rather than an estimate of the parameters that generate it. We consider a Hamiltonian family $H(x)=\sum_{j}x_{j}H_{j}$ generating $U(x)=e^{iH(x)}$, where the generators $H_{j}$ are known but the coefficients $x$ are hidden. 
Given access only to forward calls to $U(x)$, {we ask for the minimal number of calls required} 
for a fixed quantum circuit to implement $U(x)^\dagger$ for every allowed $x$, deterministically and exactly, without estimating $x$.
We call this minimum query number the reversing cost of the Hamiltonian family. \textit{
We therefore ask whether structured many-body Hamiltonian families can have a reversing cost that grows only polynomially with the number of qubits in the systems rather than exponentially.
} Such a circuit is a higher-order quantum transformation assembled from repeated calls to the input dynamics and fixed parameter-independent operations ~\cite{Chiribella2008Supermap,Chiribella2008Circuit, Chiribella2009Networks,Bisio2019HigherOrder,Milz2024Higher}.

\begin{figure}[t!]
\centering
\includegraphics[width=\columnwidth,trim=1.4cm 0cm 0cm 0cm, clip]{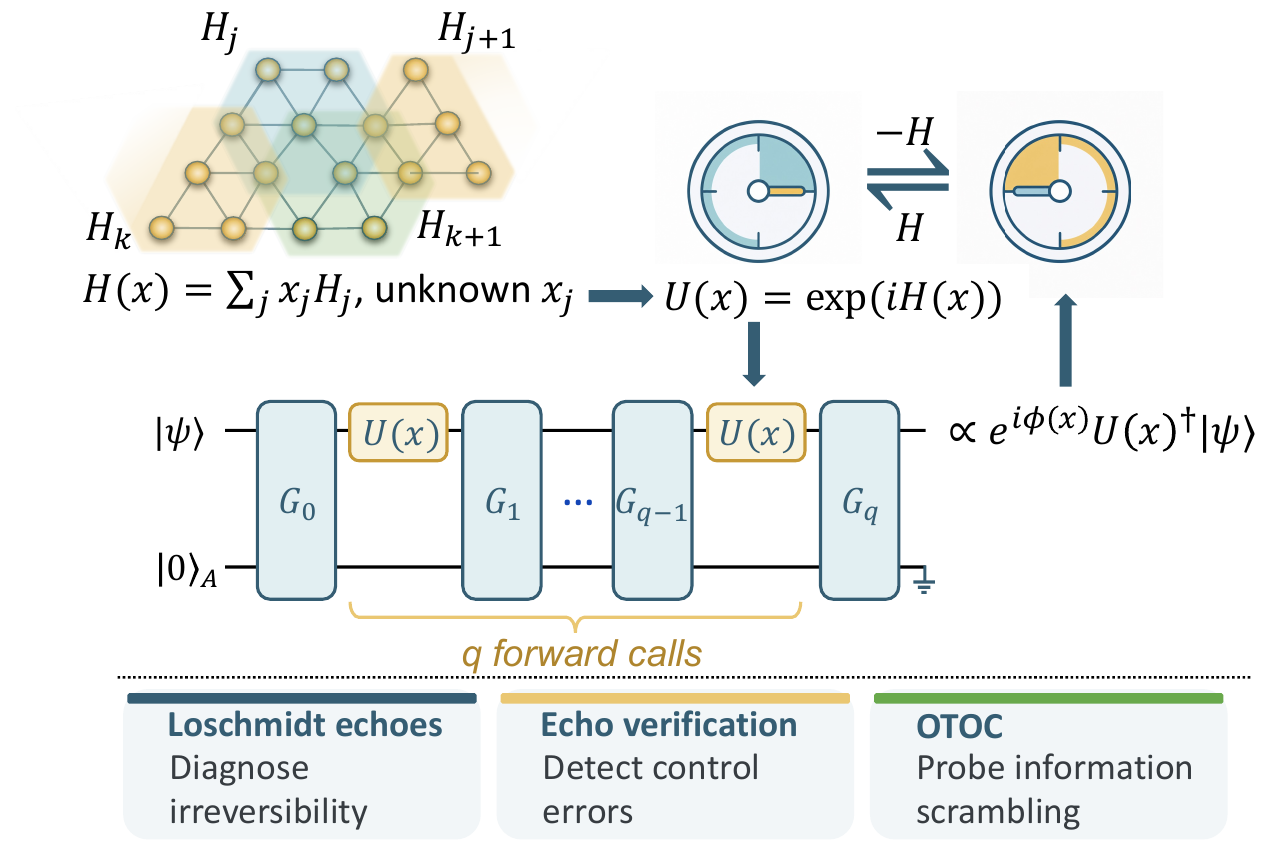}
\caption{\textbf{Structured Hamiltonian inversion.}
The Hamiltonian generators $H_j$ and their algebraic relations are known, while their coefficients $x_j$ are hidden. A fixed quantum circuit interleaves $q$ forward calls to $U(x)=e^{iH(x)}$ with parameter-independent gates and implements $U(x)^\dagger$ up to a global phase, while restoring all auxiliary registers. The resulting inverse can be used in Loschmidt echoes, echo verification, and out-of-time-order correlator protocols.}
\label{fig:promise}
\end{figure}

Deterministic exact inversion of an arbitrary $d$-dimensional unitary requires optimally $\mathcal{O}(d^2)$ queries~\cite{Yoshida2023QubitInversion,Chen2024QURA, Odake2025AnalyticalLowerBound}, and the dimension dependence persists even with constant approximation error~\cite{ChenYuZhang2025TimeReversal}, making universal inversion exponentially costly for generic many-body dynamics.  
Probabilistic, approximate, and Hamiltonian-transformation protocols provide alternative primitives~\cite{Quintino2019PRL,Quintino2019PRA,QuintinoEbler2022, Odake2024HamiltonianDynamics,Odake2025EigenvalueTransformation}, 
Lower-bound methods have also been extended to inputs promised to lie in a Lie subgroup~\cite{Odake2025AnalyticalLowerBound}. One constructive route applies when a fixed operation $V$ satisfies $V H_j V^\dagger=-H_j$ for every generator, enabling one-query inversion~\cite{MoLinWang2025Structured}.
Many physical families admit no such common sign flip, raising the question of whether weaker spectral or representation structure can still reduce the exact query number~\cite{KnillLaflammeViola2000, EtingofBook}.

In this paper, we show that algebraic structure reduces the number of queries required for exact inversion of unitaries generated by Hamiltonian families. For families of commuting Hamiltonians, an additive-sumset criterion determines the reversing cost exactly from the distinct joint-eigenvalue vectors; the cost is the optimal value of an explicit integer program. For noncommuting generators, the Wedderburn decomposition of the generated matrix algebra reduces the family to its inequivalent representations, proving that repeated symmetry sectors do not affect the query complexity. We further derive a sufficient phase-synchronization condition for combining the sector inverses into a coherent global inverse. Finally, we apply these results to the Tavis--Cummings family, the collective-spin family, and passive multimode systems, where the resulting bounds depend on the active symmetry sectors or spectral labels rather than on the full Hilbert-space dimension.

\paragraph*{Exact inversion from forward evolutions.---}
Consider a structured unitary family 
{with $x\in\RR^p$}
\begin{equation}
    U(x)=\exp\!\left(i\sum_{j=1}^{p}x_jH_j\right)
    \label{eq:family}
\end{equation}
\update{defined by $p$ linearly independent Hermitian generators $H_j$ acting on the system Hilbert space. The generators are known, whereas $x$ is unknown. The protocol may use repeated forward calls to $U(x)$, interleaved with fixed $x$-independent quantum gates. These gates may depend on the known generator family, for example through a known eigenbasis or symmetry-adapted basis.}

A deterministic $q$-query protocol, or quantum comb~\cite{Chiribella2008Circuit}, interleaves fixed quantum gates with $q$ forward calls. Following Refs.~\cite{Gavorova2024Topological,Yoshida2023QubitInversion}, we consider the \textit{clean} inversion protocols, which return every auxiliary register to its initial state. The target is $e^{i\phi(x)}U(x)^\dagger$ for every $x$, where $\phi(x)\in\RR$ is a parameter-dependent global phase. Only the query number is counted. More explicitly, let $A$ denote the auxiliary registers, initialized in $|0\rangle_A$, and set $Q_x=U(x)\otimes I_A$. For fixed quantum gates $G_0,\ldots,G_q$, {acting on both the system and auxiliary registers}, 
the protocol satisfies
\[
\begin{aligned}
&G_qQ_xG_{q-1}\cdots G_1Q_xG_0
\bigl(|\psi\rangle\otimes|0\rangle_A\bigr)\\
&\quad\qquad\qquad=e^{i\phi(x)}U(x)^\dagger|\psi\rangle\otimes |0\rangle_A
\end{aligned}
\]
for every $x$ and every input state $|\psi\rangle$. We refer to the smallest such $q$ as the \textit{reversing cost} of the Hamiltonian family.

\paragraph*{Reversing dynamics generated from commuting Hamiltonian families.---} 
Consider a Hamiltonian $H$ with known spectral decomposition $H=\sum_{j=0}^{K-1}\lambda_j\Pi_j$, where $\lambda_j$ are distinct and $\Pi_j$ is the projector onto the corresponding eigenspace. Let $\Lambda=\{\lambda_0,\ldots,\lambda_{K-1}\}$. We first consider the one-parameter family $U(x)=e^{iHx}$ with hidden evolution parameter $x$; the target is $U(x)^\dagger=e^{-iHx}$ up to a global phase. 
For $q\ge0$, let $\Sigma_q(\Lambda)$ denote {the set of} all sums of $q$ eigenvalues, with repetitions allowed:
\[
\Sigma_q(\Lambda)
=\left\{\lambda_1+\cdots+\lambda_q:\lambda_1,\ldots,\lambda_q\in\Lambda\right\},
\quad \Sigma_0(\Lambda)=\{0\}.
\]
{Routing one query through the $\lambda_j$ eigenspace contributes $e^{i\lambda_jx}$. Every branch of a $q$-query protocol passes through all $q$ calls, so each element of $\Sigma_q(\Lambda)$ collects exactly $q$ spectral labels, one per query. An input component can therefore be inverted only when its spectral label can be completed to a common total frequency.}

\begin{prltheorem}\label{thm:spectral-seed}
Let $U(x) = e^{iHx}$ be a one-parameter unitary family generated by $H$. The minimum number of forward calls required to implement $U(x)^\dagger$ up to a global phase for unknown $x\in\RR$ is
\begin{equation}
\begin{aligned}
&\kappa(\Lambda)=
\min\!\left\{q\in\mathbb Z_{\ge0}:\begin{array}{l}
\exists c\in\RR\ \text{such that}\\[-2pt]
c-\lambda\in\Sigma_q(\Lambda)\ \forall\lambda\in\Lambda
\end{array}\right\}.
\end{aligned}
\end{equation}
\end{prltheorem}

\emph{Proof sketch:---} 
{For sufficiency, suppose that $c-\lambda\in\Sigma_q(\Lambda)$ for every $\lambda\in\Lambda$. For each $\lambda$, choose $q$ spectral labels whose sum is $c-\lambda$, and route the corresponding input component through those eigenspaces. The accumulated phase is $e^{icx}e^{-i\lambda x}$, so the protocol implements $U(x)^\dagger$ up to the common phase $e^{icx}$.} The multiplicity label is stored on an auxiliary register that is restored at the end and never undergoes the unknown evolution. 

For necessity, every matrix element of a $q$-query protocol is a finite exponential polynomial whose frequencies belong to $\Sigma_q(\Lambda)$. If the protocol gives $e^{i\phi(x)}U(x)^\dagger$, 
then $e^{i\phi(x)}e^{-i\lambda x}$ has support in that set for every $\lambda$, where $e^{i\phi(x)}$ is the residual global phase. {A finite exponential sum with distinct real frequencies that has unit modulus for all $x\in\RR$ must consist of a single mode: otherwise the difference between its largest and smallest frequencies would contribute a unique nonvanishing Fourier coefficient to its squared modulus. The complete argument is given in the Supplemental Material~\cite{SupplementalMaterial}.}

{In particular, when $H$ has $K$ distinct eigenspaces, we have $\kappa(\Lambda) \le K-1$, independent of all eigenspace multiplicities~\cite{SupplementalMaterial}. 
The universal $K-1$ construction routes the query register once through every distinct eigenspace except the input eigenspace.} 

For commuting Hamiltonian families, the corresponding unitary can always be written as~\cite{HornJohnson2012MatrixAnalysis}
\[
 U(x)=\sum_{\lambda\in\Lambda}e^{i\lambda\cdot x}\Pi_\lambda,
 \qquad \Lambda\subset\RR^p.
\]
The same argument gives the multiparameter condition as in Theorem~\ref{thm:spectral-seed}, now with $c\in\RR^p$ and
$\Sigma_q(\Lambda)$ defined by vector addition. The routing construction is unchanged, while necessity follows by restricting the finite multivariate exponential polynomial to a generic line that separates its frequencies. A complete proof is given in the Supplemental Material.

\paragraph*{Reversing dynamics generated from general Hamiltonian families.---} 
For noncommuting generators, the eigenspaces of $\sum_jx_jH_j$ generally depend on $x$, so there is no fixed spectral basis in which to apply Theorem~\ref{thm:spectral-seed}. We instead use the finite-dimensional matrix algebra 
generated by the known Hamiltonian terms~\cite{KnillLaflammeViola2000,EtingofBook}. Wedderburn theory states that, after a known change of basis, $U(x)$ decomposes into a direct sum of inequivalent matrix families $U_\alpha(x)$. The $\alpha$th family acts on a $d_\alpha$-dimensional space and may appear $m_\alpha$ times, with every copy undergoing the same unknown matrix. We denote the corresponding restricted generators by $H_{j,\alpha}$. Repeated copies therefore carry no additional dependence on $x$. 

\begin{prltheorem}\label{thm:algebraic-inversion}
Let $U(x)=\exp{(iH(x))}$ be the unitary subset generated by the Hermitian operators $H(x) = \sum_j x_jH_j$ where $x = (x_j)$ is the collection of all components. 
Suppose that the $\alpha$-th block matrix family $U_\alpha(x)$ can be reversed with $q_{\alpha}$ forward calls. Then the full family can be reversed using at most
$\sum_{\alpha=1}^{r}(q_{\alpha}+1)-1$
forward calls. In particular, applying a general $d_\alpha$-dimensional unitary inverter~\cite{Chen2024QURA} to each inequivalent matrix family 
gives the constructive upper bound $\mathcal O(\sum_{\alpha=1}^{r}d_\alpha^2)$.
\end{prltheorem}

A fixed basis change stores the copy label and routes the active state through one representative; reversing this routing gives the converse direction with the same number of forward calls.
All $m_\alpha$ copies undergo the same unknown matrix $U_\alpha(x)$, and a fixed basis change can store the copy label while routing one representative through the block access of $U(x)$. The same routing works in reverse, so a $q$-query inverse exists for the full family if and only if one exists for the reduced family.

\emph{Proof sketch:---}{This upper bound follows from a phase-completion argument. Let a $q_{\alpha}$-query inverse of $U_\alpha(x)$ leave the residual phase $\rho_\alpha(x)$. Appending one further forward call cancels the inverse matrix action: the resulting $(q_\alpha+1)$-call sequence acts on sector $\alpha$ as the scalar $\rho_\alpha(x)$. For an input supported in sector $\alpha$, the protocol applies the $q_\alpha$-query inverse to the input register and, for every $\beta\neq\alpha$, runs the closed $(q_\beta+1)$-call sequence on an auxiliary register prepared in a fixed state of sector $\beta$. Each auxiliary register returns to its initial state, releasing $\rho_\beta(x)$ as a global phase. Every input sector therefore uses $\sum_\beta(q_\beta+1)-1$ calls and acquires the same phase $\prod_\beta\rho_\beta(x)$, so the sector inverses combine coherently on superpositions. All sector-dependent routing is performed by fixed gates, so the construction 
uses only ordinary calls to $U(x)$; direct access to the individual blocks is not assumed.
The universal protocol gives $q_{\alpha}=\mathcal O(d_\alpha^2)$ and hence the last bound in Theorem~\ref{thm:algebraic-inversion}.}

The automatic construction is conservative. Fewer calls can suffice when shorter scalar-returning sequences make all matrix sectors use the same total query number and acquire the same $x$-dependent phase. Matching both quantities is sufficient for coherent recombination. The scalar case is characterized exactly by Theorem~\ref{thm:spectral-seed}; the general matrix-valued condition and its circuit construction are given in the Supplemental Material~\cite{SupplementalMaterial}.

For a universal inverse of a $d_\alpha\times d_\alpha$ unitary, $q_{\alpha}+1=d_\alpha s_\alpha$ and the residual phase is $\rho_\alpha(x)=\det[U_\alpha(x)]^{s_\alpha}$ for some positive integer $s_\alpha$~\cite{Gavorova2024Topological}. Since
$\det U_\alpha(x)=\exp(i\sum_jx_j\operatorname{tr}H_{j,\alpha})$, the trace vector
$(\operatorname{tr}H_{1,\alpha},\ldots,\operatorname{tr}H_{p,\alpha})$ determines the phase carried by that matrix sector. If every restricted generator is traceless, these determinant phases disappear and only the query lengths must be matched. Formal circuit definitions, complete proofs, and the parity conditions for this remaining padding problem are given in the Supplemental Material~\cite{SupplementalMaterial}.

\paragraph*{Structure-dependent scaling in physical models.---} We now apply the results above to three types of many-body dynamics that require a deterministic inverse of the forward evolution. Table~\ref{tab:applications} compares selected structured
construction bounds with dimension-only benchmarks obtained by treating the corresponding Hilbert space as an arbitrary unitary.
Figure~\ref{fig:link-hierarchy} makes the same comparison for passive multimode links. The dimension-only curve treats the full truncated photon space as an arbitrary unitary, whereas the other curves exploit the same single-particle mode transformation experienced by every photon, together with additional spectral structure for ring and bright-mode couplings.

\par\smallskip
\begin{center}
\refstepcounter{table}\label{tab:applications}
\begin{minipage}{\columnwidth}
\footnotesize
\textbf{TABLE~\thetable.} \textbf{Structured bounds and universal benchmarks.}
Here $n$ counts emitters, spins, or optical modes in the corresponding settings. For Tavis--Cummings dynamics, $M$ and $N$ are the fixed mode number and excitation cutoff; the passive-link rows use $N\le2$. The notation $\mathcal O_{M,N}(1)$ means that the query count is bounded independently of $n$ when $M$ and $N$ are fixed.
\par\smallskip
\centering
\begin{tabular}{@{}p{0.44\columnwidth}cc@{}}
\toprule
Setting & Structured & Universal \\
\midrule
Tavis--Cummings OTOC~\cite{TiwariBanerjee2023DickeChaos}
& $\mathcal{O}_{M,N}(1)$ & $\mathcal O(n^{2N})$ \\
Collective-spin echo~\cite{Weaving2025EVCDR}
& $\mathcal{O}(n^3)$ & $\mathcal O(4^n)$ \\
Passive link, arbitrary $X$~\cite{Lu2024OpticalDescrambler}
& $\mathcal{O}(n^2)$ & $\mathcal O(n^4)$ \\
Passive link, circulant $X$
& $n$ & $\mathcal O(n^4)$ \\
\bottomrule
\end{tabular}
\end{minipage}
\end{center}
\smallskip

\emph{Tavis-Cummings family.---} We consider the Tavis--Cummings Hamiltonian studied in atom--field OTOCs~\cite{TiwariBanerjee2023DickeChaos},
\begin{equation}
H_{\rm TC}=
\omega_0J_z+\sum_{\mu=1}^{M}\omega_\mu a_\mu^\dagger a_\mu
+\sum_{\mu=1}^{M}\left(g_\mu J_+a_\mu+g_\mu^*J_-a_\mu^\dagger\right)
\label{eq:TC-application}
\end{equation}
Here $J_z$ and $J_\pm$ are collective-spin operators for the $n$ two-level emitters, while $a_\mu$ annihilates a photon in mode $\mu$. 
The unknown scalar parameters are $\omega_0,\omega_\mu\in\RR$ and $g_\mu\in\mathbb C$. 
Equivalently, each $g_\mu$ may be parametrized by its real and imaginary parts, which serve as real coefficients of two known Hermitian generators.
For a fixed evolution time $t$, write $U=e^{-itH_{\rm TC}}$. An echo-based OTOC implements the Heisenberg operator $W(t)=U^\dagger WU$, where $W$ is known, and therefore requires the backward evolution $U^\dagger$~\cite{Swingle2016MeasuringScrambling}.
When the forward evolution is an atom--mode block with unknown frequencies and
couplings, this inverse cannot be synthesized from a calibrated Hamiltonian model. 
The protocol has coherent access to both the spin and resolved modes, as in programmable spin--boson platforms~\cite{Sun2025StructuredBath}.

Schur-Weyl duality splits the emitter space into sectors $\cM_j\otimes \cV_{S_j}$, with multiplicity space $\cM_j$ and active spin $S_j=n/2-j$~\cite{BaconChuangHarrow2006Schur}. Equation~\eqref{eq:TC-application} conserves the total-excitation operator $\mathcal N=J_z+\frac n2 I+\sum_\mu a_\mu^\dagger a_\mu$. When we consider states with at most $N$ total excitation, for $n\ge2N$, only spin values $S_j$ with $j\le N$ can occur.
Theorem~\ref{thm:algebraic-inversion} removes $\cM_j$; occupation counting then gives $d_j=\binom{M+N-j+1}{N-j}$. At fixed $M$ and $N$, these active dimensions are independent of $n$.

\begin{prlcorollary}{Tavis--Cummings dynamics}\label{cor:TC-scaling}
Fix $M$ and $N$. For $n\ge2N$, the Tavis--Cummings family restricted to at-most-$N$-excitation subspace admits a clean exact inverse using $\mathcal O_{M,N}(1)$ forward calls. The reversing cost is therefore bounded independently of the number $n$ of emitters~\cite{SupplementalMaterial}.
\end{prlcorollary}

\update{Permutation symmetry therefore changes the dimension-only $\mathcal O(n^{2N})$ benchmark into an $\mathcal O_{M,N}(1)$ forward-query bound at fixed $M$ and $N$.}

\emph{Collective-spin family.---}
As a second example, we consider the representative Ising- and Lipkin--Meshkov--Glick-type Hamiltonian~\cite{Dicke1954,LipkinMeshkovGlick1965,KumariAlhambra2022CollectiveSpin},
\begin{equation}
H_{\rm col}=
\sum_{\mu=x,y,z}h_\mu J_\mu
+\frac1n\sum_{\mu\le\nu}\chi_{\mu\nu}\left(J_\mu J_\nu+J_\nu J_\mu\right)
\label{eq:collective-QEM}
\end{equation}
Here $J_\mu=\frac12\sum_{r=1}^{n}\sigma_r^\mu$ is the collective-spin operator along $\mu\in\{x,y,z\}$, and $\sigma_r^\mu$ is the Pauli operator on spin $r$. The unknown scalar parameters are $h_\mu,\chi_{\mu\nu}\in\RR$.

For a fixed evolution time $t$, write $U=e^{-itH_{\rm col}}$. Echo verification contains the coherent sequence $U^\dagger VU$, where $V$ is a known verification operation. The subsequent verification step may involve postselection~\cite{Weaving2025EVCDR}. For a hardware-native analog block, a
parameter-independent control reversal need not invert all realized
Hamiltonian terms~\cite{KissGrossiRoggero2025QEM}. Our protocol instead
constructs the inverse from repeated uses of the same forward evolution. Because
the inverse is applied unconditionally to the system register, its residual global phase is irrelevant.

Removing the repeated Schur-Weyl factors leaves active dimensions $d_j=n-2j+1$. The automatic construction sums universal inversion costs over these active blocks. Treating the full Hilbert space as an arbitrary unitary would instead give the dimension-only $\mathcal O(4^n)$ benchmark. 

\begin{prlcorollary}{Collective-spin dynamics}\label{cor:collective-scaling}
The collective family in equation~\eqref{eq:collective-QEM} admits a clean exact inverse using $\mathcal O(n^3)$ forward calls~\cite{SupplementalMaterial}. 
\end{prlcorollary}

\emph{Passive multimode links.---}
In the last example, we consider $n$ optical modes with coherent mixing generated by $H_{\rm link}=\sum_{r,s=1}^{n}X_{rs}a_r^\dagger a_s$. Here $a_r$ annihilates a photon in mode $r$, while $X=(X_{rs})\in\mathbb C^{n\times n}$ is the unknown Hermitian single-particle coupling matrix. Thus $X_{rs}$ are scalar coefficients and $H_{\rm link}$ is an operator on the photonic Fock space. Receiver-side meshes can undo a characterized mode transformation~\cite{Lu2024OpticalDescrambler}; here we ask whether the inverse can instead be constructed from repeated uses of the same forward evolution.
We analyze the ideal unitary limit of a stable link that can be traversed coherently using known routing operations, and restrict the input to at most $N$ photons.

\begin{figure}[t]
\centering
\includegraphics[width=0.98\columnwidth]{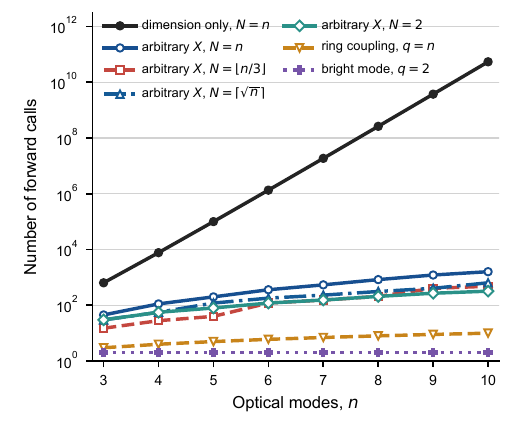}
\caption{\label{fig:link-hierarchy}\textbf{Query hierarchy for a passive multimode link.} 
{The horizontal axis gives the number $n$ of optical modes, the vertical axis is logarithmic, and $N$ denotes the photon-number cutoff. The dimension-only curve treats the full at-most-$n$-photon Hilbert space as an arbitrary unitary. The four arbitrary-$X$ curves show constructive upper bounds for the indicated values of $N$. For $N=2$, translation-invariant couplings around a ring and the rank-one bright-mode family give the exact values $q=n$ and $q=2$, respectively. The widening separation shows how additional structure changes the scaling from exponential to polynomial, and then to linear or constant.}}
\end{figure}

The photon-number operator $\widehat N=\sum_{r=1}^{n}a_r^\dagger a_r$ commutes with $H_{\rm link}$. 
The at-most-$N$-photon space therefore decomposes into $r$-photon subspaces of dimensions $d_r(n)=\binom{n+r-1}{r}$. On each such subspace, the same unknown $n$-mode transformation acts on every photon. Applying the corresponding single-particle inverse to each photon and synchronizing the photon-number components gives an inverse using $q_{\rm pass}=\mathcal O(Nn^2)$ forward calls~\cite{SupplementalMaterial}. For fixed $N$, this is $\mathcal O(n^2)$; it becomes $\mathcal O(n^{5/2})$ for $N=\lceil\sqrt n\rceil$ and $\mathcal O(n^3)$ for $N=n$. For $N=n$, the truncated space has dimension $\binom{2n}{n}$, so treating its evolution as arbitrary gives the dimension-only $\mathcal O(16^n/n)$ benchmark shown in Fig.~\ref{fig:link-hierarchy}.

\begin{prlcorollary}{Passive-link hierarchy}\label{cor:passive-scaling}
For arbitrary Hermitian $X$, the passive link on the at-most-$N$-photon subspace admits a clean exact inverse using $\mathcal O(Nn^2)$ forward calls. For fixed $N$, this gives $\mathcal O(n^{2})$. On the at-most-two-photon subspace, the exact reversing costs are $n$ for Hermitian circulant $X$ and $2$ for the rank-one bright-mode family with $n\ge2$. The latter cost reduces to one when the input is restricted to exactly two excitations~\cite{SupplementalMaterial}.
\end{prlcorollary}

Additional spectral promises reduce these matrix-valued representations.
For translation-invariant couplings around a ring, $X_{rs}$ depends only on $r-s$ modulo $n$. The known Fourier basis then diagonalizes all sectors on the cutoff $\widehat N\le2$, leaving only scalar occupation phases.
Within the fixed-eigenbasis model of Theorem~\ref{thm:spectral-seed}, the optimum is $q=n$; the construction and matching lower bound are given in the Supplemental Material~\cite{SupplementalMaterial}.
For the rank-one bright-mode family $X=x_0I+x_1|b\rangle\!\langle b|$, with $x_0,x_1\in\mathbb R$ and $|b\rangle=n^{-1/2}\sum_r|r\rangle$, only six characters remain. The corresponding fixed-eigenbasis optimum is $q=2$, while one query suffices on the exactly two-excitation sector~\cite{SupplementalMaterial}.

\paragraph*{Discussion.---}
\update{We show that known Hamiltonian structure can reduce the forward-query complexity of deterministic exact inversion.}
For commuting Hamiltonian families, the optimal query number is determined by additive relations among distinct spectral labels, independent of degeneracy. For noncommuting families, Wedderburn reduction removes passive multiplicities. 
\update{Our construction then depends on inverting the inequivalent active blocks and aligning their residual phases.}
Thus, a rapidly growing many-body Hilbert space need not imply a harder inverse problem: much of it may consist only of redundant copies of the same unknown action.

More broadly, inversion and Hamiltonian simulation ask different questions. Simulation cost is typically governed by locality, sparsity, evolution time, and precision. Inversion instead asks whether repeated forward evolutions can be combined coherently to reproduce the inverse on every active sector. The two costs therefore need not track one another: exact spectral relations can make inversion cheap, while perturbations that break those relations may remove the exact advantage. This distinction suggests new directions in approximate and noise-robust inversion. It may also be useful in quantum metrology, where a coherently constructed inverse could implement $U(x)^\dagger V(\theta)U(x)$ for a known control $V(\theta)$ without first estimating the unknown nuisance parameters $x$. Coherent reversibility may therefore be viewed as an independent resource for quantum control and sensing.

\vspace{\baselineskip}
\paragraph*{Acknowledgments.---}
The authors would like to thank  the AQIS 2026 reviewers for thoughtful and insightful comments.
This work was partially supported by the National Key R\&D Program of China (Grant No.~2024YFB4504004), the National Natural Science Foundation of China (Grant. Nos.~2576114,
12447107), the Guangdong Provincial Quantum Science Strategic Initiative (Grant Nos.~GDZX2403008, GDZX2503001), the Guangdong Provincial Key Lab of Integrated Communication, Sensing and Computation for Ubiquitous Internet of Things (Grant No. 2023B1212010007).

\bibliographystyle{apsrev4-2}
\bibliography{ref}

\onecolumngrid
\clearpage
\newtheorem{definition}{Definition}
\newtheorem{proposition}{Proposition}
\newtheorem{lemma}[proposition]{Lemma}
\newtheorem{fact}{Fact}
\newtheorem{theorem}[proposition]{Theorem}
\newtheorem{corollary}[proposition]{Corollary}
\newtheorem{conjecture}{Conjecture}
 
\def\squareforqed{\hbox{\rlap{$\sqcap$}$\sqcup$}}
\def\qed{\ifmmode\squareforqed\else{\unskip\nobreak\hfil
\penalty50\hskip1em\null\nobreak\hfil\squareforqed
\parfillskip=0pt\finalhyphendemerits=0\endgraf}\fi}
\def\endenv{\ifmmode\;\else{\unskip\nobreak\hfil
\penalty50\hskip1em\null\nobreak\hfil\;
\parfillskip=0pt\finalhyphendemerits=0\endgraf}\fi}
\newenvironment{proof}{\noindent \textbf{{Proof~} }}{\hfill $\blacksquare$}

\newcounter{remark}
\newenvironment{remark}[1][]{\refstepcounter{remark}\par\medskip\noindent%
\textbf{Remark~\theremark #1} }{\medskip}

\newcounter{example}
\newenvironment{example}[1][]{\refstepcounter{example}\par\medskip\noindent%
\textbf{Example~\theexample #1} }{\medskip}

\newcommand{\exampleTitle}[1]{\textbf{(#1)}}
\newcommand{\remarkTitle}[1]{\textbf{(#1)}}
\newcommand{\proofComment}[1]{\exampleTitle{#1}}

\newenvironment{beweis}[1]{\noindent\textbf{Proof~{#1}~} }{\qed}

\mathchardef\ordinarycolon\mathcode`\:
\mathcode`\:=\string"8000
\def\vcentcolon{\mathrel{\mathop\ordinarycolon}}
\begingroup \catcode`\:=\active
  \lowercase{\endgroup
  \let :\vcentcolon
  }
\newcommand{\nc}{\providecommand}
\nc{\rnc}{\renewcommand}
\nc{\lbar}[1]{\overline{#1}}
\nc{\bra}[1]{\langle#1|}
\nc{\ket}[1]{|#1\rangle}
\nc{\dketbra}[2]{\vert #1 \rangle \hspace{-.8mm} \rangle \hspace{-.4mm} \langle\hspace{-.8mm}\langle #2 \vert}
\nc{\dbra}[1]{\langle\hspace{-.8mm}\langle #1\vert}
\nc{\dket}[1]{\vert#1\rangle\hspace{-.8mm}\rangle}
\nc{\ketbra}[2]{|#1\rangle\!\langle#2|}
\nc{\braket}[2]{\langle#1|#2\rangle}
\newcommand{\braandket}[3]{\langle #1|#2|#3\rangle}

\nc{\proj}[1]{| #1\rangle\!\langle #1 |}
\nc{\avg}[1]{\langle#1\rangle}
\nc{\rank}{\operatorname{Rank}}
\nc{\smfrac}[2]{\mbox{$\frac{#1}{#2}$}}
\nc{\tr}{\operatorname{Tr}}
\nc{\ox}{\otimes}
\nc{\dg}{\dagger}
\nc{\dn}{\downarrow}
\nc{\cA}{{\cal A}}
\nc{\cB}{{\cal B}}
\nc{\cC}{{\cal C}}
\nc{\cD}{{\cal D}}
\nc{\cE}{{\cal E}}
\nc{\cF}{{\cal F}}
\nc{\cG}{{\cal G}}
\nc{\cH}{{\cal H}}
\nc{\cI}{{\cal I}}
\nc{\cJ}{{\cal J}}
\nc{\cK}{{\cal K}}
\nc{\cL}{{\cal L}}
\nc{\cM}{{\cal M}}
\nc{\cN}{{\cal N}}
\nc{\cO}{{\cal O}}
\nc{\cP}{{\cal P}}
\nc{\cQ}{{\cal Q}}
\nc{\cR}{{\cal R}}
\nc{\cS}{{\cal S}}
\nc{\cT}{{\cal T}}
\nc{\cU}{{\cal U}}
\nc{\cV}{{\cal V}}
\nc{\cX}{{\cal X}}
\nc{\cY}{{\cal Y}}
\nc{\cZ}{{\cal Z}}
\nc{\cW}{{\cal W}}
\nc{\csupp}{{\operatorname{csupp}}}
\nc{\qsupp}{{\operatorname{qsupp}}}
\nc{\var}{{\operatorname{var}}}
\nc{\rar}{\rightarrow}
\nc{\lrar}{\longrightarrow}
\nc{\polylog}{{\operatorname{polylog}}}
\nc{\wt}{{\operatorname{wt}}}
\nc{\av}[1]{{\left\langle {#1} \right\rangle}}
\nc{\supp}{{\operatorname{supp}}}
\nc{\argmin}{{\operatorname{argmin}}}
\nc{\ERLO}{{E_{{ R,LO}}}}
\nc{\ERLOCC}{{E_{{R,\text{PPT}}}}}
\nc{\ERPPT}{{E_{{R,\text{PPT}}}}}
\nc{\ERPPTinf}{{E^{\infty}_{{R,\text{PPT}}}}}
\nc{\ER}{E_{\rm R}}
\nc{\ERLOCCinfty}{{E^{\infty}_{{r,LOCC}}}}
\nc{\Aram}{{\operatorname{\sf A}}}
\nc{\ECPPT}{{E_{{C,\text{PPT}}}}}
\nc{\EDPPT}{{E_{{D,\text{PPT}}}}}
\nc{\Freek}{{\text{PPT$_k$}}}
\nc{\Freesec}{{\text{PPT$_2$}}}

\nc{\NB}{N}
\nc{\LB}{LN}
\nc{\NPT}{{\text{NPT}}}
\nc{\RR}{{{\mathbb R}}}
\nc{\CC}{{{\mathbb C}}}
\nc{\FF}{{{\mathbb F}}}
\nc{\NN}{{{\mathbb N}}}
\nc{\ZZ}{{{\mathbb Z}}}
\nc{\PP}{{{\mathbb P}}}
\nc{\QQ}{{{\mathbb Q}}}
\nc{\UU}{{{\mathbb U}}}
\nc{\EE}{{{\mathbb E}}}
\nc{\id}{{\operatorname{id}}}
\nc{\PPT}{{\text{PPT}}}

\nc{\OLOCC}{{\text{1-LOCC}}}
\nc{\SEP}{{\text{SEP}}}
\nc{\NS}{{\text{NS}}}
\nc{\LOCC}{{\text{LOCC}}}
\nc{\cptp}{\operatorname{CPTP}}
\nc{\hptp}{\operatorname{HPTP}}
\newcommand{\cvartheta}{{\widetilde\vartheta}}
\newcommand{\Op}{\operatorname}

\newtheorem{problem}{Problem}
\newtheorem{note}[problem]{Note}
\newcommand{\Choi}{Choi-Jamio\l{}kowski }
\newcommand{\eps}{\varepsilon}

\nc{\spn}{\operatorname{span}}

\nc{\rep}[1]{\textbf{R}(#1)}

\nc{\dbraket}[2]{\langle\!\langle#1|#2\rangle\!\rangle}

\newcommand{\smcite}[1]{[#1]}
\setcounter{page}{1}
\renewcommand{\thepage}{S\arabic{page}}
\setcounter{section}{0}
\setcounter{equation}{0}
\setcounter{proposition}{0}
\setcounter{definition}{0}
\setcounter{figure}{0}
\setcounter{table}{0}
\renewcommand{\thesection}{S\arabic{section}}
\renewcommand{\thesubsection}{\thesection.\arabic{subsection}}
\renewcommand{\thesubsubsection}{\thesubsection.\arabic{subsubsection}}
\renewcommand{\theequation}{S\arabic{equation}}
\renewcommand{\theproposition}{S\arabic{proposition}}
\renewcommand{\thedefinition}{S\arabic{definition}}
\renewcommand{\thefigure}{S\arabic{figure}}
\renewcommand{\thetable}{S\arabic{table}}
\providecommand{\theHequation}{}
\providecommand{\theHproposition}{}
\providecommand{\theHdefinition}{}
\providecommand{\theHfigure}{}
\providecommand{\theHtable}{}
\renewcommand{\theHequation}{supp.eq.\arabic{equation}}
\renewcommand{\theHproposition}{supp.prop.\arabic{proposition}}
\renewcommand{\theHdefinition}{supp.def.\arabic{definition}}
\renewcommand{\theHfigure}{supp.fig.\arabic{figure}}
\renewcommand{\theHtable}{supp.tab.\arabic{table}}
\allowdisplaybreaks

\begin{center}
{\large\bfseries Supplemental Material for ``Algebraic Speedups for Exact Inversion of Hamiltonian Evolutions''\par}
\vspace{1em}
Jizhe Lai,\textsuperscript{1,2} Mingrui Jing,\textsuperscript{1,2}
Erdong Huang,\textsuperscript{1,2} and Xin Wang\textsuperscript{1,*}\\[0.4em]
\textsuperscript{1}Thrust of Artificial Intelligence, Information Hub,\\
The Hong Kong University of Science and Technology (Guangzhou), Guangzhou 511453, China\\
\textsuperscript{2}QudeLeap Research, Shanghai 200030, China
\end{center}
\vspace{1em}

This Supplemental Material proves the fixed-eigenbasis result and the
Wedderburn reduction used in the paper. It then gives the blockwise inversion
circuits, derives the three application bounds, and shows how additional
block-specific inverses or scalar-phase circuits can improve the automatic
upper bound.

\section{Notations and preliminaries}\label{appendix:Notations and preliminaries}

For a finite-dimensional Hilbert space \(X\), \(\cB(X)\) denotes the algebra of
linear operators on \(X\), \(U(X)\) the unitary group, and \(I_X\) the identity.
All circuits below use the clean convention of
Definition~\ref{def:structured-sequential-protocol}: every work register is
initialized in a fixed state and returned to that state for every input and
every parameter value. Fixed \(x\)-independent changes of the work register are
absorbed into the surrounding fixed gates. All isometries are
implemented by fixed unitary extensions after adding clean ancillas; the
extensions outside the legal subspaces do not affect the circuit action.
The main overloaded symbols are fixed here: \(q\) denotes a query number,
\(K\) denotes the number of distinct spectral values, \(\mathcal A\) denotes the
generated algebra.  To keep this appendix readable, Table~\ref{tab:core-notation}
records only the block-synchronization notation that is easy to confuse;
spectral, model-specific, and traceless-padding symbols are introduced locally
where they are used.

\par\medskip
\begin{center}
\refstepcounter{table}\label{tab:core-notation}
\textbf{TABLE~\thetable.} Core notation for blockwise synchronization.
\par\smallskip
\footnotesize
\begin{tabular}{@{}lll@{}}
Symbol & Variant / definition & Meaning \\
\hline
\(\mathcal A\) & \(\operatorname{alg}^{*}(I,H_1,\ldots,H_p)\) & finite-dimensional \(*\)-algebra generated by the known components \\
\(\Lambda_{\mathrm{blk}}\) &  & active irreducible block labels after Wedderburn reduction \\
\(\cM_\alpha,\cV_\alpha,d_\alpha\) & \(d_\alpha=\dim \cV_\alpha\) & multiplicity space, active block, and block dimension \\
\(U_\alpha,U_{\mathrm{red}}\) & \(U_{\mathrm{red}}=\oplus_\alpha U_\alpha\) & block family and reduced family \\
\(\mathcal P_\alpha,\ell_\alpha,\rho_\alpha\) & \(\mathcal P_\alpha[U_\alpha]=\rho_\alpha U_\alpha^\dagger\) & local inversion primitive \\
\(\mathcal S_\gamma,w_\gamma,\eta_\gamma\) & \(\mathcal S_\gamma[U_{\mathrm{red}}]=\eta_\gamma I\) & scalar phase gadget \\
\(n_{\alpha\gamma},\zeta\) & \(n_{\alpha\gamma}\in\ZZ_{\ge0}\) & gadget copies and synchronized phase \\
\(q_{\mathrm{auto}}\) & \(\sum_\alpha(\ell_\alpha+1)-1\) & automatic completion budget \\
\(s_\alpha\) &  & determinant exponent of a universal block primitive \\
\(D_\alpha\) & \(D_\alpha(x)=\det U_\alpha(x)\) & determinant character \\
\(\tau_\alpha\) & \(D_\alpha(x)=e^{i\tau_\alpha\cdot x},\;(\tau_\alpha)_i=\operatorname{tr}H_{i,\alpha}\) & trace vector \\
\(q_{\mathrm{det}}\) & \(\sum_\alpha d_\alpha s_\alpha-1\) & universal determinant budget \\
\(v^{(\alpha)},\tau_\star\) &  & determinant inventory and synchronized trace vector \\
\hline
\end{tabular}
\end{center}
\medskip

We use the finite-dimensional Wedderburn decomposition in the following
standard form~\smcite{1}.  Let
\[
    \mathcal A:=\operatorname{alg}^{*}(I,H_1,\ldots,H_p)\subseteq\cB(\cH_S).
\]
After a fixed basis change,
\[
    \cH_S\simeq
    \bigoplus_{\alpha\in\Lambda_{\mathrm{blk}}}(\cM_\alpha\otimes \cV_\alpha),
\]
and every \(X\in\mathcal A\) decomposes as
\[
    X\simeq
    \bigoplus_{\alpha\in\Lambda_{\mathrm{blk}}}
    (I_{\cM_\alpha}\otimes X_\alpha).
\]
The spaces \(\cM_\alpha\) are passive multiplicity factors and \(\cV_\alpha\) are
active irreducible factors.  We also use the elementary top-exterior identity:
for \(X\in U(V)\) with \(d=\dim V\), \(X^{\otimes d}\) acts on
\(\bigwedge^d V\) by the scalar \(\det X\).

\section{Abelian scalar-block routing and optimality}\label{sec:supp-proof}

\subsection{A cyclic-shift twirling lemma}\label{subsec:shift-lemma}

Let $\{\ket{j}\}_{j=0}^{K-1}$ be an orthonormal basis of a $K$-dimensional space and
define the cyclic shift
\begin{equation}\label{eq:shift-def-supp}
S:=\sum_{j=0}^{K-1}\ket{j\oplus 1}\!\bra{j},
\end{equation}
where $\oplus$ denotes addition modulo $K$.

\begin{lemma}\label{lem:shift-twirl}
Let $A=\sum_{j=0}^{K-1} a_j\ketbra{j}{j}$ be diagonal in $\{\ket{j}\}$.
Then
\begin{equation}\label{eq:twirl-supp}
\sum_{r=0}^{K-1} S^r A S^{-r} = \operatorname{tr}(A)\,I,
\qquad\text{and hence}\qquad
\sum_{r=1}^{K-1} S^r A S^{-r} = \operatorname{tr}(A)\,I - A.
\end{equation}
\end{lemma}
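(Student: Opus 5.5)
The plan is a direct computation of how conjugation by powers of the shift acts on diagonal matrices. First, from the definition \eqref{eq:shift-def-supp}, $S$ is a permutation matrix, so $S^{-1}=S^\dagger=\sum_j\ket{j}\!\bra{j\oplus1}$. By induction, $S^r=\sum_j\ket{j\oplus r}\!\bra{j}$ and $S^{-r}=\sum_j\ket{j}\!\bra{j\oplus r}$ for every integer $r$, with indices read modulo $K$.

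Next, I would conjugate a single diagonal projector. Using $S^r\ket{j}=\ket{j\oplus r}$ and $\bra{j}S^{-r}=(S^r\ket{j})^\dagger=\bra{j\oplus r}$, we get
\[
S^r\ketbra{j}{j}S^{-r}=\ketbra{j\oplus r}{j\oplus r}.
\]
By linearity this gives
\[
S^rAS^{-r}=\sum_{j}a_j\ketbra{j\oplus r}{j\oplus r}=\sum_k a_{k\ominus r}\ketbra{k}{k},
\]
so conjugation by $S^r$ cyclically relabels the diagonal entries.

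Summing over $r=0,\dots,K-1$, the coefficient of $\ketbra{k}{k}$ is $\sum_{r=0}^{K-1}a_{k\ominus r}$. As $r$ runs over $\mathbb Z_K$, the index $k\ominus r$ also runs bijectively over $\mathbb Z_K$, so this sum equals $\sum_j a_j=\operatorname{tr}(A)$ for every $k$. Since all off-diagonal terms vanish, the full sum is $\operatorname{tr}(A)\,I$, which is the first identity in \eqref{eq:twirl-supp}.

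The second identity follows by removing the $r=0$ term, which is $S^0AS^{0}=A$, and moving it to the other side.

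No step is a real obstacle. The only point needing care is the modular bookkeeping: $S^{-r}$ must be the inverse of $S^r$ (guaranteed by unitarity), and the reindexing $r\mapsto k\ominus r$ must be a bijection of $\mathbb Z_K$.
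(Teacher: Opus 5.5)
Your proposal is correct and follows essentially the same route as the paper: conjugate each projector to get $S^r|j\rangle\langle j|S^{-r}=|j\oplus r\rangle\langle j\oplus r|$, observe that summing over $r$ puts every $a_j$ exactly once on each diagonal position (giving $\operatorname{tr}(A)\,I$), and subtract the $r=0$ term. The only difference is bookkeeping: you fix the position $k$ and sum $a_{k\ominus r}$ over $r$, while the paper fixes the position and counts contributions over $j$.
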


\begin{proof}
Since $S^r\ket{j}=\ket{j\oplus r}$, we have
$S^r\ketbra{j}{j}S^{-r}=\ketbra{j\oplus r}{j\oplus r}$.
Therefore,
\[
\sum_{r=0}^{K-1} S^r A S^{-r}
= \sum_{r=0}^{K-1}\sum_{j=0}^{K-1} a_j\,\ketbra{j\oplus r}{j\oplus r}.
\]
For each fixed $\ell$, the projector $\ketbra{\ell}{\ell}$ appears once for each $j$, so the coefficient in front of $\ketbra{\ell}{\ell}$ is $\sum_j a_j=\operatorname{tr}(A)$. This proves \eqref{eq:twirl-supp}. Subtracting the $r=0$ term yields the second identity.
\end{proof}

We now set up the fixed scalar-block routing objects used in the proof.  Let the spectral decomposition of \(H\) be
\begin{equation}\label{eq:spectral-decomp-supp}
H=\sum_{k=0}^{K-1}\lambda_k\,\Pi_k,
\qquad \lambda_k\neq \lambda_{k'}~(k\neq k'),
\qquad \Op{rank}[\Pi_k]=m_k.
\end{equation}
Fix an eigenbasis \(\{\ket{k,a}\}\), where \(a\) labels the multiplicity in the
\(\lambda_k\)-eigenspace, so that
\[
    \Pi_k=\sum_{a=1}^{m_k}\ketbra{k,a}{k,a}.
\]
The following fixed maps are used by the scalar-block routing circuit.
Choose an ancilla
$\cH_A\simeq\CC^{\max_k m_k}$ with orthonormal basis
$\{\ket{a-1}_A:1\le a\le\max_k m_k\}$, initialized in $\ket{0}_A$.
\begin{definition}
The multiplicity labeling map is first defined as an isometry on
\(\ket{0}_A\otimes\cH_S\) by
\begin{equation*}
\ket{0}_A\otimes\ket{k,a}_S
\longmapsto
\ket{a-1}_A\otimes\ket{k,1}_S,
\qquad 1\le a\le m_k,
\end{equation*}
where $m_k$ is the multiplicity of the $k$th eigenvalue.  We denote by \(W\)
any fixed unitary extension of this isometry.
\end{definition}
On the legal encoded subspace,
\[
W^\dagger\big(\ket{a-1}_A\otimes\ket{k,1}_S\big)
=\ket{0}_A\otimes\ket{k,a}_S .
\]
We also define the cyclic shift \(S\) on the representative eigenstates by
\begin{equation}\label{eq:S-on-rep-supp}
S\ket{k,1}=\ket{k\oplus 1,1},
\end{equation}
where \(\oplus\) denotes addition modulo \(K\). We extend it to a unitary on the
full \(\cH_S\), for instance by letting it act as identity on the orthogonal
complement of
$\mathrm{span}_{\CC}\{\ket{k,1}\}_{k=0}^{K-1}$).

\subsection{Proof of the universal \texorpdfstring{$K-1$}{K-1} scalar-block routing}
\label{subsec:proof-main}
\begin{proof}
Let $H=\sum_{j=0}^{K-1}\lambda_j \Pi_j$
be the spectral decomposition of \(H\), where the \(\lambda_j\)'s are the \(K\) distinct eigenvalues, and let
\[
\Pi_j=\sum_{a=1}^{m_j}\ket{j,a}\!\bra{j,a}
\]
where \(\{\ket{j,a}:1\le a\le m_j\}\) is an orthonormal eigenbasis and
$m_j$ is the multiplicity of $\lambda_j$. The evolution satisfies
\[
U(t)=e^{iHt}
\qquad\text{and}\qquad
U(t)\ket{j,a}=e^{i\lambda_j t}\ket{j,a}.
\]
For any input state $\ket{\psi}_S$ of the system $\cH_S$, we can expand it in
this basis as
\[
\ket{\psi}_S=\sum_{j=0}^{K-1}\sum_{a=1}^{m_j}\alpha_{j,a}\ket{j,a}_S.
\]
By the definition of the labeling unitary \(W\), we have
\begin{equation}\label{eq:proof-W-action}
W(\ket{0}_A\otimes\ket{\psi}_S)
=
\sum_{j=0}^{K-1}\sum_{a=1}^{m_j}
\alpha_{j,a}\,\ket{a-1}_A\otimes \ket{j,1}_S.
\end{equation}
Hence, after the encoding, the system register is supported on the representative subspace
\[
\cH_R:=\operatorname{span}_{\CC}\{\ket{j,1}:0\le j\le K-1\}.
\]
Let \(S\) denote the cyclic shift on $\cH_R$, defined in Eq.~\eqref{eq:S-on-rep-supp}.
Consider the \((K-1)\)-query unitary
\begin{equation}\label{eq:proof-coreUt}
\mathcal R_t:=\prod_{r=1}^{K-1}\bigl(S^r U(t) S^{-r}\bigr).
\end{equation}
Products are written in algebraic right-to-left order. In the circuit
diagram the same slot is drawn left-to-right as \(S^{-r}\), then \(U(t)\), then
\(S^r\); the two notations describe the same conjugated query.
Each factor contains exactly one call to \(U(t)\), so \(\mathcal R_t\) uses \(K-1\) queries in total.

We now compute the action of \(\mathcal R_t\) on the basis vectors \(\ket{j,1}\) of \(\cH_R\). For every \(r\in\{1,\dots,K-1\}\),
\[
S^{-r}\ket{j,1}=\ket{j\ominus r,1},
\]
where \(\ominus\) denotes subtraction modulo \(K\). Therefore,
\[
(S^r U(t) S^{-r})\ket{j,1}
=
S^r U(t) \ket{j\ominus r,1}
=
S^r\bigl(e^{i\lambda_{j\ominus r}t}\ket{j\ominus r,1}\bigr)
=
e^{i\lambda_{j\ominus r}t}\ket{j,1}.
\]
Multiplying these phases over \(r=1,\dots,K-1\), we obtain
\begin{equation}\label{eq:proof-phase-product}
\mathcal R_t\ket{j,1}
=
\exp\!\Bigl(it\sum_{r=1}^{K-1}\lambda_{j\ominus r}\Bigr)\ket{j,1}.
\end{equation}
Let
$\beta:=\sum_{\ell=0}^{K-1}\lambda_\ell$. Since
$\{j\ominus r : 1\le r\le K-1\}
=\{0,1,\dots,K-1\}\setminus\{j\}$, it follows that
\[
\sum_{r=1}^{K-1}\lambda_{j\ominus r}
=
\beta-\lambda_j.
\]
Substituting this into \eqref{eq:proof-phase-product} yields
\begin{equation}\label{eq:proof-rep-action}
\mathcal R_t\ket{j,1}
=
e^{it(\beta-\lambda_j)}\ket{j,1}
=
e^{it\beta}e^{-i\lambda_j t}\ket{j,1}.
\end{equation}
Applying \(I_A\otimes \mathcal R_t\) to \eqref{eq:proof-W-action} and using \eqref{eq:proof-rep-action}, we get
\[
(I_A\otimes \mathcal R_t)\,W(\ket{0}_A\otimes\ket{\psi}_S)
=
e^{it\beta}
\sum_{j=0}^{K-1}\sum_{a=1}^{m_j}
\alpha_{j,a}\,\ket{a-1}_A\otimes e^{-i\lambda_j t}\ket{j,1}_S.
\]
Finally, applying \(W^\dagger\) and using the decoding relation
\[
W^\dagger(\ket{a-1}_A\otimes \ket{j,1}_S)
=
\ket{0}_A\otimes \ket{j,a}_S,
\]
we obtain
\[
W^\dagger(I_A\otimes \mathcal R_t)W(\ket{0}_A\otimes\ket{\psi}_S)
=
\ket{0}_A\otimes
\Bigl(
e^{it\beta}
\sum_{j=0}^{K-1}\sum_{a=1}^{m_j}
\alpha_{j,a}e^{-i\lambda_j t}\ket{j,a}_S
\Bigr).
\]
Since \(U(t)^\dagger\ket{j,a}=e^{-i\lambda_j t}\ket{j,a}\), the system output is exactly
\[
e^{it\beta}U(t)^\dagger\ket{\psi}_S.
\]
Thus, for every input state \(\ket{\psi}_S\) and every \(t\),
\[
W^\dagger(I_A\otimes \mathcal R_t)W(\ket{0}_A\otimes\ket{\psi}_S)
=
\ket{0}_A\otimes e^{it\beta}U(t)^\dagger\ket{\psi}_S.
\]
Therefore the protocol deterministically implements
\[
\mathcal C(U(t))=e^{it\beta}U(t)^\dagger
\]
using exactly \(K-1\) queries to \(U(t)\).
\end{proof}

\subsection{Proof of the spectral sumset optimum and exact \texorpdfstring{$q$}{q}-slot construction}
\label{appendix:q-barycentric}

We now restate and prove the fixed-eigenbasis result from the Letter. The proof
first relates the matrix and sumset formulations, then proves the lower bound
for an arbitrary exact \(q\)-query circuit, and finally constructs a circuit
whenever the sumset condition holds.

\subsubsection{Equivalent formulations}

For $q\in\ZZ_{\ge0}$, define the $q$-fold sumset
\begin{equation}
    \Sigma_q(\Lambda)
    :=\left\{\mu_1+\cdots+\mu_q:\mu_r\in\Lambda\right\},
    \qquad
    \Sigma_0(\Lambda):=\{0\},
    \label{eq:supp-sumset-definition}
\end{equation}
where the $q=0$ sum is the empty sum.

\begin{theorem}[Fixed-eigenbasis exact query number]
\label{thm:supp-spectral-seed}
Let
\begin{equation}
    H=\sum_{j=0}^{K-1}\lambda_j\Pi_j,
    \qquad
    U(t)=e^{iHt},
    \label{eq:supp-fixed-basis-family}
\end{equation}
where the nonzero projectors \(\Pi_j\) are fixed, pairwise orthogonal, and sum
to the identity, and the real eigenvalues in
\(\Lambda=\{\lambda_0,\ldots,\lambda_{K-1}\}\) are distinct. The minimum
number of forward calls needed to implement \(U(t)^\dagger\), up to a global
phase, for every unknown \(t\in\RR\) is
\begin{equation}
    \kappa(\Lambda)
    =\min\bigl\{q\in\ZZ_{\ge0}:\exists c\in\RR,\;
    c-\Lambda\subseteq\Sigma_q(\Lambda)\bigr\}.
    \label{eq:supp-fixed-basis-optimum}
\end{equation}
In particular, \(\kappa(\Lambda)\le K-1\), independently of the eigenspace
multiplicities.
\end{theorem}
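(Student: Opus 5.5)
The proof has three parts: achievability (construct a circuit whenever the sumset condition holds), optimality (show every exact \(q\)-query circuit forces the condition), and the bound \(\kappa(\Lambda)\le K-1\).

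\emph{Achievability.} Suppose \(c-\lambda_j\in\Sigma_q(\Lambda)\) for every \(j\). For each \(j\), fix a word \((\mu_{j,1},\ldots,\mu_{j,q})\in\Lambda^q\) with \(\sum_r\mu_{j,r}=c-\lambda_j\).

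1. Apply the labeling unitary \(W\) from the preceding subsection. This stores the multiplicity label in the ancilla and leaves the system in the representative subspace \(\cH_R=\operatorname{span}\{\ket{j,1}\}\).
2. For each slot \(r\), let \(P_r\) be the fixed permutation of \(\cH_R\) that maps \(\ket{j,1}\) to \(\ket{\mu_{j,r},1}\). These maps need not be bijective on \(j\), since two inputs may be routed to the same eigenspace. To keep them reversible, first copy \(j\) into a clean classical-label ancilla; then \(\ket{j}_{L}\ket{j,1}\mapsto\ket{j}_L\ket{\mu_{j,r},1}\) is a controlled isometry.
3. In slot \(r\), apply this controlled routing, one query \(U(t)\), and the inverse routing. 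Since the routing returns the system to \(\ket{j,1}\), each branch \(j\) picks up the phase \(e^{i\mu_{j,r}t}\) in slot \(r\).
4. After \(q\) slots, branch \(j\) carries \(e^{i(c-\lambda_j)t}\). Uncompute the label ancilla using \(\ket{j,1}\), then apply \(W^\dagger\). The output is \(e^{ict}U(t)^\dagger\), and all ancillas are clean.

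This is the same bookkeeping as the \(K-1\) routing proof, with the cyclic shift replaced by label-controlled routing.

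\emph{Optimality.} Take any clean \(q\)-query circuit. Expand each \(Q_t=\sum_k e^{i\lambda_k t}(\Pi_k\otimes I_A)\) and multiply out. Every matrix element \(\bra{\phi}\otimes\bra{0}_A\,(\cdots)\,\ket{\psi}\otimes\ket{0}_A\) is then a finite exponential sum \(\sum_{\nu\in\Sigma_q(\Lambda)}a_\nu e^{i\nu t}\) with \(t\)-independent coefficients \(a_\nu\).

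Choose \(\ket{\psi}=\ket{\phi}\) to be a unit eigenvector in \(\Pi_j\); this is possible because every \(\Pi_j\) is nonzero. Exactness gives
\[
f_j(t):=e^{i\phi(t)}e^{-i\lambda_j t}=\sum_{\nu\in\Sigma_q(\Lambda)}a^{(j)}_\nu e^{i\nu t}.
\]
So \(f_j\) is a finite exponential sum of unit modulus for all \(t\). I claim it has exactly one frequency. Let \(\nu_{\max}\) and \(\nu_{\min}\) be the extreme frequencies with nonzero coefficients. If they differ, then
\[
|f_j|^2=\sum_{\nu,\nu'}a_\nu\bar a_{\nu'}e^{i(\nu-\nu')t},
\]
and the frequency \(\nu_{\max}-\nu_{\min}\) arises only from the pair \((\nu_{\max},\nu_{\min})\), with coefficient \(a_{\nu_{\max}}\bar a_{\nu_{\min}}\neq0\). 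Finite exponential sums with distinct real frequencies are linearly independent, which one can see by averaging \(e^{-i\omega t}(\cdot)\) over \([-T,T]\) and letting \(T\to\infty\). This contradicts \(|f_j|^2\equiv1\).

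Hence \(f_j(t)=a^{(j)}e^{i\nu_j t}\) with \(\nu_j\in\Sigma_q(\Lambda)\), and therefore \(e^{i\phi(t)}=a^{(j)}e^{i(\nu_j+\lambda_j)t}\). The left-hand side does not depend on \(j\), so comparing \(j\) with \(j'\) and using linear independence again gives \(\nu_j+\lambda_j=c\) for a single constant \(c\). Thus \(c-\Lambda\subseteq\Sigma_q(\Lambda)\), so \(q\ge\kappa(\Lambda)\).

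The step I expect to need the most care is this one: \(\phi\) is a priori an arbitrary function, and it is only forced to be linear because the sum \(f_j\) is a finite exponential sum. It is also essential to use a single circuit for all \(t\), so that the coefficients \(a_\nu\) are independent of \(t\).

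\emph{The bound \(\kappa(\Lambda)\le K-1\).} Take \(c=\beta=\sum_\ell\lambda_\ell\). Then \(c-\lambda_j=\sum_{\ell\neq j}\lambda_\ell\in\Sigma_{K-1}(\Lambda)\), which is exactly the universal routing already proved. Multiplicities play no role, because \(W\) removes them before any query is made.
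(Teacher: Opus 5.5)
Your proposal is correct and follows essentially the paper's route. For achievability, both proofs use label-controlled routing of each branch $j$ through the eigenspaces named by a chosen word summing to $c-\lambda_j$; you copy the spectral label and reuse the system register as the query port, whereas the paper moves the label into $L$ and routes a separate reference register $Q$. For necessity, both use the argument that a finite exponential sum of unit modulus has a single mode; you apply it to each $f_j$ and then compare different $j$, while the paper applies it once to the common phase $g$, which is a purely cosmetic difference.
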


We first prove that the matrix synchronization condition
\begin{equation}
\begin{gathered}
    \exists\,c\in\RR,\quad
    \exists\,N\in\ZZ_{\ge0}^{K\times K}:\\
    N\mathbf{1}_K=q\mathbf{1}_K,
    \qquad
    N\bm{\lambda}=c\mathbf{1}_K-\bm{\lambda}
\end{gathered}
\label{eq:q-min-matrix}
\end{equation}
and the sumset condition
\begin{equation}
    \exists\,c\in\RR:
    \qquad
    c-\Lambda\subseteq\Sigma_q(\Lambda)
\label{eq:q-min-sumset}
\end{equation}
are equivalent.

Suppose that
\[
N\mathbf{1}_K=q\mathbf{1}_K,
\qquad
N\bm{\lambda}=c\mathbf{1}_K-\bm{\lambda},
\]
for some $N=(n_{j\ell})\in\mathbb{Z}_{\ge 0}^{K\times K}$. Then, for each
$j$,
\[
c-\lambda_j=\sum_{\ell=0}^{K-1} n_{j\ell}\lambda_\ell \in \Sigma_q(\Lambda),
\]
because the $j$th row sum of $N$ is equal to $q$. Hence
\[
c-\Lambda\subseteq \Sigma_q(\Lambda).
\]

Conversely, suppose that
\[
c-\Lambda\subseteq \Sigma_q(\Lambda).
\]
Then, for each $j$, one may choose a decomposition
\[
c-\lambda_j=\mu_{j,1}+\cdots+\mu_{j,q},
\qquad
\mu_{j,r}\in\Lambda.
\]
Let $n_{j\ell}$ denote the number of occurrences of $\lambda_\ell$ among
$\mu_{j,1},\ldots,\mu_{j,q}$. This yields a nonnegative integer matrix
$N=(n_{j\ell})$ satisfying
\[
N\mathbf{1}_K=q\mathbf{1}_K,
\qquad
N\bm{\lambda}=c\mathbf{1}_K-\bm{\lambda}.
\]
Therefore Eqs.~\eqref{eq:q-min-matrix} and \eqref{eq:q-min-sumset} are
equivalent.

Finally, writing $c=(q+1)m$ gives
\[
N(\bm{\lambda}-m\mathbf{1}_K)=-(\bm{\lambda}-m\mathbf{1}_K),
\]
which is the centered matrix form of the spectral sumset condition.

\subsubsection{Necessary condition}

Consider an arbitrary clean deterministic exact $q$-slot inversion circuit with
ancilla $A$,
\begin{equation}
M_t
=
(\bra{0}_A\otimes I_S)
G_q (I_A\otimes U(t)) G_{q-1}\cdots
(I_A\otimes U(t)) G_0
(\ket{0}_A\otimes I_S),
\label{eq:general-q-slot-circuit}
\end{equation}
where the unitaries $G_0,\ldots,G_q$ are independent of $t$.

Using the spectral decomposition
\[
U(t)=\sum_{j=0}^{K-1} e^{i\lambda_j t}\Pi_j,
\]
each query in Eq.~\eqref{eq:general-q-slot-circuit} contributes one of the
frequencies $\lambda_0,\ldots,\lambda_{K-1}$. Expanding all $q$ queries
therefore gives
\begin{equation}
M_t
=
\sum_{j_1,\ldots,j_q=0}^{K-1}
e^{i(\lambda_{j_1}+\cdots+\lambda_{j_q})t}
A_{j_1,\ldots,j_q}
=
\sum_{\omega\in \Sigma_q(\Lambda)} e^{i\omega t} A_\omega,
\label{eq:q-slot-frequency-expansion}
\end{equation}
for suitable operators $A_{j_1,\ldots,j_q}$ and $A_\omega$.

Now assume that the circuit implements the inverse evolution up to a global
phase, namely,
\[
M_t=g(t)\,U(t)^\dagger,
\qquad
|g(t)|=1,
\]
for all $t\in\mathbb{R}$. Multiplying on the right by $\Pi_j$ yields
\[
M_t\Pi_j=g(t)e^{-i\lambda_j t}\Pi_j.
\]
By Eq.~\eqref{eq:q-slot-frequency-expansion}, each operator-valued function
$M_t\Pi_j$ is a finite linear combination of exponentials whose frequencies
belong to $\Sigma_q(\Lambda)$.  Pick any index $j$ for which $\Pi_j\ne 0$
(every $j$ qualifies since $\Pi_j$ is a non-zero spectral projector).  The
identity $M_t\Pi_j=g(t)e^{-i\lambda_j t}\Pi_j$ then forces
\[
g(t)e^{-i\lambda_j t}\;\in\;\operatorname{span}_{\CC}
\bigl\{e^{i\omega t}:\omega\in\Sigma_q(\Lambda)\bigr\},
\]
since the right-hand operator has the same scalar Fourier content as the
left.  Equivalently, taking a matrix element inside the nonzero
eigenspace \(\Pi_j\) shows directly that the scalar function
\(g(t)e^{-i\lambda_j t}\) has Fourier support contained in
\(\Sigma_q(\Lambda)\). In particular $g$ itself is a \emph{finite} trigonometric
polynomial---a fact we use below---and its frequency support is contained in
$\lambda_j+\Sigma_q(\Lambda)$.  Varying $j$, the support of $g$ must lie in
\[
\bigcap_{j=0}^{K-1}\bigl(\lambda_j+\Sigma_q(\Lambda)\bigr).
\]
The non-vanishing $|g(t)|=1$ rules out the degenerate case $g\equiv 0$.

Write
\[
g(t)=\sum_{r=1}^{R}\alpha_r e^{i\omega_r t},
\qquad
\omega_1<\omega_2<\cdots<\omega_R.
\]
If $R\ge 2$, then
\[
|g(t)|^2
=
\sum_{r,s=1}^{R}
\alpha_r\overline{\alpha_s}\,e^{i(\omega_r-\omega_s)t}.
\]
The largest frequency appearing on the right-hand side is
$\omega_R-\omega_1>0$, and it arises only from the pair $(r,s)=(R,1)$.
Since $|g(t)|^2=1$ contains no nonzero frequency, its coefficient must
vanish:
\[
\alpha_R\overline{\alpha_1}=0,
\]
which is impossible if both coefficients are nonzero. Hence $R=1$, and
there exist real numbers $c$ and $\theta_0$ such that
\[
g(t)=e^{ict+i\theta_0}.
\]
Absorbing the constant phase $e^{i\theta_0}$ into the fixed outer unitaries,
we may take
\[
g(t)=e^{ict}.
\]
Consequently,
\[
c-\lambda_j\in \Sigma_q(\Lambda),
\qquad
j=0,\ldots,K-1.
\]
This proves the necessity of Eq.~\eqref{eq:q-min-sumset}.

\subsubsection{Sufficiency and exact \texorpdfstring{$q$}{q}-slot construction}

Assume now that Eq.~\eqref{eq:q-min-matrix} holds for some integer $q$,
real number $c$, and nonnegative integer matrix
$N=(n_{j\ell})\in\mathbb{Z}_{\ge 0}^{K\times K}$.
For \(q=0\), the condition holds if and only if all eigenvalues coincide.
Because the labels in \(\Lambda\) are distinct, this is exactly the case
\(K=1\). We therefore describe the construction for \(q\ge1\).

Choose an orthonormal eigenbasis of $H$,
\[
H|j,a\rangle=\lambda_j |j,a\rangle,
\qquad
a=1,\ldots,m_j,
\]
where \(m_j=\mathrm{rank}(\Pi_j)\). For each distinct eigenvalue $\lambda_j$,
fix one representative eigenvector $|j,1\rangle$.

For every row $j$ of $N$, choose an ordered $q$-tuple
\[
(\mu_{j,1},\ldots,\mu_{j,q})
\]
such that each $\lambda_\ell$ appears exactly $n_{j\ell}$ times and
\[
\mu_{j,1}+\cdots+\mu_{j,q}=c-\lambda_j.
\]
Let $\ell_r(j)$ be the index determined by
\[
\mu_{j,r}=\lambda_{\ell_r(j)}.
\]

Introduce an ancilla composed of a spectral-label register $L$, a
multiplicity register $M$, and a reusable query register $Q$ isomorphic to
the system register. Let $|\varnothing\rangle_Q$ be a fixed reference state
of $Q$. Since arbitrarily large ancilla are allowed, there exists an
encoding isometry
\[
W:\ |0\rangle_A\otimes |j,a\rangle_S
\longmapsto
|j\rangle_L\otimes |a\rangle_M\otimes |\varnothing\rangle_Q .
\]
We regard $W$ as extended to a fixed unitary on a larger space.

For each slot $r=1,\ldots,q$, define a controlled preparation map on the
encoded subspace by
\[
R_r:\ |j\rangle_L\otimes |\varnothing\rangle_Q
\longmapsto
|j\rangle_L\otimes |\ell_r(j),1\rangle_Q,
\]
and let it act as the identity on the multiplicity register $M$. Because the
states
\[
|j\rangle_L\otimes |\ell_r(j),1\rangle_Q
\]
are orthonormal as $j$ varies, each $R_r$ extends to a full unitary.

The exact $q$-slot inversion circuit is then
\begin{equation}
\mathcal{C}_t
=
W^\dagger
\Bigl(
R_q^\dagger U(t)^{(Q)} R_q
\Bigr)
\cdots
\Bigl(
R_1^\dagger U(t)^{(Q)} R_1
\Bigr)
W,
\label{eq:q-slot-constructive-circuit}
\end{equation}
where $U(t)^{(Q)}$ denotes $U(t)$ applied to the query register
$Q$ and the identity on all other registers.
If the circuit contains several system-isomorphic registers, one forward call
means that fixed swaps move \(Q\simeq\cH_S\) to the input port, apply \(U(t)\)
once, and move \(Q\) back.

Its action on an encoded basis state is immediate:
\[
R_r^\dagger U(t)^{(Q)} R_r
\bigl(
|j\rangle_L|a\rangle_M|\varnothing\rangle_Q
\bigr)
=
e^{i\mu_{j,r}t}
|j\rangle_L|a\rangle_M|\varnothing\rangle_Q.
\]
Therefore,
\[
\Bigl(
R_q^\dagger U(t)^{(Q)} R_q
\Bigr)
\cdots
\Bigl(
R_1^\dagger U(t)^{(Q)} R_1
\Bigr)
|j\rangle_L|a\rangle_M|\varnothing\rangle_Q
=
e^{i(\mu_{j,1}+\cdots+\mu_{j,q})t}
|j\rangle_L|a\rangle_M|\varnothing\rangle_Q.
\]
Using $\mu_{j,1}+\cdots+\mu_{j,q}=c-\lambda_j$, we obtain
\[
\mathcal{C}_t |j,a\rangle_S
=
e^{i(c-\lambda_j)t}|j,a\rangle_S
=
e^{ict}U(t)^\dagger |j,a\rangle_S.
\]
Hence
\[
\mathcal{C}_t=e^{ict}U(t)^\dagger
\]
for all $t\in\mathbb{R}$, and the protocol uses exactly $q$ queries.

This construction also makes clear why the multiplicities do not affect the
minimum query number. The multiplicity label $a$ is stored in the register
$M$ and never undergoes \(U(t)\). All routing choices depend only on
the spectral label $j$, and therefore only the set of distinct eigenvalues
enters the optimization.

\begin{corollary}[Multiparameter fixed-basis sumset optimum]
\label{cor:multiparameter-sumset}
Let
\begin{equation}
  U(x)=\sum_{j=0}^{K-1}e^{i\lambda_j\cdot x}\Pi_j,
  \qquad x\in\RR^p,
  \qquad \Lambda=\{\lambda_0,\ldots,\lambda_{K-1}\}\subset\RR^p,
  \label{eq:multiparameter-fixed-basis}
\end{equation}
where the nonzero projectors $\Pi_j$ are fixed, pairwise orthogonal, and sum to
the identity, and the character vectors $\lambda_j$ are distinct. The minimum
clean deterministic query number is
\begin{equation}
  \kappa(\Lambda)
  =\min\bigl\{q:\exists c\in\RR^p,\;
  c-\Lambda\subseteq\Sigma_q(\Lambda)\bigr\},
  \label{eq:multiparameter-sumset-optimum}
\end{equation}
where $\Sigma_q(\Lambda)$ is the set of sums of $q$ character vectors.
\end{corollary}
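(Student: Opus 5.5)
The proof follows Theorem~\ref{thm:supp-spectral-seed}, with the one-parameter Fourier argument replaced by a multivariate one.

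\emph{Sufficiency.} Suppose $c-\lambda_j=\mu_{j,1}+\cdots+\mu_{j,q}$ with every $\mu_{j,r}\in\Lambda$. We reuse the routing circuit~\eqref{eq:q-slot-constructive-circuit} unchanged, with the scalar $t$ replaced by the vector $x$:
\begin{itemize}
\item the encoding $W$ stores the spectral label $j$ in $L$ and the multiplicity label $a$ in $M$;
\item in slot $r$, the map $R_r$ prepares the representative eigenvector $|\ell_r(j),1\rangle_Q$, on which $U(x)$ acts as the character $e^{i\mu_{j,r}\cdot x}$.
\end{itemize}
The accumulated phase is $e^{i(c-\lambda_j)\cdot x}$. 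Hence $\mathcal C_x=e^{ic\cdot x}U(x)^\dagger$, which is clean and uses exactly $q$ queries.

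\emph{Necessity.} Take an exact clean $q$-slot circuit $M_x=g(x)U(x)^\dagger$ with $|g(x)|=1$.

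\begin{enumerate}
\item Expanding each query spectrally gives
\[
M_x=\sum_{\omega\in\Sigma_q(\Lambda)}e^{i\omega\cdot x}A_\omega ,
\]
with finitely many vector frequencies $\omega\in\RR^p$.

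\item Compressing by $\Pi_j$ and taking a matrix element inside its range shows that $g(x)e^{-i\lambda_j\cdot x}$ is a finite exponential sum. Its frequencies lie in $\Sigma_q(\Lambda)$, so $g$ is a finite exponential polynomial with frequency support $F\subseteq\lambda_j+\Sigma_q(\Lambda)$ for every $j$.

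\item We show $F$ is a single point. Suppose instead $F$ contains two distinct vectors. Choose a direction $u\in\RR^p$ such that the map $\omega\mapsto u\cdot\omega$ is injective on the finite set $F\cup\bigcup_j(\lambda_j+\Sigma_q(\Lambda))$. This is possible because only finitely many hyperplanes $u\cdot(\omega-\omega')=0$ must be avoided.

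\item Restrict to the line $x=su$. The function $h(s)=g(su)$ has distinct real frequencies $u\cdot\omega$, $\omega\in F$, and its coefficients are inherited unchanged, so none cancel. Since $|h(s)|=1$ for all $s\in\RR$, the single-mode argument from the proof of Theorem~\ref{thm:supp-spectral-seed} applies: the extreme frequency difference would appear uniquely in $|h|^2$, which is a contradiction.

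\item Therefore $g(x)=e^{i\theta_0}e^{ic\cdot x}$ for some $c\in\RR^p$. Absorbing $e^{i\theta_0}$ into the fixed gates, we have $c-\lambda_j\in\Sigma_q(\Lambda)$ for every $j$. Because sufficiency holds at the same $q$, the minimum equals~\eqref{eq:multiparameter-sumset-optimum}.
\end{enumerate}

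\emph{Main obstacle.} The delicate point is step~2 together with the genericity in step~3. We must check that a multivariate exponential sum has well-defined, unique coefficients, and that restricting to the line preserves this because $u$ separates all relevant frequencies.

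Uniqueness follows from linear independence of the characters $e^{i\omega\cdot x}$ for distinct $\omega$. To prove that independence, restrict again to a separating line and invoke the one-dimensional fact. This ensures the coefficients of $g$ are read off unambiguously and no cancellation occurs after restriction.
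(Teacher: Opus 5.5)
Your proposal is correct and follows the paper's proof essentially verbatim. Sufficiency reuses the $q$-slot routing circuit with $t$ replaced by $x$. Necessity shows that the residual phase $g$ is a finite multivariate exponential polynomial of unit modulus, then restricts it to a line whose direction separates its frequencies, so the one-variable single-mode argument applies. Your remark on uniqueness of coefficients via linear independence of characters simply makes explicit a step the paper uses implicitly.
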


\begin{proof}
The constructive circuit above is unchanged: if
$c-\lambda_j=\mu_{j,1}+\cdots+\mu_{j,q}$, routing slot $r$ through a
representative state with character $\mu_{j,r}$ accumulates
$e^{i(c-\lambda_j)\cdot x}$ on sector $j$.

For necessity, expanding a $q$-query circuit gives an operator-valued finite
exponential polynomial with frequencies in $\Sigma_q(\Lambda)$. Exact
inversion implies
\[
  M(x)\Pi_j=g(x)e^{-i\lambda_j\cdot x}\Pi_j,
  \qquad |g(x)|=1,
\]
so $g$ is itself a finite exponential polynomial. If its support contained
more than one vector, choose $v\in\RR^p$ that separates all frequencies in
that finite support. Then $g(tv)$ would be a one-variable finite Fourier
series with at least two distinct frequencies and unit modulus, contradicting
the one-variable largest-frequency argument above. Hence
$g(x)=e^{i(c\cdot x+\theta_0)}$ for some $c\in\RR^p$. Removing the fixed phase
$e^{i\theta_0}$ gives
$c-\lambda_j\in\Sigma_q(\Lambda)$ for every $j$, proving
Eq.~\eqref{eq:multiparameter-sumset-optimum}.
\end{proof}

\section{Wedderburn reduction and blockwise compiler}
\label{appendix:structured-blockwise}

\subsection{Circuit convention and Wedderburn decomposition}

Consider the structured family
\begin{equation}
    H(x)=\sum_{i=1}^{p}x_iH_i,
    \qquad x\in\RR^p,
    \qquad U(x)=e^{iH(x)},
    \label{eq:structured-family}
\end{equation}
where \(H_1,\ldots,H_p\) are fixed Hermitian operators on \(\cH_S\).  Let
\begin{equation}
    \mathcal A:=\operatorname{alg}^{*}(I,H_1,\ldots,H_p)
    \subseteq \cB(\cH_S).
    \label{eq:structured-algebra}
\end{equation}
By the finite-dimensional Wedderburn theorem, after a fixed basis change,
\begin{equation}
    \cH_S\cong
    \bigoplus_{\alpha\in\Lambda_{\mathrm{blk}}}
    \bigl(\cM_\alpha\otimes \cV_\alpha\bigr),
    \label{eq:structured-wedderburn}
\end{equation}
where \(\cM_\alpha\) is a multiplicity space and \(\cV_\alpha\) is the active
irreducible block.  In this basis,
\begin{equation}
    H_i\cong
    \bigoplus_{\alpha\in\Lambda_{\mathrm{blk}}}
    \bigl(I_{\cM_\alpha}\otimes H_{i,\alpha}\bigr),
    \qquad
    U(x)\cong
    \bigoplus_{\alpha\in\Lambda_{\mathrm{blk}}}
    \bigl(I_{\cM_\alpha}\otimes U_\alpha(x)\bigr),
    \label{eq:structured-block-family}
\end{equation}
where
\begin{equation}
    H_\alpha(x):=\sum_{i=1}^{p}x_iH_{i,\alpha},
    \qquad
    U_\alpha(x):=e^{iH_\alpha(x)}.
    \label{eq:structured-block-unitaries}
\end{equation}
The reduced system keeps one active copy of every block:
\begin{equation}
    \cH_{\mathrm{red}}
    :=\bigoplus_{\alpha\in\Lambda_{\mathrm{blk}}}\cV_\alpha,
    \qquad
    U_{\mathrm{red}}(x)
    :=\bigoplus_{\alpha\in\Lambda_{\mathrm{blk}}}U_\alpha(x).
    \label{eq:structured-reduced-system}
\end{equation}
We write
\begin{equation}
    d_\alpha:=\dim \cV_\alpha.
    \label{eq:structured-block-dimensions}
\end{equation}

All isometries appearing below are implemented as fixed unitaries after
adding clean ancillary registers of sufficient dimension.  We use the same
symbol for the isometry and for any fixed unitary extension, since the
circuits only visit the legal subspaces on which the isometry is defined.

\begin{definition}
\label{def:structured-sequential-protocol}
Let \(X\) be a finite-dimensional Hilbert space and let
\(V_X(x)\in U(X)\) be a unitary family. An \(n\)-query circuit for
\(V_X\) uses a finite-dimensional ancilla \(A\) initialized in a fixed state
\(\ket{0}_A\), fixed \(x\)-independent unitaries, and \(n\) ordinary forward
calls to \(V_X(x)\). Its full unitary circuit is
\begin{equation}
\begin{aligned}
    \widetilde{\mathcal C}_x
    :=G_n(I_A\otimes V_X(x))G_{n-1}\cdots
    (I_A\otimes V_X(x))G_0,
\end{aligned}
    \label{eq:structured-protocol-output}
\end{equation}
for fixed unitaries \(G_0,\ldots,G_n\). The circuit is clean if there is an
operator \(\mathcal C[V_X](x)\) on \(X\) such that
\begin{equation}
    \widetilde{\mathcal C}_x
    \bigl(\ket{0}_A\otimes\ket{\psi}_X\bigr)
    =\ket{0}_A\otimes\mathcal C[V_X](x)\ket{\psi}_X
    \qquad
    \text{for all }x\text{ and }\ket{\psi}_X.
    \label{eq:structured-clean-output}
\end{equation}
The case \(n=0\) means a fixed unitary circuit with no forward call. The
circuit exactly inverts \(V_X\) if there is a phase function
\(\omega(x)\in U(1)\) such that
\begin{equation}
    \mathcal C[V_X](x)=\omega(x)V_X(x)^\dagger
    \qquad\text{for all }x.
    \label{eq:structured-exact-inversion}
\end{equation}
Thus the same global phase \(\omega(x)\) applies to every input state and every
coherent sector. Fixed \(x\)-independent changes of the work register are
absorbed into the surrounding fixed gates. Only the number of forward calls is
counted. Each call is the ordinary operation \(V_X(x)\) on the designated input
port; only the surrounding fixed gates may be coherently controlled.
If the algorithm uses
several registers isomorphic to \(X\), one query means that a fixed swap
network moves the chosen register to the input port, applies \(V_X\), and
moves it back.
\end{definition}

\begin{lemma}
\label{lem:universal-inverter-character}
Let $X\simeq\CC^d$, and let $\mathcal Q$ be a clean deterministic protocol
with exactly $q$ forward calls to an unknown unitary \(G\in U(d)\).
Suppose that for every \(G\in U(d)\),
\begin{equation}
    \mathcal Q[G]=\omega(G)G^\dagger
\end{equation}
for some phase \(\omega(G)\in U(1)\). Then there exists an integer $s\ge 1$
such that
\begin{equation}
    q+1=ds ,
\end{equation}
and, after multiplying the protocol by a fixed \(G\)-independent phase,
\begin{equation}
    \mathcal Q[G]=(\det G)^s G^\dagger
    \qquad
    \forall\,G\in U(d).
\end{equation}
\end{lemma}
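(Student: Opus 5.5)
The plan is to exploit the polynomial structure of a query circuit in the entries of the unknown unitary, and then classify unit-modulus polynomials on \(U(d)\).

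\textbf{Step 1 (polynomiality).} Expand the clean circuit of Definition~\ref{def:structured-sequential-protocol} with \(V_X=G\). Every matrix element of \(\mathcal Q[G]=(\bra{0}_A\otimes I)\widetilde{\mathcal C}_G(\ket{0}_A\otimes I)\) is a homogeneous polynomial of degree \(q\) in the entries \(G_{ab}\), with no complex conjugates, because each call contributes exactly one linear factor of \(G\). Appending one further forward call, as in the phase-completion argument of Theorem~\ref{thm:algebraic-inversion}, gives
\[
\mathcal Q[G]\,G=\omega(G)\,I .
\]
Hence \(\omega(G)=f(G)\) on \(U(d)\), where \(f\) is a homogeneous polynomial of degree \(q+1\) in the \(d^2\) entries of \(G\). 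Concretely, \(f\) can be taken as any diagonal entry of that product divided appropriately; all diagonal entries agree on \(U(d)\).

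\textbf{Step 2 (extend to \(GL(d,\CC)\) and show nonvanishing).} On \(U(d)\) we have \(\overline{G}=(G^{-1})^{T}\). Let \(\bar f\) denote the polynomial with conjugated coefficients. Then \(|f|=1\) gives
\[
f(G)\,\bar f\bigl((G^{-1})^T\bigr)=1
\qquad\text{for all } G\in U(d).
\]
The left side is a regular function on \(GL(d,\CC)\). Since \(U(d)\) is Zariski-dense in \(GL(d,\CC)\) (it is a real form), the identity extends to all of \(GL(d,\CC)\). Therefore \(f\) has no zeros on \(GL(d,\CC)\), so its zero set in \(\CC^{d\times d}\) lies inside \(\{\det=0\}\).

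\textbf{Step 3 (classification, the main obstacle).} By the Nullstellensatz, every irreducible factor of \(f\) vanishes only on \(\{\det=0\}\). Because \(\det\) is an irreducible polynomial, each such factor is a scalar multiple of \(\det\). Hence
\[
f=c\,(\det)^s
\]
for some \(c\in\CC\) and integer \(s\ge0\). Comparing degrees gives \(ds=q+1\ge1\), so \(s\ge1\). Evaluating at \(G=I\) gives \(|c|=1\), and absorbing the fixed phase \(c\) into the outer gates yields \(\mathcal Q[G]=(\det G)^sG^\dagger\).

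I expect the care to lie in Step 3: making the "zeros only on the determinant hypersurface" argument rigorous, and in Step 2, justifying the Zariski-density extension. If the algebraic-geometry route feels heavy, a fallback is to first restrict to the diagonal torus. There the Corollary~\ref{cor:multiparameter-sumset} largest-frequency argument forces \(f(\mathrm{diag}(e^{i\theta}))=e^{i k\cdot\theta}\) with \(\sum_\ell k_\ell=q+1\). One would then still need to show that \(k\) is constant, \(k=(s,\ldots,s)\), for example by proving that \(\omega\) is a class function. This is the harder route, so I would try the Nullstellensatz argument first.
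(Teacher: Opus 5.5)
Your argument is correct. It is a shortened variant of the paper's proof rather than the same proof.

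The paper first extends $\mathcal Q[Z]Z=f(Z)I$ from $U(d)$ to all of $M_d(\mathbb{C})$ and takes determinants to get $\det Z\mid f(Z)$. It then passes to the adjugate form $\mathcal Q[Z]=h(Z)\operatorname{adj}(Z)$. Only at that point does it apply the conjugation trick $\overline{G}=G^{-T}$, to the cofactor $h$, obtaining $h(Z)h^{\#}(\operatorname{adj}(Z)^T)=(\det Z)^N$ and concluding $h=c(\det Z)^r$ by unique factorization.

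You apply the conjugation trick directly to the scalar $f$, which shows the divisibility and adjugate stages are redundant for the lemma as stated. Unit modulus on $U(d)$ alone forces $f=c(\det)^s$. The bound $q+1\ge d$ and the adjugate form $\mathcal Q[Z]=c(\det Z)^{s-1}\operatorname{adj}(Z)$ then follow afterwards by Zariski density, so nothing is lost.

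The paper's route stays within UFD arithmetic on explicit polynomial identities, and yours can too. Clear denominators in your Step 2 identity using homogeneity: $f(Z)\,\bar f(\operatorname{adj}(Z)^T)=(\det Z)^{q+1}$. This holds on $U(d)$ and hence identically, so every irreducible factor of $f$ divides $(\det Z)^{q+1}$ and is associated to the prime $\det Z$; no Nullstellensatz is needed. If you keep the Nullstellensatz, note where it is actually used. The containment $V(p)\subseteq V(\det)$ is immediate. The Nullstellensatz is what turns that containment into $p\mid\det Z$, using that $(p)$ is prime.

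In either version the only analytic input is Zariski density of $U(d)$. ``It is a real form'' is the right fact, but it needs a proof or citation. The paper's proof of it (analytic continuation of $z\mapsto p(Ge^{izH})$ plus polar decomposition) applies verbatim to your regular function once you multiply it by a power of $\det$.
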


\begin{proof}
Let $Z=(z_{ij})$ be a formal $d\times d$ complex matrix. Since the protocol
has exactly $q$ forward calls and is clean, every matrix entry of
$\mathcal Q[Z]$ is a homogeneous polynomial of degree $q$ in the entries of
$Z$. Define
\begin{equation}
    f(Z):=\frac{1}{d}\operatorname{tr}\bigl(\mathcal Q[Z]Z\bigr).
\end{equation}
Then $f$ is homogeneous of degree $q+1$.

For every unitary \(G\), the exact inversion assumption gives
\begin{equation}
    \mathcal Q[G]G=\omega(G)I_d .
\end{equation}
Therefore \(f(G)=\omega(G)\), and hence
\begin{equation}
    \mathcal Q[G]G=f(G)I_d
    \qquad
    \forall\,G\in U(d).
\end{equation}
We use the standard fact that \(U(d)\) is Zariski dense in \(M_d(\CC)\).
Indeed, suppose that a polynomial \(p\) vanishes on \(U(d)\).  For any unitary
\(G\) and Hermitian \(H\), the function \(z\mapsto p(Ge^{izH})\) is entire and
vanishes for all real \(z\), since \(Ge^{izH}\) is unitary for real \(z\).
Hence it vanishes for all complex \(z\), in particular at \(z=-i\), so
\(p(Ge^H)=0\).  By polar decomposition, every element of \(GL_d(\CC)\) has the
form \(Ge^H\) with \(G\) unitary and \(H\) Hermitian.  Thus \(p\) vanishes on
\(GL_d(\CC)\).  Since \(GL_d(\CC)\) is Zariski open and dense in \(M_d(\CC)\),
\(p\) is identically zero.  Therefore the polynomial identity above extends
from \(U(d)\) to all complex matrices.  Applied to each matrix
entry of \(\mathcal Q[Z]Z-f(Z)I_d\), this yields the polynomial identity
\begin{equation}
    \mathcal Q[Z]Z=f(Z)I_d .
    \label{eq:clean-inverter-polynomial-identity}
\end{equation}

Taking determinants in Eq.~\eqref{eq:clean-inverter-polynomial-identity}
gives
\begin{equation}
    \det\bigl(\mathcal Q[Z]\bigr)\det Z=f(Z)^d .
\end{equation}
The ring $\CC[z_{ij}]$ is a unique factorization domain, and $\det Z$ is
irreducible, hence prime. Thus $\det Z\mid f(Z)^d$, so $\det Z\mid f(Z)$.
Since \(f(G)=\omega(G)\) on \(U(d)\), \(f\) is not the zero polynomial. Hence
there is a homogeneous polynomial $h$ such that
\begin{equation}
    f(Z)=(\det Z)h(Z).
\end{equation}
In particular $q+1\ge d$, and $h$ has degree
\begin{equation}
    N:=q+1-d .
\end{equation}

For invertible $Z$, Eq.~\eqref{eq:clean-inverter-polynomial-identity} gives
\begin{equation}
    \mathcal Q[Z]=f(Z)Z^{-1}=h(Z)\operatorname{adj}(Z).
\end{equation}
Both sides are polynomial in $Z$, so the identity extends from
$GL_d(\CC)$ to all complex matrices:
\begin{equation}
    \mathcal Q[Z]=h(Z)\operatorname{adj}(Z).
    \label{eq:clean-inverter-adjugate-form}
\end{equation}

For unitary \(G\), one has
\begin{equation}
    \operatorname{adj}(G)=(\det G)G^\dagger .
\end{equation}
Combining this with Eq.~\eqref{eq:clean-inverter-adjugate-form} gives
\begin{equation}
    \mathcal Q[G]=h(G)(\det G)G^\dagger .
\end{equation}
Comparing with \(\mathcal Q[G]=\omega(G)G^\dagger\), we obtain
\begin{equation}
    \omega(G)=h(G)\det G .
\end{equation}
Since \(|\omega(G)|=1\) and \(|\det G|=1\), it follows that
\begin{equation}
    |h(G)|=1
    \qquad
    \forall\,G\in U(d).
    \label{eq:h-unit-modulus}
\end{equation}

We now show that \(h\) is a determinant power. Let
\begin{equation}
    h^\#(Z):=\overline{h(\overline Z)}
\end{equation}
be the coefficientwise conjugate polynomial. The polynomial \(h^\#\) is
homogeneous of degree \(N\). For unitary \(G\),
using \(\overline G=G^{-T}\) and \(G^{-T}=(\det G)^{-1}\operatorname{adj}(G)^T\),
we get
\begin{equation}
    \overline{h(G)}
    =
    h^\#(\overline G)
    =
    h^\#(G^{-T})
    =
    (\det G)^{-N}h^\#(\operatorname{adj}(G)^T).
\end{equation}
Together with Eq.~\eqref{eq:h-unit-modulus}, this gives
\begin{equation}
    h(G)h^\#(\operatorname{adj}(G)^T)
    =
    (\det G)^N
    \qquad
    \forall\,G\in U(d).
\end{equation}
By Zariski density, this is a polynomial identity:
\begin{equation}
    h(Z)h^\#(\operatorname{adj}(Z)^T)
    =
    (\det Z)^N .
    \label{eq:h-determinant-power}
\end{equation}

Since $\CC[z_{ij}]$ is a unique factorization domain and $\det Z$ is
irreducible, Eq.~\eqref{eq:h-determinant-power} implies that every
irreducible factor of $h$ is associated to $\det Z$. Hence
\begin{equation}
    h(Z)=c(\det Z)^r
\end{equation}
for some $c\in\CC^\times$ and some integer $r\ge 0$. From
Eq.~\eqref{eq:h-unit-modulus}, we have $|c|=1$. Since $h$ is homogeneous of
degree $N$, we also have
\begin{equation}
    N=rd .
\end{equation}
Substituting into Eq.~\eqref{eq:clean-inverter-adjugate-form}, for every
unitary \(G\),
\begin{equation}
    \mathcal Q[G]
    =
    c(\det G)^r\operatorname{adj}(G)
    =
    c(\det G)^{r+1}G^\dagger .
\end{equation}
Set $s:=r+1$. Then $s\ge 1$ and
\begin{equation}
    q+1=d+N=d(r+1)=ds .
\end{equation}
Finally, multiplying the whole protocol by the fixed phase $c^{-1}$ gives
\begin{equation}
    \mathcal Q[G]=(\det G)^sG^\dagger
    \qquad
    \forall\,G\in U(d).
\end{equation}
\end{proof}

The lemma is a statement about clean universal inverters.  Protocols that
leave an input-dependent catalyst or only implement the inverse channel do not
provide the operator-level phase tracked in the blockwise compiler.
The lemma applies to clean universal inverters on \(U(d)\).
Family-specific local primitives are supplied directly to the blockwise
synchronization theorem below.

The following two fixed wirings are the only places where multiplicities
enter.  Let
\begin{equation}
    \cH_{\mathrm{mult}}
    :=\bigoplus_{\alpha\in\Lambda_{\mathrm{blk}}}\cM_\alpha,
    \label{eq:structured-mult-space}
\end{equation}
and let
\(\iota^S_\alpha:\cM_\alpha\otimes \cV_\alpha\hookrightarrow \cH_S\),
\(\iota^R_\alpha:\cV_\alpha\hookrightarrow\cH_{\mathrm{red}}\), and
\(\iota^M_\alpha:\cM_\alpha\hookrightarrow\cH_{\mathrm{mult}}\) be the canonical
inclusions.  Define the \emph{split wiring}
\begin{equation}
    \Xi:\cH_S\longrightarrow \cH_{\mathrm{mult}}\otimes\cH_{\mathrm{red}}
    \label{eq:structured-Xi-map}
\end{equation}
by
\begin{equation}
    \Xi\,\iota^S_\alpha(m\otimes v)
    :=\iota^M_\alpha(m)\otimes\iota^R_\alpha(v).
    \label{eq:structured-Xi-definition}
\end{equation}
This gate pulls the inactive copy label \(m\) out of the physical system and
leaves the active vector \(v\) in the reduced register.  Conversely, choose
one reference copy in each multiplicity space,
\begin{equation}
    \ket{\mathrm{ref}_\alpha}\in \cM_\alpha,
    \qquad \|\ket{\mathrm{ref}_\alpha}\|=1,
    \label{eq:structured-reference-vectors}
\end{equation}
and define the \emph{load wiring}
\begin{equation}
    \Upsilon:\cH_{\mathrm{red}}\longrightarrow \cH_S,
    \label{eq:structured-Upsilon-map}
\end{equation}
by
\begin{equation}
    \Upsilon\,\iota^R_\alpha(v)
    :=\iota^S_\alpha(\ket{\mathrm{ref}_\alpha}\otimes v).
    \label{eq:structured-Upsilon-definition}
\end{equation}
Thus \(\Upsilon\) loads a reduced vector into a fixed representative copy of
the corresponding full block.

\begin{lemma}
\label{lem:structured-intertwining}
For every \(x\),
\begin{equation}
    \Xi U(x)
    =\bigl(I_{\cH_{\mathrm{mult}}}\otimes U_{\mathrm{red}}(x)\bigr)\Xi,
    \label{eq:structured-Xi-intertwining}
\end{equation}
and
\begin{equation}
    U(x)\Upsilon
    =\Upsilon U_{\mathrm{red}}(x).
    \label{eq:structured-Upsilon-intertwining}
\end{equation}
Equivalently,
\begin{equation}
    \Xi^\dagger
    \bigl(I_{\cH_{\mathrm{mult}}}\otimes U_{\mathrm{red}}(x)\bigr)
    \Xi
    =U(x),
    \label{eq:structured-Xi-compression}
\end{equation}
and
\begin{equation}
    \Upsilon^\dagger U(x)\Upsilon=U_{\mathrm{red}}(x).
    \label{eq:structured-Upsilon-compression}
\end{equation}
Taking adjoints gives the corresponding identities with \(U(x)^\dagger\)
and \(U_{\mathrm{red}}(x)^\dagger\), which are used in the compilers below.
Moreover,
\begin{equation}
\begin{aligned}
    \bigl(I_{\cH_{\mathrm{mult}}}\otimes U_{\mathrm{red}}(x)\bigr)
    \operatorname{Ran}\Xi&\subseteq\operatorname{Ran}\Xi,\\
    U(x)\operatorname{Ran}\Upsilon&\subseteq\operatorname{Ran}\Upsilon.
\end{aligned}
\label{eq:structured-range-invariance}
\end{equation}
Thus loading and unloading never project away any part of a legal state.
\end{lemma}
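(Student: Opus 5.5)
The plan is to verify everything on the block decomposition. Both $\Xi$ and $\Upsilon$ are defined blockwise through the canonical inclusions, and the Wedderburn form \eqref{eq:structured-block-family} says that $U(x)$ acts on each summand $\cM_\alpha\otimes\cV_\alpha$ as $I_{\cM_\alpha}\otimes U_\alpha(x)$. So it suffices to check each identity on elementary tensors $\iota^S_\alpha(m\otimes v)$ for $\Xi$, and on $\iota^R_\alpha(v)$ for $\Upsilon$. Linearity, together with the fact that these vectors span the respective spaces, then gives the operator identities.

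For \eqref{eq:structured-Xi-intertwining}, compute
\[
\Xi U(x)\iota^S_\alpha(m\otimes v)=\Xi\,\iota^S_\alpha\bigl(m\otimes U_\alpha(x)v\bigr)=\iota^M_\alpha(m)\otimes\iota^R_\alpha\bigl(U_\alpha(x)v\bigr).
\]
Since $U_{\mathrm{red}}(x)\iota^R_\alpha=\iota^R_\alpha U_\alpha(x)$ by \eqref{eq:structured-reduced-system}, the right-hand side equals $\bigl(I\otimes U_{\mathrm{red}}(x)\bigr)\Xi\,\iota^S_\alpha(m\otimes v)$.

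For \eqref{eq:structured-Upsilon-intertwining}, compute
\[
U(x)\Upsilon\iota^R_\alpha(v)=\iota^S_\alpha\bigl(\ket{\mathrm{ref}_\alpha}\otimes U_\alpha(x)v\bigr)=\Upsilon\iota^R_\alpha\bigl(U_\alpha(x)v\bigr)=\Upsilon U_{\mathrm{red}}(x)\iota^R_\alpha(v).
\]

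Next, establish that $\Xi$ and $\Upsilon$ are isometries, so that $\Xi^\dagger\Xi=I_{\cH_S}$ and $\Upsilon^\dagger\Upsilon=I_{\cH_{\mathrm{red}}}$. For $\Xi$, the images $\iota^M_\alpha(m)\otimes\iota^R_\alpha(v)$ of orthonormal product bases are orthonormal across different $\alpha$, because the $\iota^M_\alpha$ and the $\iota^R_\alpha$ have mutually orthogonal ranges. For $\Upsilon$, inner products are preserved because $\|\ket{\mathrm{ref}_\alpha}\|=1$ and the blocks are orthogonal.

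The compression identities then follow. Multiplying \eqref{eq:structured-Xi-intertwining} on the left by $\Xi^\dagger$ gives \eqref{eq:structured-Xi-compression}. Multiplying \eqref{eq:structured-Upsilon-intertwining} on the left by $\Upsilon^\dagger$ gives \eqref{eq:structured-Upsilon-compression}. The adjoint versions come from taking adjoints of the intertwinings and using unitarity of $U$ and $U_{\mathrm{red}}$. For example, $\Xi U^\dagger=(I\otimes U_{\mathrm{red}}^\dagger)\Xi$ follows by multiplying \eqref{eq:structured-Xi-intertwining} on the left by $I\otimes U_{\mathrm{red}}^\dagger$ and on the right by $U^\dagger$.

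Range invariance is immediate from the intertwinings. For instance, $(I\otimes U_{\mathrm{red}})\Xi\psi=\Xi U\psi\in\operatorname{Ran}\Xi$, and similarly $U\Upsilon\phi=\Upsilon U_{\mathrm{red}}\phi\in\operatorname{Ran}\Upsilon$. The identical argument applied to the adjoint identities gives invariance under $U^\dagger$ as well.

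The only step needing care is the orthonormality check for $\Xi$, since $\cH_{\mathrm{mult}}\otimes\cH_{\mathrm{red}}$ also contains cross terms $\cM_\alpha\otimes\cV_\beta$ with $\alpha\neq\beta$. One must note that $\Xi$ maps only into the diagonal summands $\cM_\alpha\otimes\cV_\alpha$. Consequently $\Xi$ is an isometry but not a unitary, and $\Xi\Xi^\dagger$ is the projector onto $\operatorname{Ran}\Xi$ rather than the identity. This is exactly why the compression form \eqref{eq:structured-Xi-compression} is stated, rather than a conjugation by a unitary.
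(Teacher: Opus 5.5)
Your proposal is correct and follows the paper's proof. Both verify the two intertwinings blockwise on $\iota^S_\alpha(m\otimes v)$ and $\iota^R_\alpha(v)$, read range invariance off those intertwinings, and obtain the compression identities by multiplying by the adjoint maps. You also make explicit the isometry facts $\Xi^\dagger\Xi=I_{\cH_S}$ and $\Upsilon^\dagger\Upsilon=I_{\cH_{\mathrm{red}}}$, which the paper's last step uses only implicitly (your remark that $\Xi$ is not unitary requires at least two active blocks, but nothing depends on it).
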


\begin{proof}
For \(m\in \cM_\alpha\) and \(v\in \cV_\alpha\), Eq.~\eqref{eq:structured-block-family}
gives
\[
    U(x)\,\iota^S_\alpha(m\otimes v)
    =\iota^S_\alpha(m\otimes U_\alpha(x)v).
\]
Applying \(\Xi\) gives the same vector as first applying \(\Xi\) and then
\(I\otimes U_{\mathrm{red}}(x)\), proving Eq.~\eqref{eq:structured-Xi-intertwining}.
The proof of Eq.~\eqref{eq:structured-Upsilon-intertwining} is identical, with
\(m=\ket{\mathrm{ref}_\alpha}\). The same blockwise calculation proves
Eq.~\eqref{eq:structured-range-invariance}. The compression identities then
follow by multiplying the intertwining identities by the adjoint maps; range
invariance ensures that the legal subspaces are preserved.
\end{proof}

\subsection{Multiplicity decoupling as a circuit compiler}

We now prove that removing multiplicities does not change the query number. We
do this in both directions: first lift a circuit from the reduced system to
the full system, and then restrict a full-system circuit to one representative
copy of each block.

\begin{proposition}
\label{prop:structured-lift-reduced}
Suppose there is an \(n\)-query clean circuit on \(\cH_{\mathrm{red}}\) such that
\begin{equation}
    \mathcal C_{\mathrm{red}}[U_{\mathrm{red}}](x)
    =\omega(x)U_{\mathrm{red}}(x)^\dagger.
    \label{eq:structured-reduced-exact}
\end{equation}
Then there is an \(n\)-query clean circuit on \(\cH_S\) such that
\begin{equation}
    \mathcal C_{\mathrm{full}}[U](x)
    =\omega(x)U(x)^\dagger.
    \label{eq:structured-full-lifted-exact}
\end{equation}
\end{proposition}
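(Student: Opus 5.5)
The plan is to conjugate the reduced circuit by the split wiring \(\Xi\) of Eq.~\eqref{eq:structured-Xi-map}. The inactive copy label then sits in an idle register, and every forward call acts as \(I_{\cH_{\mathrm{mult}}}\otimes U_{\mathrm{red}}(x)\). Concretely, write the given reduced circuit as
\[
\widetilde{\mathcal C}^{\mathrm{red}}_x=G_n(I_A\otimes U_{\mathrm{red}}(x))G_{n-1}\cdots(I_A\otimes U_{\mathrm{red}}(x))G_0
\]
on \(A\otimes\cH_{\mathrm{red}}\). The full circuit first applies \(\Xi\), extended to a fixed unitary with a clean ancilla, which moves \(\cH_S\) into \(\cH_{\mathrm{mult}}\otimes\cH_{\mathrm{red}}\). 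It then runs \(I_{\cH_{\mathrm{mult}}}\otimes\widetilde{\mathcal C}^{\mathrm{red}}_x\) and finally applies \(\Xi^\dagger\).

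The only nontrivial point is that the slots of \(\widetilde{\mathcal C}^{\mathrm{red}}_x\) call \(U_{\mathrm{red}}\), whereas the full protocol may only call \(U(x)\) on a register isomorphic to \(\cH_S\). I will realize each reduced call by one ordinary call to \(U(x)\), using the compression identity Eq.~\eqref{eq:structured-Upsilon-compression}, \(\Upsilon^\dagger U(x)\Upsilon=U_{\mathrm{red}}(x)\), together with the range invariance in Eq.~\eqref{eq:structured-range-invariance}.

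For each slot, the fixed gate \(\Upsilon\), extended to a unitary from \(\cH_{\mathrm{red}}\otimes\ket{0}\) into an \(\cH_S\)-shaped query register, loads the reduced vector into the reference copy. One forward call \(U(x)\) follows, and then \(\Upsilon^\dagger\) unloads it. Range invariance guarantees that the state after the call still lies in \(\operatorname{Ran}\Upsilon\). The unloading therefore returns the padding ancilla exactly to \(\ket{0}\), and the slot acts as exactly \(U_{\mathrm{red}}(x)\) with no leakage. Each slot costs one query, so the total is \(n\).

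Next I compute the action on the legal subspace. For an input \(\ket{0}_A\otimes\ket{\psi}_S\), write \(\Xi\ket{\psi}=\sum_k\ket{m_k}\otimes\ket{v_k}\). The reduced circuit is clean with output \(\omega(x)U_{\mathrm{red}}(x)^\dagger\), so by linearity in the idle multiplicity factor the state becomes \(\ket{0}_A\otimes(I\otimes\omega(x)U_{\mathrm{red}}(x)^\dagger)\Xi\ket{\psi}\). Applying \(\Xi^\dagger\) and taking adjoints in Eq.~\eqref{eq:structured-Xi-compression}, I get \(\Xi^\dagger(I\otimes U_{\mathrm{red}}(x)^\dagger)\Xi=U(x)^\dagger\) on \(\operatorname{Ran}\Xi\). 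The output is therefore \(\ket{0}\otimes\omega(x)U(x)^\dagger\ket{\psi}\).

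Cleanliness needs one more observation. \((I\otimes U_{\mathrm{red}}^\dagger)\) preserves \(\operatorname{Ran}\Xi\), because it is block-diagonal and \(\operatorname{Ran}\Xi=\bigoplus_\alpha\cM_\alpha\otimes\cV_\alpha\) inside \(\cH_{\mathrm{mult}}\otimes\cH_{\mathrm{red}}\). Hence the unitary extension of \(\Xi^\dagger\) maps back into \(\cH_S\otimes\ket{0}\), and all work registers are restored.

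The step I expect to be delicate is the bookkeeping of unitary extensions. I need to check that every intermediate state stays inside the domains where \(\Xi\) and \(\Upsilon\) act as isometries, so that the arbitrary extensions never matter. Two facts handle this: the block-diagonal structure of \(U_{\mathrm{red}}\) with respect to the \(\alpha\) labels, and the fact that the reduced circuit's own gates \(G_k\) act only on \(A\otimes\cH_{\mathrm{red}}\) and never touch \(\cH_{\mathrm{mult}}\). The phase \(\omega(x)\) is inherited unchanged, which gives Eq.~\eqref{eq:structured-full-lifted-exact}.
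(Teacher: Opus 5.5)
Your proposal is correct and follows the paper's own split--load--unload argument. You conjugate the reduced circuit by $\Xi$, realize each reduced call as $\Upsilon^\dagger U(x)\Upsilon$ at the cost of one physical query, and use the compression identity $\Xi^\dagger\bigl(I\otimes U_{\mathrm{red}}(x)^\dagger\bigr)\Xi=U(x)^\dagger$ to obtain $\omega(x)U(x)^\dagger$; your extra range-invariance bookkeeping for the unitary extensions makes explicit what the paper states in one line.
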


\begin{proof}
The reduced circuit is lifted to the physical system by a split--load--unload
wiring. The split map \(\Xi\) separates the passive multiplicity register from
the active reduced register. Whenever the reduced circuit requires a forward
call on the active register, the load isometry \(\Upsilon\) places that state
in a fixed representative physical copy, \(U(x)\) is applied once, and
\(\Upsilon^\dagger\) unloads it:
\[
    R \xrightarrow{\ \Upsilon\ } S
    \xrightarrow{\ U(x)\ } S
    \xrightarrow{\ \Upsilon^\dagger\ } R .
\]
Because \(\Upsilon^\dagger U(x)\Upsilon=U_{\mathrm{red}}(x)\), each reduced
forward call is implemented by exactly one call to the physical operation.
The passive multiplicity register bypasses all forward calls and is recombined with
\(\Xi^\dagger\) at the output. Hence the query number stays \(n\). On the
legal subspace, the implemented operator
is
\begin{equation}
    \mathcal C_{\mathrm{full}}[U](x)
    =\Xi^\dagger
    \bigl(I_{\cH_{\mathrm{mult}}}\otimes
    \mathcal C_{\mathrm{red}}[U_{\mathrm{red}}](x)\bigr)
    \Xi.
    \label{eq:structured-lift-composed}
\end{equation}
Using Eq.~\eqref{eq:structured-reduced-exact} and then
Eq.~\eqref{eq:structured-Xi-compression} gives
\[
    \mathcal C_{\mathrm{full}}[U](x)
    =\omega(x)\Xi^\dagger
    \bigl(I\otimes U_{\mathrm{red}}(x)^\dagger\bigr)\Xi
    =\omega(x)U(x)^\dagger.
\]
The fixed wiring and the reduced circuit return all auxiliary registers to
their initial states.
\end{proof}

\begin{proposition}
\label{prop:structured-reduce-full}
Suppose there is an \(n\)-query clean circuit on \(\cH_S\) such that
\begin{equation}
    \mathcal C_{\mathrm{full}}[U](x)
    =\omega(x)U(x)^\dagger.
    \label{eq:structured-full-exact}
\end{equation}
Then there is an \(n\)-query clean circuit on \(\cH_{\mathrm{red}}\) such that
\begin{equation}
    \mathcal C_{\mathrm{red}}[U_{\mathrm{red}}](x)
    =\omega(x)U_{\mathrm{red}}(x)^\dagger.
    \label{eq:structured-reduced-projected-exact}
\end{equation}
\end{proposition}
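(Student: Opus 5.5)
The plan is to run the given full-system circuit on a \emph{virtual} copy of \(\cH_S\) encoded inside \(\cH_{\mathrm{mult}}\otimes\cH_{\mathrm{red}}\) via the split wiring \(\Xi\) of Eq.~\eqref{eq:structured-Xi-map}. On this encoding, each physical call \(U(x)\) becomes \(I_{\cH_{\mathrm{mult}}}\otimes U_{\mathrm{red}}(x)\), which is a single ordinary call to the reduced family. The multiplicity register \(\cH_{\mathrm{mult}}\) is added as a clean work register. It is initialized by a fixed gate that depends only on the block label of the reduced input, and it is un-initialized at the end. The query count is unchanged because every call of the full circuit is replaced one-for-one.

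\emph{Encoding.} Let \(\mathcal C_{\mathrm{full}}\) have ancilla \(A\) and fixed gates \(G_0,\ldots,G_n\). For a reduced input \(\ket{v}=\sum_\alpha\iota^R_\alpha(v_\alpha)\), add a register \(\cH_{\mathrm{mult}}\) in a fixed state \(\ket{0}\). Then apply a fixed isometry \(P\), controlled by the block decomposition \(\cH_{\mathrm{red}}=\oplus_\alpha\cV_\alpha\), with
\[
P\bigl(\ket{0}\otimes\iota^R_\alpha(v)\bigr)=\iota^M_\alpha(\ket{\mathrm{ref}_\alpha})\otimes\iota^R_\alpha(v).
\]
By Eq.~\eqref{eq:structured-Xi-definition} and Eq.~\eqref{eq:structured-Upsilon-definition}, this satisfies \(P(\ket{0}\otimes\ket{v})=\Xi\Upsilon\ket{v}\). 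Thus the encoded state is exactly the \(\Xi\)-image of the physical state \(\Upsilon\ket{v}\). Next, replace each fixed gate \(G_k\) by a fixed unitary extension \(G_k'\) of \((I_A\otimes\Xi)G_k(I_A\otimes\Xi^\dagger)\) restricted to \(\ket{\cdot}_A\otimes\operatorname{Ran}\Xi\). Replace each query slot by \(I_A\otimes I_{\cH_{\mathrm{mult}}}\otimes U_{\mathrm{red}}(x)\), which is one forward call to \(U_{\mathrm{red}}\) on its input port.

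\emph{Correctness.} By Eq.~\eqref{eq:structured-Xi-intertwining} and the range invariance Eq.~\eqref{eq:structured-range-invariance}, the reduced query acts on \(A\otimes\operatorname{Ran}\Xi\) as \((I_A\otimes\Xi)(I_A\otimes U(x))(I_A\otimes\Xi^\dagger)\). Each \(G_k'\) maps this legal subspace into itself by construction. By induction over the slots, the whole reduced circuit therefore equals \((I_A\otimes\Xi)\,\widetilde{\mathcal C}_x\,(I_A\otimes\Xi^\dagger)\) on the legal subspace, and the unitary extensions are never probed. Applying it to \(\ket{0}_A\otimes\Xi\Upsilon\ket{v}\) and using cleanness together with Eq.~\eqref{eq:structured-full-exact} gives
\[
\ket{0}_A\otimes\Xi\,\omega(x)U(x)^\dagger\Upsilon\ket{v}.
\]
Taking adjoints in Eq.~\eqref{eq:structured-Upsilon-intertwining} yields \(U(x)^\dagger\Upsilon=\Upsilon U_{\mathrm{red}}(x)^\dagger\). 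The output is therefore \(\ket{0}_A\otimes\omega(x)\,\Xi\Upsilon\,U_{\mathrm{red}}(x)^\dagger\ket{v}\). Since \(U_{\mathrm{red}}^\dagger\) is block diagonal, this state still lies in the range of \(P\). Applying \(P^\dagger\) returns \(\cH_{\mathrm{mult}}\) to \(\ket{0}\) and leaves \(\omega(x)U_{\mathrm{red}}(x)^\dagger\ket{v}\), with the same phase \(\omega(x)\). All added registers are restored, so the circuit is clean and uses exactly \(n\) queries.

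\emph{Main obstacle.} The step needing care is the legal-subspace bookkeeping. First, \(\operatorname{Ran}\Xi=\oplus_\alpha\cM_\alpha\otimes\cV_\alpha\) is generally a proper subspace of \(\cH_{\mathrm{mult}}\otimes\cH_{\mathrm{red}}\). One must check that the conjugated gates are well defined there and that the reduced call preserves it; this is the range statement of Lemma~\ref{lem:structured-intertwining}. Second, the initialization \(P\) must be an \(x\)-independent gate; it is, because it is controlled only by the fixed block decomposition of \(\cH_{\mathrm{red}}\). Everything else is the same intertwining calculation already used in Proposition~\ref{prop:structured-lift-reduced}, run in the opposite direction.
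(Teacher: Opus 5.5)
Your proposal is correct and is essentially the paper's reverse compiler: you load the reduced input into a representative copy via \(\Upsilon\) (your \(P\) equals \(\Xi\Upsilon\)), simulate each physical call by the split wiring \(\Xi\), then \(I\otimes U_{\mathrm{red}}(x)\), then \(\Xi^\dagger\), and unload, which gives \(\Upsilon^\dagger\mathcal C_{\mathrm{full}}[U](x)\Upsilon=\omega(x)U_{\mathrm{red}}(x)^\dagger\). The only difference is bookkeeping: you conjugate the fixed gates \(G_k\) by \(\Xi\) and stay in the encoded frame, whereas the paper conjugates each query slot, and these describe the same circuit once the fixed gates are regrouped.
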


\begin{proof}
This is the reverse compiler. The reduced input is loaded into the chosen
representative physical copy by \(\Upsilon\). Whenever the full-system circuit
asks for one forward call, it is implemented through the split wiring
\[
    S \xrightarrow{\ \Xi\ } M R
    \xrightarrow{\ I_M\otimes U_{\mathrm{red}}(x)\ } M R
    \xrightarrow{\ \Xi^\dagger\ } S .
\]
By Eq.~\eqref{eq:structured-Xi-compression}, this module implements one
full-system call using exactly one reduced-system call. The final physical
register is unloaded with \(\Upsilon^\dagger\), and the query number is
unchanged. The
resulting reduced circuit satisfies
\begin{equation}
    \mathcal C_{\mathrm{red}}[U_{\mathrm{red}}](x)
    =\Upsilon^\dagger\mathcal C_{\mathrm{full}}[U](x)\Upsilon.
    \label{eq:structured-reduce-composed}
\end{equation}
Using Eq.~\eqref{eq:structured-full-exact} and then
Eq.~\eqref{eq:structured-Upsilon-compression},
\[
    \mathcal C_{\mathrm{red}}[U_{\mathrm{red}}](x)
    =\omega(x)\Upsilon^\dagger U(x)^\dagger\Upsilon
    =\omega(x)U_{\mathrm{red}}(x)^\dagger.
\]
\end{proof}

\begin{theorem}
\label{thm:structured-multiplicity-decoupling}
For every \(n\ge 0\), clean exact deterministic \(n\)-query inversion exists for
\(U(x)\) on \(\cH_S\) if and only if it exists for \(U_{\mathrm{red}}(x)\)
on \(\cH_{\mathrm{red}}\).  The same global phase function can be used on both
sides.  In particular, removing multiplicities does not change the minimum
query number.
\end{theorem}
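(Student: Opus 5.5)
The theorem is an immediate consequence of the two compilers just established, Propositions~\ref{prop:structured-lift-reduced} and~\ref{prop:structured-reduce-full}. Fix \(n\ge0\). For the direction ``reduced \(\Rightarrow\) full'', take any clean exact \(n\)-query circuit on \(\cH_{\mathrm{red}}\) with output \(\omega(x)U_{\mathrm{red}}(x)^\dagger\). Proposition~\ref{prop:structured-lift-reduced} lifts it, via the split wiring \(\Xi\) and the load wiring \(\Upsilon\), to an \(n\)-query clean circuit on \(\cH_S\) whose output is \(\omega(x)U(x)^\dagger\). The phase function \(\omega\) is the same one. For the converse, take a clean exact \(n\)-query circuit on \(\cH_S\) and apply Proposition~\ref{prop:structured-reduce-full}. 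Each physical call is simulated by \(\Xi^\dagger(I_M\otimes U_{\mathrm{red}}(x))\Xi\), which costs exactly one reduced call. This gives an \(n\)-query clean circuit on \(\cH_{\mathrm{red}}\) with output \(\omega(x)U_{\mathrm{red}}(x)^\dagger\).

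The points I would check explicitly, since the propositions state them only briefly, are the following.

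\emph{Each simulated call uses one query.} The fixed wirings \(\Xi,\Xi^\dagger,\Upsilon,\Upsilon^\dagger\) are \(x\)-independent and can be absorbed into the surrounding fixed gates, so each simulated call consumes exactly one forward call.

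\emph{Cleanliness is preserved.} In the lift, the multiplicity register produced by \(\Xi\) never enters a query and is recombined at the end. In the reduction, the multiplicity register created inside each call module is returned by \(\Xi^\dagger\), because \(I_M\otimes U_{\mathrm{red}}\) preserves \(\operatorname{Ran}\Xi\) by Eq.~\eqref{eq:structured-range-invariance}.

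\emph{Isometries may be extended to unitaries.} The intertwining relations of Lemma~\ref{lem:structured-intertwining} guarantee that the circuit only ever visits the legal subspaces. So replacing each isometry by a unitary extension changes nothing.

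Combining the two directions gives the ``if and only if'', with the same \(\omega\) on both sides. Taking the minimum over \(n\) then shows that removing multiplicities leaves the minimum query number unchanged.

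The only real subtlety is in the reduction direction. A full-system circuit may pass a state spread over several multiplicity copies through the query port. One has to confirm that the module \(\Xi^\dagger(I\otimes U_{\mathrm{red}})\Xi\) equals \(U(x)\) on all of \(\cH_S\), and not merely on \(\operatorname{Ran}\Upsilon\). That is exactly Eq.~\eqref{eq:structured-Xi-compression}, which holds because \(\Xi\) is a unitary identification \(\cH_S\cong\bigoplus_\alpha\cM_\alpha\otimes\cV_\alpha\) onto its range. I would verify this step carefully and treat the rest as bookkeeping.
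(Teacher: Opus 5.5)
Your proposal is correct and follows the paper's proof, which likewise obtains the two implications directly from Propositions~\ref{prop:structured-lift-reduced} and~\ref{prop:structured-reduce-full} with the same phase function $\omega$ on both sides. The paper adds only a one-line remark that for $n=0$ the compilers contain no query slots and reduce to the fixed wirings $\Xi$ and $\Upsilon$, which your uniform treatment of all $n\ge0$ already covers. Your extra checks correspond to what the paper places in Lemma~\ref{lem:structured-intertwining} and in the proofs of the two propositions: one query per simulated call, cleanliness via range invariance, and the compression identity holding on all of $\cH_S$.
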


\begin{proof}
For \(n\ge 1\), the two implications are exactly
Propositions~\ref{prop:structured-lift-reduced} and
\ref{prop:structured-reduce-full}.  For \(n=0\), the same two compilers contain
no query slots and reduce to the fixed wirings \(\Xi\) and
\(\Upsilon\), so the argument is unchanged.
\end{proof}

\subsection{Circuit primitives on the reduced system}

From now on we work only on the reduced system.  The block inclusions satisfy
\begin{equation}
    (\iota^R_\alpha)^\dagger\iota^R_\beta=\delta_{\alpha\beta}I_{\cV_\alpha},
    \qquad
    \sum_{\alpha\in\Lambda_{\mathrm{blk}}}
    \iota^R_\alpha(\iota^R_\alpha)^\dagger
    =I_{\cH_{\mathrm{red}}},
    \qquad
    U_{\mathrm{red}}(x)\iota^R_\alpha
    =\iota^R_\alpha U_\alpha(x).
    \label{eq:structured-reduced-block-relations}
\end{equation}

\begin{lemma}
\label{lem:structured-embed-block}
Let \(P_\alpha\) be a clean \(q\)-query circuit on \(\cV_\alpha\).  Then there is a
clean \(q\)-query circuit \(P_\alpha^\uparrow\) on \(\cH_{\mathrm{red}}\) such that
\begin{equation}
    P_\alpha^\uparrow[U_{\mathrm{red}}](x)\iota^R_\alpha
    =\iota^R_\alpha P_\alpha[U_\alpha](x)
    \label{eq:structured-embedded-block-output}
\end{equation}
for all \(x\).
\end{lemma}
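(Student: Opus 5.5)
The plan is to embed the block circuit by routing. A block vector is loaded into the active register, every forward call of \(P_\alpha\) is simulated by one call to \(U_{\mathrm{red}}\), and the vector is unloaded at the end. Concretely, write \(P_\alpha\) in the form of Definition~\ref{def:structured-sequential-protocol}:
\[
\widetilde P_{\alpha,x}=G_q(I_A\otimes U_\alpha(x))G_{q-1}\cdots(I_A\otimes U_\alpha(x))G_0 ,
\]
acting on \(A\otimes\cV_\alpha\). I would regard \(\cV_\alpha\) as the subspace \(\iota^R_\alpha(\cV_\alpha)\subseteq\cH_{\mathrm{red}}\). The embedded circuit \(P_\alpha^\uparrow\) then uses the ancilla \(A\), the register \(\cH_{\mathrm{red}}\) (which doubles as the query port), and extended gates \(G_k^\uparrow\). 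These are defined by
\[
G_k^\uparrow:=G_k\oplus I
\]
with respect to the splitting \(A\otimes\cH_{\mathrm{red}}=\bigl(A\otimes\iota^R_\alpha\cV_\alpha\bigr)\oplus\bigl(A\otimes(\iota^R_\alpha\cV_\alpha)^\perp\bigr)\). Each \(G_k^\uparrow\) is a fixed \(x\)-independent unitary, because \((\iota^R_\alpha)^\dagger\iota^R_\alpha=I\) makes the first summand an isometric copy of \(A\otimes\cV_\alpha\).

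The key step is the intertwining relation for one query. From Eq.~\eqref{eq:structured-reduced-block-relations}, \(U_{\mathrm{red}}(x)\iota^R_\alpha=\iota^R_\alpha U_\alpha(x)\). Since \(U_{\mathrm{red}}\) is block diagonal, it also leaves the orthogonal complement invariant. Hence \(I_A\otimes U_{\mathrm{red}}(x)\) preserves the subspace \(A\otimes\iota^R_\alpha\cV_\alpha\) and acts there as \(I_A\otimes U_\alpha(x)\). The extended gates \(G_k^\uparrow\) preserve the same subspace and act there as \(G_k\).

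Next, an induction over the \(q\) slots shows that
\[
\widetilde P^\uparrow_{\alpha,x}(I_A\otimes\iota^R_\alpha)=(I_A\otimes\iota^R_\alpha)\widetilde P_{\alpha,x}
\]
holds exactly. Applying this to \(\ket{0}_A\otimes\ket{v}\) and using the cleanness condition~\eqref{eq:structured-clean-output} for \(P_\alpha\) gives
\[
\ket{0}_A\otimes\iota^R_\alpha P_\alpha[U_\alpha](x)v .
\]
This is Eq.~\eqref{eq:structured-embedded-block-output}, and the query count stays \(q\).

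The one subtlety is cleanness of \(P_\alpha^\uparrow\) on all of \(\cH_{\mathrm{red}}\), since Definition~\ref{def:structured-sequential-protocol} demands a clean output for every input, not only block-\(\alpha\) inputs. Inputs in the complement are harmless: there every gate \(G_k^\uparrow\) acts as the identity and every query acts as \(U_{\mathrm{red}}\) restricted to the other blocks. The output is therefore \(\ket{0}_A\otimes\bigl(\bigoplus_{\beta\neq\alpha}U_\beta(x)^q\bigr)w\), which is still clean. By linearity the full output is \(\ket{0}_A\otimes\mathcal C(x)\psi\) with \(\mathcal C(x)=\iota^R_\alpha P_\alpha[U_\alpha](x)(\iota^R_\alpha)^\dagger+\bigoplus_{\beta\ne\alpha}U_\beta(x)^q\). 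This is exactly the clean form required, and it satisfies the stated identity. I expect this bookkeeping on the complement, rather than the induction, to be the only point needing care.
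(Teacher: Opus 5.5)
Your proposal is correct and follows essentially the same route as the paper. Both extend each fixed gate as $G_k$ on the copy of $A\otimes\cV_\alpha$ inside $A\otimes\cH_{\mathrm{red}}$ and as the identity on its orthogonal complement, use the block-diagonal form of $U_{\mathrm{red}}(x)$ to reproduce $P_\alpha$ on block-$\alpha$ inputs, and note that other blocks see only identity gates and block-preserving queries, so cleanness follows by linearity; your explicit induction and the formula for the implemented operator on the complement just make the paper's informal argument precise.
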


\begin{proof}
Replace each \(\cV_\alpha\) register in \(P_\alpha\) by an
\(\cH_{\mathrm{red}}\) register, with \(\cV_\alpha\) identified with the
\(\alpha\) block. On the corresponding subspace, use the fixed gates of
\(P_\alpha\) unchanged, and let them act as the identity on its orthogonal
complement. If the input lies in the \(\alpha\) block,
Eq.~\eqref{eq:structured-reduced-block-relations} shows that every forward call
acts as \(U_\alpha(x)\), so the circuit reproduces \(P_\alpha\).

On any other block, the fixed gates act as the identity, so the \(q\) forward
calls act on the data register alone. Since \(U_{\mathrm{red}}(x)\) preserves
every block, the data remains outside the \(\alpha\) block and all auxiliary
registers are untouched. By linearity, the circuit is clean on all of
\(\cH_{\mathrm{red}}\), including coherent superpositions of different blocks,
and satisfies Eq.~\eqref{eq:structured-embedded-block-output}.
\end{proof}

\begin{definition}[Local block inverter]
\label{def:local-block-inverter}
A local block inverter for block \(\alpha\) is a clean circuit
\(\mathcal P_\alpha\) on \(\cV_\alpha\) using \(\ell_\alpha\) queries to
\(U_\alpha(x)\) and satisfying
\[
  \mathcal P_\alpha[U_\alpha](x)
  =\rho_\alpha(x)U_\alpha(x)^\dagger
\]
for all parameter values \(x\).  The primitive is not required to work for
every element of \(U(\cV_\alpha)\); when it is universal over \(U(\cV_\alpha)\),
Lemma~\ref{lem:universal-inverter-character} forces
\(\ell_\alpha+1=d_\alpha s_\alpha\) and
\(\rho_\alpha(x)=\det(U_\alpha(x))^{s_\alpha}\).
\end{definition}

The next circuit extracts the determinant phase from an active block by using
its one-dimensional top exterior power.

\begin{lemma}
\label{lem:structured-det-gadget}
For every \(\alpha\), there is a clean \(d_\alpha\)-query circuit
\(\mathcal D_\alpha^{(1)}\) on \(\cH_{\mathrm{red}}\) such that
\begin{equation}
    \mathcal D_\alpha^{(1)}[U_{\mathrm{red}}](x)
    =\det U_\alpha(x)\,I_{\cH_{\mathrm{red}}}.
    \label{eq:structured-det-gadget}
\end{equation}
\end{lemma}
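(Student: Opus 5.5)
The plan is to build $\mathcal D_\alpha^{(1)}$ from a clean auxiliary register that carries the one-dimensional top exterior power $\bigwedge^{d_\alpha}\cV_\alpha$. The key fact is the elementary top-exterior identity recalled in the preliminaries: for $X\in U(\cV_\alpha)$, the operator $X^{\otimes d_\alpha}$ acts on $\bigwedge^{d_\alpha}\cV_\alpha$ as the scalar $\det X$. Concretely, introduce $d_\alpha$ auxiliary registers $Q_1,\ldots,Q_{d_\alpha}$, each isomorphic to $\cH_{\mathrm{red}}$. Using a fixed $x$-independent unitary, prepare them in the normalized antisymmetric state
\[
\ket{\Omega_\alpha}:=\frac{1}{\sqrt{d_\alpha!}}\sum_{\sigma\in S_{d_\alpha}}\operatorname{sgn}(\sigma)\,\iota^R_\alpha\ket{e_{\sigma(1)}}\otimes\cdots\otimes\iota^R_\alpha\ket{e_{\sigma(d_\alpha)}},
\]
where $\{\ket{e_k}\}$ is a fixed orthonormal basis of $\cV_\alpha$. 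The system register $\cH_{\mathrm{red}}$ is never touched.

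Next, apply one forward call $U_{\mathrm{red}}(x)$ to each $Q_k$, using the swap convention of Definition~\ref{def:structured-sequential-protocol}. This takes exactly $d_\alpha$ queries. Every $Q_k$ lies in the $\alpha$ block, so Eq.~\eqref{eq:structured-reduced-block-relations} turns each call into $U_\alpha(x)$ on $\cV_\alpha$. The total action on the auxiliary state is therefore $U_\alpha(x)^{\otimes d_\alpha}$ restricted to $(\iota^R_\alpha)^{\otimes d_\alpha}\bigwedge^{d_\alpha}\cV_\alpha$. By the top-exterior identity, this gives $U_\alpha(x)^{\otimes d_\alpha}\ket{\Omega_\alpha}=\det U_\alpha(x)\ket{\Omega_\alpha}$. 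I would verify this by expanding the antisymmetrizer and recognizing the Leibniz formula. Finally, undo the preparation with the inverse fixed unitary, which returns the auxiliary registers exactly to $\ket{0}$.

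The result is that every input $\ket{0}_A\otimes\ket{\psi}$ is mapped to $\ket{0}_A\otimes\det U_\alpha(x)\ket{\psi}$. This is a clean circuit with $\mathcal D_\alpha^{(1)}[U_{\mathrm{red}}](x)=\det U_\alpha(x)\,I_{\cH_{\mathrm{red}}}$, as claimed in Eq.~\eqref{eq:structured-det-gadget}.

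The main point to check is cleanliness. It holds because the auxiliary state is an exact eigenvector of the applied operator for every $x$. The phase is therefore global: it does not depend on the input, and no entanglement with the system is created, even on coherent superpositions across blocks. The case $d_\alpha=1$ is trivial, since a single call on a state in the one-dimensional block already returns $\det U_\alpha=U_\alpha$.
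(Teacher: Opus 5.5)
Your proposal is correct and follows the paper's proof. Both leave the data register untouched, prepare $d_\alpha$ auxiliary copies in the antisymmetric state spanning $\bigwedge^{d_\alpha}\cV_\alpha$, query each copy once so that $U_\alpha(x)^{\otimes d_\alpha}$ acts on that line as $\det U_\alpha(x)$, and undo the preparation; your explicit Leibniz-formula check, the swap-convention remark, and the eigenvector argument for cleanliness add detail the paper leaves implicit.
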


\begin{proof}
Keep the data register untouched and prepare \(d_\alpha\) auxiliary registers
in the normalized antisymmetric state spanning
\(\bigwedge^{d_\alpha}\cV_\alpha\). Applying \(U_{\mathrm{red}}(x)\) once to
each register acts on this one-dimensional space as
\[
    \left.U_\alpha(x)^{\otimes d_\alpha}
    \right|_{\wedge^{d_\alpha}\cV_\alpha}
    =\det U_\alpha(x)\,I.
\]
The inverse preparation resets the auxiliary registers and leaves the scalar
phase \(\det U_\alpha(x)\) on the data register. Thus the circuit realizes
Eq.~\eqref{eq:structured-det-gadget} using \(d_\alpha\) forward calls.
\end{proof}

\begin{corollary}
\label{cor:structured-powered-det-gadget}
For every integer \(m\ge 1\), there is a clean \(md_\alpha\)-query circuit
\(\mathcal D_\alpha^{(m)}\) on \(\cH_{\mathrm{red}}\) such that
\begin{equation}
    \mathcal D_\alpha^{(m)}[U_{\mathrm{red}}](x)
    =\bigl(\det U_\alpha(x)\bigr)^m I_{\cH_{\mathrm{red}}}.
    \label{eq:structured-powered-det-gadget}
\end{equation}
\end{corollary}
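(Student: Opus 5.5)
The plan is to compose $m$ independent copies of the single-power gadget $\mathcal D_\alpha^{(1)}$ from Lemma~\ref{lem:structured-det-gadget} in sequence. Each copy is a clean circuit on $\cH_{\mathrm{red}}$ that uses $d_\alpha$ forward calls. Its action on the data register is the scalar $\det U_\alpha(x)\,I_{\cH_{\mathrm{red}}}$, and all of its auxiliary registers end in their initial states. So I will define
\[
\mathcal D_\alpha^{(m)}:=\mathcal D_\alpha^{(1)}\circ\cdots\circ\mathcal D_\alpha^{(1)}\qquad(m\text{ factors}),
\]
which means running the circuits one after another on the same data register. The query count is then additive and equals $md_\alpha$.

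The key step is to check that the composition is still clean and that the scalars multiply. I will argue inductively on $m$. Assume the first $k$ copies map $\ket{0}_A\otimes\ket{\psi}$ to $\ket{0}_A\otimes(\det U_\alpha(x))^k\ket{\psi}$. The $(k+1)$-st copy then receives a legal input: its ancillas are in the reference state, and the data register holds an arbitrary vector. By the clean-output property of Definition~\ref{def:structured-sequential-protocol}, this copy multiplies the data by one more factor of $\det U_\alpha(x)$ and restores its ancillas. The ancillas of different copies can either be kept disjoint or be reused, because each copy resets its own ancillas exactly. Linearity then extends the identity to coherent superpositions across blocks.

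The only mild subtlety is that each scalar factor should be a genuine global scalar on $\cH_{\mathrm{red}}$, not a phase that depends on the block. Lemma~\ref{lem:structured-det-gadget} already gives this, since it acts as $\det U_\alpha(x)\,I_{\cH_{\mathrm{red}}}$ on the whole reduced space. The product is therefore $(\det U_\alpha(x))^m I_{\cH_{\mathrm{red}}}$, which is Eq.~\eqref{eq:structured-powered-det-gadget}.
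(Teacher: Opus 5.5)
Your proposal is correct and follows the paper's own proof: run the determinant gadget $\mathcal D_\alpha^{(1)}$ independently $m$ times, so the query numbers add to $md_\alpha$ and the scalar phases multiply to $(\det U_\alpha(x))^m$. Your inductive check, that each copy receives a legal input and restores its ancillas, spells out the cleanness that the paper leaves implicit.
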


\begin{proof}
Run the determinant circuit independently \(m\) times. Query numbers add and
scalar phases multiply.
\end{proof}

\subsection{Blockwise inversion as a synchronized branching circuit}

The following determinant construction is the universal-specialized
instance of the heterogeneous compiler.  Its local ingredient is a universal
\(U(d_\alpha)\) inverter on each active block.

We now assemble the full blockwise inversion on $\cH_{\mathrm{red}}$.  Assume
that, for every block $\alpha$, a clean exact \(\ell_\alpha\text{-query}\) universal inverter
\(\mathcal P_\alpha\) on $\cV_\alpha$ has been chosen.  Applying
Lemma~\ref{lem:universal-inverter-character} to each chosen block primitive
gives an integer
\begin{equation}
    s_\alpha=\frac{\ell_\alpha+1}{d_\alpha}
\end{equation}
and, after a fixed phase normalization,
\begin{equation}
    \mathcal P_\alpha[W]=(\det W)^{s_\alpha}W^\dagger
    \qquad\text{for every }W\in U(\cV_\alpha).
\end{equation}
Thus the determinant-synchronization construction may use
\begin{equation}
    \ell_\alpha+1=d_\alpha s_\alpha,
    \label{eq:structured-primitive-count}
\end{equation}
and the clean \(\ell_\alpha\text{-query}\) block primitive
\begin{equation}
    \mathcal P_\alpha[W]=(\det W)^{s_\alpha}W^\dagger .
    \label{eq:structured-block-primitive}
\end{equation}

We denote the determinant common phase by
\begin{equation}
    \zeta_{\mathrm{det}}(x):=
    \prod_{\alpha\in\Lambda_{\mathrm{blk}}}
    \bigl(\det U_\alpha(x)\bigr)^{s_\alpha}.
    \label{eq:structured-branch-phase}
\end{equation}

\begin{proposition}
\label{prop:structured-branch-protocol}
Fix a target block \(\nu\).  There is a clean circuit \(B_\nu\) on
\(\cH_{\mathrm{red}}\), using
\begin{equation}
    q_{\mathrm{det}}
    :=\sum_{\alpha\in\Lambda_{\mathrm{blk}}}(\ell_\alpha+1)-1
    =\sum_{\alpha\in\Lambda_{\mathrm{blk}}}d_\alpha s_\alpha-1
    \label{eq:structured-branch-query-count}
\end{equation}
queries, such that on inputs supported in block \(\nu\),
\begin{equation}
    B_\nu[U_{\mathrm{red}}](x)\iota^R_\nu
    =\zeta_{\mathrm{det}}(x)\iota^R_\nu U_\nu(x)^\dagger.
    \label{eq:structured-branch-output}
\end{equation}
\end{proposition}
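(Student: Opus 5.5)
The construction for \(B_\nu\) is the phase-completion argument sketched in the Letter, now written with the tools above. I will build \(B_\nu\) as a sequential composition of clean sub-circuits on \(\cH_{\mathrm{red}}\). First, apply Lemma~\ref{lem:structured-embed-block} to the block primitive \(\mathcal P_\nu\) of Eq.~\eqref{eq:structured-block-primitive}. This gives an \(\ell_\nu\)-query clean circuit \(\mathcal P_\nu^\uparrow\) with
\[
\mathcal P_\nu^\uparrow[U_{\mathrm{red}}](x)\iota^R_\nu=\iota^R_\nu(\det U_\nu(x))^{s_\nu}U_\nu(x)^\dagger .
\]
Second, for every \(\alpha\neq\nu\), append the powered determinant gadget \(\mathcal D_\alpha^{(s_\alpha)}\) of Corollary~\ref{cor:structured-powered-det-gadget}. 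It uses \(s_\alpha d_\alpha=\ell_\alpha+1\) queries and acts as the scalar \((\det U_\alpha(x))^{s_\alpha}I_{\cH_{\mathrm{red}}}\).

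The query count is then \(\ell_\nu+\sum_{\alpha\neq\nu}(\ell_\alpha+1)=\sum_\alpha(\ell_\alpha+1)-1=q_{\mathrm{det}}\), as required in Eq.~\eqref{eq:structured-branch-query-count}. For the phase, the scalars commute with everything and multiply. On \(\iota^R_\nu\) the composite therefore acts as
\[
(\det U_\nu)^{s_\nu}\prod_{\alpha\neq\nu}(\det U_\alpha)^{s_\alpha}\,\iota^R_\nu U_\nu^\dagger=\zeta_{\mathrm{det}}(x)\iota^R_\nu U_\nu(x)^\dagger,
\]
which is Eq.~\eqref{eq:structured-branch-output}. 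Cleanliness follows because each component is clean: its ancillas are returned to \(\ket{0}\) on every input. A sequential composition of clean circuits, using disjoint or reset ancillas, is therefore clean.

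One point needs checking: the determinant gadgets must use ordinary calls to \(U_{\mathrm{red}}\), not block-restricted calls. In Lemma~\ref{lem:structured-det-gadget}, the auxiliary registers are prepared in \(\bigwedge^{d_\alpha}\cV_\alpha\subset\cH_{\mathrm{red}}^{\otimes d_\alpha}\). By Eq.~\eqref{eq:structured-reduced-block-relations}, \(U_{\mathrm{red}}^{\otimes d_\alpha}\) preserves \((\iota^R_\alpha)^{\otimes d_\alpha}\) and acts there as \(U_\alpha^{\otimes d_\alpha}\), so full calls suffice.

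I expect the main subtlety to be the one just handled: confirming that each \(\mathcal D_\alpha^{(s_\alpha)}\) acts as a pure scalar even though its auxiliary registers are hit by the full reduced call. A related point is that \(B_\nu\) is only required to act correctly on block \(\nu\). Its action on other blocks is left unconstrained here, since the proposition only concerns inputs supported in block \(\nu\). Coherent combination of the \(B_\nu\) across \(\nu\) is presumably handled later by controlling on the block label with fixed gates. The equal query count and common phase \(\zeta_{\mathrm{det}}\) across all \(\nu\) are exactly what that combination needs.
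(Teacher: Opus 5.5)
Your proposal is correct and follows the paper's proof: you embed the normalized universal inverter for block \(\nu\) via Lemma~\ref{lem:structured-embed-block}, append the powered determinant gadgets \(\mathcal D_\alpha^{(s_\alpha)}\) for \(\alpha\neq\nu\), and the query count and phase bookkeeping are the same. Your extra check, that the determinant gadgets need only ordinary calls to \(U_{\mathrm{red}}\), is already part of Lemma~\ref{lem:structured-det-gadget}, so nothing is missing.
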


\begin{proof}
On the branch labelled by \(\nu\), the active data register first passes through
the imported one-block inverter for \(\cV_\nu\), using
Lemma~\ref{lem:structured-embed-block}.  On \(\iota^R_\nu(\cV_\nu)\), this
produces
\[
    \bigl(\det U_\nu(x)\bigr)^{s_\nu}
    \iota^R_\nu U_\nu(x)^\dagger
\]
and uses \(\ell_\nu=d_\nu s_\nu-1\) forward calls. Determinant circuits
for the remaining blocks supply the missing determinant charges:
\(\mathcal D_\alpha^{(s_\alpha)}\) contributes
\((\det U_\alpha(x))^{s_\alpha}\) and uses \(d_\alpha s_\alpha\) traversals
for every \(\alpha\ne\nu\). Since these loops act as scalars on the data
register, their order does not matter. Multiplying all scalar factors gives
\(\zeta_{\mathrm{det}}(x)\), hence Eq.~\eqref{eq:structured-branch-output}.
The query count is
\[
    \ell_\nu+\sum_{\alpha\ne\nu}d_\alpha s_\alpha
    =(d_\nu s_\nu-1)+\sum_{\alpha\ne\nu}d_\alpha s_\alpha
    =\sum_\alpha d_\alpha s_\alpha-1.
\]
\end{proof}

\begin{lemma}
\label{lem:structured-controlled-sum}
Let \(q_{\mathrm{com}}\ge 0\).  Suppose that, for each block \(\nu\), there is a clean
\(q_{\mathrm{com}}\)-query circuit \(C^{(\nu)}\) on \(\cH_{\mathrm{red}}\) and an operator
family \(T_\nu(x)\in\cB(\cV_\nu)\) such that
\begin{equation}
    C^{(\nu)}[U_{\mathrm{red}}](x)\iota^R_\nu
    =\iota^R_\nu T_\nu(x).
    \label{eq:structured-controlled-input}
\end{equation}
Then there is a single clean \(q_{\mathrm{com}}\)-query circuit \(C\) on
\(\cH_{\mathrm{red}}\) satisfying
\begin{equation}
    C[U_{\mathrm{red}}](x)\iota^R_\nu
    =\iota^R_\nu T_\nu(x)
    \qquad\text{for every }\nu.
    \label{eq:structured-controlled-output}
\end{equation}
Equivalently,
\begin{equation}
    C[U_{\mathrm{red}}](x)
    =\bigoplus_{\nu\in\Lambda_{\mathrm{blk}}}T_\nu(x).
    \label{eq:structured-controlled-direct-sum}
\end{equation}
\end{lemma}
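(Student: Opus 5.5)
We build $C$ as a coherently controlled choice among the circuits $C^{(\nu)}$, with the control given by a fixed block-label register. Since every $C^{(\nu)}$ uses the same number $q_{\mathrm{com}}$ of forward calls, we can align their query slots one by one. Write
\[
C^{(\nu)} = G^{(\nu)}_{q_{\mathrm{com}}}\,Q_x\,G^{(\nu)}_{q_{\mathrm{com}}-1}\cdots Q_x\,G^{(\nu)}_0 ,
\]
and pad all ancillas to a common register $A$ initialized in $\ket{0}_A$. Add a label register $L$ spanned by $\{\ket{\nu}_L:\nu\in\Lambda_{\mathrm{blk}}\}$, initialized in a fixed state $\ket{\mathrm{init}}_L$.

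\emph{Encoding.} Let $E$ be the fixed isometry
\[
\iota^R_\nu(v)\otimes\ket{\mathrm{init}}_L\;\longmapsto\;\iota^R_\nu(v)\otimes\ket{\nu}_L .
\]
It is isometric because the blocks are mutually orthogonal, by Eq.~\eqref{eq:structured-reduced-block-relations}. Extend it to a fixed unitary. This writes the block label coherently without disturbing the data.

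\emph{Controlled gates.} Replace each $G_k$ by $\widetilde G_k:=\sum_\nu G^{(\nu)}_k\otimes\ketbra{\nu}{\nu}_L$. Each forward call $Q_x$ stays uncontrolled and acts on the designated data or query port. This respects the convention of Definition~\ref{def:structured-sequential-protocol}: only fixed gates are controlled, and the same $q_{\mathrm{com}}$ slots are used on every branch. The key check is that $L$ is never acted on by a query, and each $\widetilde G_k$ is block diagonal in $L$. So on the branch $\ket{\nu}_L$ the circuit reproduces $C^{(\nu)}$ exactly, and by Eq.~\eqref{eq:structured-controlled-input} it outputs $\iota^R_\nu T_\nu(x)v\otimes\ket{0}_A\otimes\ket{\nu}_L$.

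\emph{Decoding.} Apply $E^\dagger$. The output $\iota^R_\nu T_\nu(x)v$ still lies in block $\nu$, so $E^\dagger$ returns $L$ to $\ket{\mathrm{init}}_L$. The ancilla $A$ has already been restored to $\ket{0}_A$ by the cleanness of $C^{(\nu)}$.

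\emph{Combining branches.} By linearity on $\cH_{\mathrm{red}}=\bigoplus_\nu\iota^R_\nu(\cV_\nu)$, we obtain Eq.~\eqref{eq:structured-controlled-output}, and therefore the direct-sum form Eq.~\eqref{eq:structured-controlled-direct-sum}. The only remaining point is that branches do not interfere. On a superposition across blocks, each branch ends with the ancilla in the same state $\ket{0}_A\otimes\ket{\mathrm{init}}_L$, and the system components lie in orthogonal blocks. So no cross terms or leftover entanglement arise, and the circuit is clean.

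\emph{Main obstacle.} The delicate step is making sure the query slots genuinely coincide across branches. If the circuits used different query counts or routed different registers to the input port, a control-dependent number of calls would be needed, which the model forbids. This is exactly why the hypothesis requires the same $q_{\mathrm{com}}$ for every $\nu$. Differing routing is absorbed into the controlled fixed swaps: each $\widetilde G_k$ swaps the appropriate branch-specific register into the single input port before the call and back afterwards.
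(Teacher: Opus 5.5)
Your proposal is correct and follows essentially the same route as the paper's proof. In both, a fixed isometry writes the block label, each fixed gate is replaced slot by slot with $\sum_\nu \ket{\nu}\!\bra{\nu}_L\otimes G^{(\nu)}_k$, the common forward calls stay uncontrolled (controlled swaps handle differing routing), and the label is erased coherently because each branch output remains in its sector $\nu$.
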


\begin{proof}
Attach a label register
\(L=\operatorname{span}\{\ket{\nu}:\nu\in\Lambda_{\mathrm{blk}}\}\).  First apply
the fixed block-label extraction unitary
\begin{equation}
    \ket{0}_L\otimes\sum_\nu\iota^R_\nu x_\nu
    \longmapsto
    \sum_\nu\ket{\nu}_L\otimes\iota^R_\nu x_\nu.
    \label{eq:structured-controlled-label}
\end{equation}
Because the sectors \(\iota^R_\nu(\cV_\nu)\) are orthogonal, this is an isometry
and hence can be completed to a unitary.

Now run a coherent if--else circuit controlled by \(L\). At query slot
\(r=1,\ldots,q_{\mathrm{com}}\), branch \(\nu\) applies exactly the fixed pre-query and
post-query gates that the circuit \(C^{(\nu)}\) would apply at slot \(r\).  In
formula, each fixed gate has the controlled form
\[
    \sum_\nu \ket{\nu}\!\bra{\nu}_L\otimes G_r^{(\nu)},
\]
after all branch ancilla spaces have been embedded into one sufficiently
large common ancilla. The forward call in the middle is the same single call
to \(U_{\mathrm{red}}(x)\); if different branches want to query different
work registers, controlled swaps move the desired register into the common
input port before the call and move it back afterwards. The branch control
acts only on fixed gates and swap networks; the unknown operation itself is
called once, without control, on the common port.

The label \(\nu\) is never changed, so an input initially in sector \(\nu\)
follows precisely the circuit \(C^{(\nu)}\).  The branch output lies again in
sector \(\nu\) by Eq.~\eqref{eq:structured-controlled-input}, so the inverse of
Eq.~\eqref{eq:structured-controlled-label} erases the label coherently.  This
proves Eq.~\eqref{eq:structured-controlled-output}.
\end{proof}

\begin{theorem}
\label{thm:structured-reduced-construction}
For every block $\alpha$, fix the normalized clean universal inverter
$\mathcal P_\alpha$ specified in Eqs.~\eqref{eq:structured-primitive-count}
and~\eqref{eq:structured-block-primitive}. Then
there is a clean circuit on $\cH_{\mathrm{red}}$ using
\begin{equation}
    q_{\mathrm{det}}
    =\sum_{\alpha\in\Lambda_{\mathrm{blk}}}(\ell_\alpha+1)-1
    =\sum_{\alpha\in\Lambda_{\mathrm{blk}}}d_\alpha s_\alpha-1
    \label{eq:structured-final-query-count}
\end{equation}
queries and implementing
\begin{equation}
    \mathcal C_{\mathrm{red}}^{\mathrm{blk}}[U_{\mathrm{red}}](x)
    =\zeta_{\mathrm{det}}(x)U_{\mathrm{red}}(x)^\dagger,
    \label{eq:structured-final-reduced-output}
\end{equation}
where \(\zeta_{\mathrm{det}}(x)\) is given by Eq.~\eqref{eq:structured-branch-phase}.
\end{theorem}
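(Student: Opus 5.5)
The plan is to assemble the global circuit directly from the pieces already built. For each block $\nu\in\Lambda_{\mathrm{blk}}$, Proposition~\ref{prop:structured-branch-protocol} supplies a clean circuit $B_\nu$ on $\cH_{\mathrm{red}}$. On inputs supported in block $\nu$ it acts as
\[
B_\nu[U_{\mathrm{red}}](x)\iota^R_\nu=\zeta_{\mathrm{det}}(x)\iota^R_\nu U_\nu(x)^\dagger .
\]
The crucial observation is that the two quantities which could differ between branches coincide. First, every $B_\nu$ uses the same number $q_{\mathrm{det}}=\sum_\alpha d_\alpha s_\alpha-1$ of queries. This is because the query budget $(d_\nu s_\nu-1)+\sum_{\alpha\ne\nu}d_\alpha s_\alpha$ is independent of $\nu$. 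Second, the residual phase $\zeta_{\mathrm{det}}(x)=\prod_\alpha(\det U_\alpha(x))^{s_\alpha}$ is the same scalar for every branch. The reason is that the local primitive contributes exactly $(\det U_\nu)^{s_\nu}$, and the determinant gadgets $\mathcal D^{(s_\alpha)}_\alpha$ of Corollary~\ref{cor:structured-powered-det-gadget} contribute the remaining factors.

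Next I apply Lemma~\ref{lem:structured-controlled-sum} with $q_{\mathrm{com}}=q_{\mathrm{det}}$, $C^{(\nu)}=B_\nu$ and $T_\nu(x)=\zeta_{\mathrm{det}}(x)U_\nu(x)^\dagger$. The hypothesis of that lemma is exactly Eq.~\eqref{eq:structured-branch-output}. The lemma therefore produces a single clean $q_{\mathrm{det}}$-query circuit $C$ with
\[
C[U_{\mathrm{red}}](x)=\bigoplus_\nu \zeta_{\mathrm{det}}(x)U_\nu(x)^\dagger=\zeta_{\mathrm{det}}(x)\bigoplus_\nu U_\nu(x)^\dagger=\zeta_{\mathrm{det}}(x)U_{\mathrm{red}}(x)^\dagger .
\]
The last equality uses $U_{\mathrm{red}}=\bigoplus_\alpha U_\alpha$, from Eq.~\eqref{eq:structured-reduced-system}. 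This is Eq.~\eqref{eq:structured-final-reduced-output}.

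The main point needing care is that the controlled-sum construction genuinely requires the branches to match slot by slot. Each branch circuit $B_\nu$ is a sequential composition: an embedded block inverter (Lemma~\ref{lem:structured-embed-block}) followed by determinant gadgets on fresh auxiliary registers. I will write each one in the canonical form of Definition~\ref{def:structured-sequential-protocol}, with exactly $q_{\mathrm{det}}$ slots, and pad all branch ancillas into one common ancilla space. The branch control then acts only on the fixed gates and on the swaps that bring the relevant register (data or a gadget register) to the single input port, so the unknown $U_{\mathrm{red}}(x)$ is called once per slot without control.

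Cleanliness also has to hold across branches. Each gadget returns its auxiliary registers to the reference state, and each embedded inverter is clean, so every branch leaves the ancilla in $\ket{0}$ and the data in sector $\nu$. The label register can then be uncomputed coherently. Linearity over $\bigoplus_\nu\iota^R_\nu(\cV_\nu)$ finishes the argument.
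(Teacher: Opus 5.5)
Your proposal is correct and matches the paper's proof: for each target block you take the branch circuit $B_\nu$ of Proposition~\ref{prop:structured-branch-protocol} and merge the branches with Lemma~\ref{lem:structured-controlled-sum} using $T_\nu(x)=\zeta_{\mathrm{det}}(x)U_\nu(x)^\dagger$. The paper also treats the degenerate case $q_{\mathrm{det}}=0$ separately, where there is a single one-dimensional block and the identity circuit suffices. Your argument covers that case anyway, since the controlled-sum lemma allows $q_{\mathrm{com}}\ge 0$.
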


\begin{proof}
If \(q_{\mathrm{det}}=0\), then
\(\sum_\alpha d_\alpha s_\alpha=1\).  Since each \(d_\alpha,s_\alpha\ge 1\),
there is only one block and \(d_\alpha=s_\alpha=1\).  In that case
\(U_{\mathrm{red}}(x)\) is a scalar unitary and
\(\zeta_{\mathrm{det}}(x)U_{\mathrm{red}}(x)^\dagger=I\), so the identity circuit is the
required zero-query protocol.

Assume now \(q_{\mathrm{det}}\ge 1\).  For each target block \(\nu\),
Proposition~\ref{prop:structured-branch-protocol} gives an
\(q_{\mathrm{det}}\)-query branch circuit with
\[
    B_\nu[U_{\mathrm{red}}](x)\iota^R_\nu
    =\zeta_{\mathrm{det}}(x)\iota^R_\nu U_\nu(x)^\dagger.
\]
Apply Lemma~\ref{lem:structured-controlled-sum} with
\(T_\nu(x)=\zeta_{\mathrm{det}}(x)U_\nu(x)^\dagger\).  The resulting synchronized
branching circuit satisfies, for every \(\nu\),
\[
    \mathcal C_{\mathrm{red}}^{\mathrm{blk}}[U_{\mathrm{red}}](x)\iota^R_\nu
    =\zeta_{\mathrm{det}}(x)\iota^R_\nu U_\nu(x)^\dagger.
\]
Since the block sectors form an orthogonal direct sum of
\(\cH_{\mathrm{red}}\), this is Eq.~\eqref{eq:structured-final-reduced-output}.
\end{proof}

\paragraph*{Relation to the Letter.}
Theorem~\ref{thm:structured-multiplicity-decoupling} proves the Wedderburn-reduction theorem of the Letter, including the zero-query convention. Theorem~\ref{thm:structured-reduced-construction}, followed by Proposition~\ref{prop:structured-lift-reduced}, gives the determinant specialization of its phase-coherent blockwise compiler.

\subsection{Block-phase barycentric synchronization}
\label{appendix:block-phase-barycentric}

The determinant-synchronization circuit above makes every branch carry the
same full determinant product \(\zeta_{\mathrm{det}}(x)\). To recombine the
branches, Lemma~\ref{lem:structured-controlled-sum} requires only two things:
the branches must have the same query length and the same scalar phase. We
record this more general condition before giving
its determinant trace-vector specialization.

\begin{theorem}[Phase-coherent blockwise synchronization]
\label{thm:block-phase-general-sync}
Let
\[
    \cH_{\mathrm{red}}
    =
    \bigoplus_{\alpha\in\Lambda_{\mathrm{blk}}}\cV_\alpha,
    \qquad
    U_{\mathrm{red}}(x)
    =
    \bigoplus_{\alpha\in\Lambda_{\mathrm{blk}}}U_\alpha(x).
\]
For every block \(\alpha\), suppose there is a clean one-block inverter
\(\mathcal P_\alpha\) using \(\ell_\alpha\) forward calls and satisfying
\[
    \mathcal P_\alpha[U_\alpha](x)
    =
    \rho_\alpha(x)U_\alpha(x)^\dagger .
\]
No determinant form is assumed for \(\rho_\alpha\); the theorem takes
\((\ell_\alpha,\rho_\alpha)\) as the supplied local primitive data.
Suppose also that there is a finite set $\Gamma$ of clean scalar gadgets
\(\mathcal S_\gamma\), where \(\gamma\) is a scalar-gadget label, kept
separate from query-slot indices, such
that \(\mathcal S_\gamma\) uses \(w_\gamma\) forward calls on
\(\cH_{\mathrm{red}}\) and implements
\[
    \mathcal S_\gamma[U_{\mathrm{red}}](x)
    =
    \eta_\gamma(x)I_{\cH_{\mathrm{red}}}.
\]
If there exist nonnegative integers \(n_{\alpha\gamma}\), a common query number
\(q_{\mathrm{com}}\), and a common phase \(\zeta(x)\) such that, for every block
\(\alpha\),
\begin{align}
    \ell_\alpha+\sum_{\gamma\in\Gamma}n_{\alpha\gamma}w_\gamma&=q_{\mathrm{com}},
    \label{eq:block-phase-general-cost}\\
    \rho_\alpha(x)
    \prod_{\gamma\in\Gamma}\eta_\gamma(x)^{n_{\alpha\gamma}}
    &=\zeta(x)
    \qquad\text{for all }x,
    \label{eq:block-phase-general-phase}
\end{align}
then there is a clean \(q_{\mathrm{com}}\)-query circuit on \(\cH_{\mathrm{red}}\) that
implements
\[
    \zeta(x)U_{\mathrm{red}}(x)^\dagger .
\]
Consequently, the reduced-to-full compiler gives a clean \(q_{\mathrm{com}}\)-query circuit
on \(\cH_S\) implementing \(\zeta(x)U(x)^\dagger\).
\end{theorem}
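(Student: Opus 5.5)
The plan mirrors the proofs of Proposition~\ref{prop:structured-branch-protocol} and Theorem~\ref{thm:structured-reduced-construction}, with determinant gadgets replaced by the supplied scalar gadgets \(\mathcal S_\gamma\). The structure is: build one branch circuit per target block \(\nu\), check that every branch has the same length \(q_{\mathrm{com}}\) and the same phase \(\zeta(x)\), merge the branches with Lemma~\ref{lem:structured-controlled-sum}, and lift to \(\cH_S\) with Proposition~\ref{prop:structured-lift-reduced}.

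Fix \(\nu\in\Lambda_{\mathrm{blk}}\). The branch circuit \(B_\nu\) on \(\cH_{\mathrm{red}}\) runs in two stages.
\begin{enumerate}
\item It applies the embedded local inverter \(\mathcal P_\nu^\uparrow\) from Lemma~\ref{lem:structured-embed-block}. This uses \(\ell_\nu\) calls and satisfies \(\mathcal P_\nu^\uparrow[U_{\mathrm{red}}](x)\iota^R_\nu=\rho_\nu(x)\iota^R_\nu U_\nu(x)^\dagger\).
\item It then applies \(n_{\nu\gamma}\) independent copies of each scalar gadget \(\mathcal S_\gamma\), each on fresh clean ancillas. Since every gadget is clean and acts as \(\eta_\gamma(x)I_{\cH_{\mathrm{red}}}\), the sequential composition is again clean.
\end{enumerate}
Query counts add and scalar phases multiply, and scalars commute with the block inverter, so the order of the stages is irrelevant. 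Hence \(B_\nu\) uses \(\ell_\nu+\sum_\gamma n_{\nu\gamma}w_\gamma\) calls and acts on block \(\nu\) as
\[
\rho_\nu(x)\prod_\gamma\eta_\gamma(x)^{n_{\nu\gamma}}\,\iota^R_\nu U_\nu(x)^\dagger .
\]
By Eq.~\eqref{eq:block-phase-general-cost} the call count equals \(q_{\mathrm{com}}\), and by Eq.~\eqref{eq:block-phase-general-phase} the prefactor equals \(\zeta(x)\).

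Next I apply Lemma~\ref{lem:structured-controlled-sum} with \(C^{(\nu)}=B_\nu\) and \(T_\nu(x)=\zeta(x)U_\nu(x)^\dagger\). Its two hypotheses are exactly the two synchronization conditions: a common length \(q_{\mathrm{com}}\), so the if--else network has aligned query slots, and a common \(x\)-dependent phase, so that \(\bigoplus_\nu T_\nu(x)=\zeta(x)U_{\mathrm{red}}(x)^\dagger\) holds coherently across blocks. The lemma needs each \(B_\nu\) to map sector \(\nu\) back into sector \(\nu\). This holds because the block inverter is embedded via Lemma~\ref{lem:structured-embed-block} and the gadgets are scalars.

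Two edge cases need separate handling. If \(q_{\mathrm{com}}=0\), then every \(\ell_\alpha=0\) and every used gadget has \(w_\gamma=0\) or \(n_{\alpha\gamma}=0\). Each zero-query block inverter then implements the fixed operator \(\rho_\alpha U_\alpha^\dagger\), and the combined fixed circuit is still assembled by the same controlled-sum argument with empty slot lists. Gadgets with \(w_\gamma=0\) contribute \(x\)-independent phases and are absorbed as fixed gates.

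Finally, Proposition~\ref{prop:structured-lift-reduced} transports the reduced circuit to \(\cH_S\) with the same query number and the same phase function \(\zeta(x)\).

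The step that needs the most care is the cleanliness of the composite under coherent superpositions of blocks. The scalar gadgets must leave their ancillas untouched regardless of which block the data occupies, which holds because they act as \(\eta_\gamma I_{\cH_{\mathrm{red}}}\) on all of \(\cH_{\mathrm{red}}\). Within the branch network, the branch ancillas must be embedded into a common ancilla space so that the label erasure in Lemma~\ref{lem:structured-controlled-sum} is legitimate. I would check this by verifying the output block by block and then extending by linearity.
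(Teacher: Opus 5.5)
Your proposal is correct and matches the paper's proof essentially step for step: you build one branch per target block $\nu$ from the embedded inverter $\mathcal P_\nu^\uparrow$ plus $n_{\nu\gamma}$ copies of each scalar gadget, merge the branches with Lemma~\ref{lem:structured-controlled-sum}, and lift the result with Proposition~\ref{prop:structured-lift-reduced}. Your extra remarks on the case $q_{\mathrm{com}}=0$ and on zero-length gadgets are harmless, since the lemma already allows $q_{\mathrm{com}}\ge 0$; note also that the common phase is not a hypothesis of that lemma but is what makes $\bigoplus_\nu T_\nu(x)$ equal $\zeta(x)U_{\mathrm{red}}(x)^\dagger$.
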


\begin{proof}
Fix a target block \(\nu\).  Import the one-block inverter
\(\mathcal P_\nu\) to the reduced system using
Lemma~\ref{lem:structured-embed-block}.  On inputs supported in
\(\iota^R_\nu(\cV_\nu)\), this gives
\[
    \rho_\nu(x)\iota^R_\nu U_\nu(x)^\dagger
\]
and uses \(\ell_\nu\) queries.  Then run each scalar gadget \(\mathcal S_\gamma\)
exactly \(n_{\nu\gamma}\) times.  The scalar gadgets act as multiples of the
identity on the whole reduced system, so on the \(\nu\)-sector the resulting
branch circuit has query count
\[
    \ell_\nu+\sum_{\gamma\in\Gamma}n_{\nu\gamma}w_\gamma=q_{\mathrm{com}}
\]
and output
\[
    \rho_\nu(x)\prod_{\gamma\in\Gamma}\eta_\gamma(x)^{n_{\nu\gamma}}
    \iota^R_\nu U_\nu(x)^\dagger
    =
    \zeta(x)\iota^R_\nu U_\nu(x)^\dagger .
\]
Thus every block has a clean branch circuit with the same query number \(q_{\mathrm{com}}\)
and sector output
\[
    T_\nu(x)=\zeta(x)U_\nu(x)^\dagger .
\]
Lemma~\ref{lem:structured-controlled-sum} coherently combines these branches
into a single reduced-system circuit implementing
\[
    \bigoplus_{\nu\in\Lambda_{\mathrm{blk}}}
    \zeta(x)U_\nu(x)^\dagger
    =
    \zeta(x)U_{\mathrm{red}}(x)^\dagger .
\]
The final full-system statement follows from
Proposition~\ref{prop:structured-lift-reduced}.
\end{proof}

When the residual phases and scalar gadgets are characters,
\[
  \rho_\alpha(x)=e^{ir_\alpha\cdot x},
  \qquad
  \eta_\gamma(x)=e^{ih_\gamma\cdot x},
\]
the general synchronization conditions become the weighted integer program
\[
\begin{aligned}
  \ell_\alpha+
  \sum_{\gamma\in\Gamma} n_{\alpha\gamma}w_\gamma&=q_{\mathrm{com}},\\
  r_\alpha+
  \sum_{\gamma\in\Gamma} n_{\alpha\gamma}h_\gamma&=r_\star,
\end{aligned}
\]
with common \(q_{\mathrm{com}}\) and \(r_\star\).  The trace-vector theorem in
the main text is the determinant specialization
\(r_\alpha=s_\alpha\tau_\alpha\), \(h_\beta=\tau_\beta\), and
\(w_\beta=d_\beta\).

\begin{lemma}[Closed scalar from a local inverter]
\label{lem:scalar-from-local-inverter}
Let \(\mathcal P_\alpha\) be a local block inverter using
\(\ell_\alpha\) calls and leaving the phase \(\rho_\alpha(x)\). Then the reduced system
admits a clean scalar gadget \(\mathcal S_\alpha\) using \(\ell_\alpha+1\)
queries and implementing
\[
  \mathcal S_\alpha[U_{\mathrm{red}}](x)
  =\rho_\alpha(x)I_{\cH_{\mathrm{red}}}.
\]
\end{lemma}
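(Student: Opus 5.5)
The gadget is a closed loop: run $\mathcal P_\alpha$ on an auxiliary register prepared in a fixed state of block $\alpha$, then call $U_{\mathrm{red}}$ once more on that register. The data register is never touched.

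\emph{Setup.} Fix a unit vector $\ket{e_\alpha}\in\cV_\alpha$. Introduce an auxiliary register $R'\simeq\cH_{\mathrm{red}}$, initialized by a fixed unitary in $\iota^R_\alpha\ket{e_\alpha}$. By Lemma~\ref{lem:structured-embed-block}, $\mathcal P_\alpha$ lifts to a clean $\ell_\alpha$-query circuit $\mathcal P_\alpha^\uparrow$ on $\cH_{\mathrm{red}}$ satisfying $\mathcal P_\alpha^\uparrow[U_{\mathrm{red}}](x)\iota^R_\alpha=\iota^R_\alpha\mathcal P_\alpha[U_\alpha](x)$. We apply $\mathcal P_\alpha^\uparrow$ to $R'$; each of its forward calls is routed to the common input port by the fixed swap convention of Definition~\ref{def:structured-sequential-protocol}. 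After it, we make one further forward call of $U_{\mathrm{red}}(x)$ on $R'$.

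\emph{Computation.} Using $U_{\mathrm{red}}(x)\iota^R_\alpha=\iota^R_\alpha U_\alpha(x)$ from Eq.~\eqref{eq:structured-reduced-block-relations}, the state of $R'$ after these $\ell_\alpha+1$ calls is
\[
\iota^R_\alpha U_\alpha(x)\,\rho_\alpha(x)U_\alpha(x)^\dagger\ket{e_\alpha}=\rho_\alpha(x)\,\iota^R_\alpha\ket{e_\alpha}.
\]
Meanwhile the ancillas of $\mathcal P_\alpha$ are returned to their initial states, by its cleanness. Undoing the fixed preparation of $R'$ therefore restores every auxiliary register to its initial state. The only effect left is the scalar $\rho_\alpha(x)$ multiplying the whole joint state, so for every input $\ket{\psi}$ of the data register the output is $\rho_\alpha(x)\ket{\psi}$. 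This is $\mathcal S_\alpha[U_{\mathrm{red}}](x)=\rho_\alpha(x)I_{\cH_{\mathrm{red}}}$ with exactly $\ell_\alpha+1$ queries. Because no gate acts on the data register, linearity covers coherent superpositions of blocks automatically.

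\emph{Main subtlety.} Cleanness must be checked: the auxiliary register has to come back exactly to its fixed state, not merely up to an $x$-dependent vector. This holds because $\mathcal P_\alpha$ implements $\rho_\alpha U_\alpha^\dagger$ on all of $\cV_\alpha$, so the extra call undoes it exactly, and because $\rho_\alpha(x)$ is a scalar that commutes out to act as a global phase. The degenerate case $\ell_\alpha=0$ needs no special treatment: then $\mathcal P_\alpha$ is a fixed unitary $\rho_\alpha U_\alpha^\dagger$, forcing $U_\alpha$ to be constant up to phase, and the one-query loop still works.
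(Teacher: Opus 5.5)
Your proposal is correct and follows the paper's proof essentially verbatim. You import $\mathcal P_\alpha$ via Lemma~\ref{lem:structured-embed-block}, run it on an auxiliary register prepared inside $\iota^R_\alpha(\cV_\alpha)$, and close the loop with one further call, so that $U_\alpha(x)\mathcal P_\alpha[U_\alpha](x)=\rho_\alpha(x)I_{\cV_\alpha}$ restores the auxiliary state while the bypassed data register carries the scalar $\rho_\alpha(x)$. Your extra remarks on cleanness and on the case $\ell_\alpha=0$ are accurate but not needed, since the general argument already covers both.
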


\begin{proof}
Bypass the data register.  Prepare an auxiliary register in a fixed state
inside the sector \(\iota^R_\alpha(\cV_\alpha)\).  Run the imported circuit
\(\mathcal P_\alpha^\uparrow\) on this auxiliary register and then send the same
auxiliary register once through the reduced operation. On the legal auxiliary
subspace the combined operation is
\[
  U_\alpha(x)\,\mathcal P_\alpha[U_\alpha](x)
  =\rho_\alpha(x)I_{\cV_\alpha}.
\]
Thus the auxiliary state is restored, and the bypassed data register acquires
the scalar phase \(\rho_\alpha(x)\). The circuit uses \(\ell_\alpha+1\) calls, and no
explicit formula for \(\rho_\alpha\) is needed.
\end{proof}

\begin{corollary}[Automatic blockwise completion]
\label{cor:heterogeneous-fallback}
Given local block inverters \(\mathcal P_\alpha\) with query numbers
\(\ell_\alpha\) and phases \(\rho_\alpha(x)\), there is a clean exact
deterministic protocol using
\[
  q_{\mathrm{auto}}=
  \sum_{\alpha\in\Lambda_{\mathrm{blk}}}(\ell_\alpha+1)-1
\]
queries and implementing
\[
  \left(\prod_{\alpha\in\Lambda_{\mathrm{blk}}}\rho_\alpha(x)\right)
  U(x)^\dagger .
\]
\end{corollary}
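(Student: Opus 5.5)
The plan is to obtain Corollary~\ref{cor:heterogeneous-fallback} as a direct specialization of Theorem~\ref{thm:block-phase-general-sync}, using Lemma~\ref{lem:scalar-from-local-inverter} to manufacture the scalar gadgets. First, I would reduce to the reduced system: by Theorem~\ref{thm:structured-multiplicity-decoupling} (concretely, Proposition~\ref{prop:structured-lift-reduced}), a clean circuit on \(\cH_{\mathrm{red}}\) implementing \(\zeta(x)U_{\mathrm{red}}(x)^\dagger\) lifts to a clean circuit on \(\cH_S\) with the same query number and phase. It therefore suffices to work on \(\cH_{\mathrm{red}}\).

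Next, I would take the gadget set \(\Gamma=\Lambda_{\mathrm{blk}}\). For each block \(\beta\), Lemma~\ref{lem:scalar-from-local-inverter} gives a clean scalar gadget \(\mathcal S_\beta\) with \(w_\beta=\ell_\beta+1\) calls and \(\eta_\beta=\rho_\beta\). I would choose the multiplicities \(n_{\alpha\beta}=1-\delta_{\alpha\beta}\), so that branch \(\alpha\) runs its own inverter once and every other block's closed gadget once. The cost condition \eqref{eq:block-phase-general-cost} then reads \(\ell_\alpha+\sum_{\beta\ne\alpha}(\ell_\beta+1)=\sum_\beta(\ell_\beta+1)-1\), independent of \(\alpha\), which gives \(q_{\mathrm{com}}=q_{\mathrm{auto}}\). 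The phase condition \eqref{eq:block-phase-general-phase} reads \(\rho_\alpha\prod_{\beta\ne\alpha}\rho_\beta=\prod_\beta\rho_\beta=:\zeta\), again independent of \(\alpha\). Theorem~\ref{thm:block-phase-general-sync} then yields a clean \(q_{\mathrm{auto}}\)-query circuit implementing \(\zeta(x)U_{\mathrm{red}}(x)^\dagger\), and hence \(\zeta(x)U(x)^\dagger\) on \(\cH_S\).

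The only step needing care is the degenerate case. When \(q_{\mathrm{auto}}=0\), there is a single block with \(\ell=0\). The clean zero-query inverter is then a fixed operator equal to \(\rho(x)U_{\mathrm{red}}(x)^\dagger\), so the statement holds trivially by taking that fixed circuit. Otherwise, the main thing to verify is that Lemma~\ref{lem:scalar-from-local-inverter} genuinely gives a phase identical to the inverter's residual phase \(\rho_\beta\), including any fixed normalization. This holds because the gadget composes \(U_\beta\,\mathcal P_\beta[U_\beta]=\rho_\beta I\) with no rephasing. With that in place, the remaining argument is bookkeeping.
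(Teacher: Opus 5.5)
Your proposal is correct and matches the paper's proof: you apply Theorem~\ref{thm:block-phase-general-sync} with the closed scalar gadgets of Lemma~\ref{lem:scalar-from-local-inverter} and run each other block's gadget once on branch \(\alpha\) (that is, \(n_{\alpha\beta}=1-\delta_{\alpha\beta}\)). You then check the common length \(\sum_\beta(\ell_\beta+1)-1\) and the common phase \(\prod_\beta\rho_\beta\), and lift the result with the reduced-to-full compiler. Your separate treatment of \(q_{\mathrm{auto}}=0\) is harmless but not needed, since Lemma~\ref{lem:structured-controlled-sum} already allows \(q_{\mathrm{com}}\ge 0\).
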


\begin{proof}
Use Theorem~\ref{thm:block-phase-general-sync} with scalar gadgets
\(\mathcal S_\beta\) from Lemma~\ref{lem:scalar-from-local-inverter}.  On branch
\(\alpha\), run the local inverter \(\mathcal P_\alpha\) and run
\(\mathcal S_\beta\) once for every \(\beta\ne\alpha\). The total query number is
\[
  \ell_\alpha+
  \sum_{\beta\ne\alpha}(\ell_\beta+1)
  =\sum_\beta(\ell_\beta+1)-1,
\]
and the branch phase is
\[
  \rho_\alpha(x)\prod_{\beta\ne\alpha}\rho_\beta(x)
  =\prod_\beta\rho_\beta(x).
\]
The reduced-to-full compiler then gives the full-system statement.
\end{proof}

Let $q_{\rm univ}(d)$ denote the query number of the deterministic exact
protocol of Ref.~\smcite{2} for inverting an arbitrary
$d$-dimensional unitary. For this protocol,
$q_{\rm univ}(d)=\mathcal O(d^2)$. Choosing this protocol as the local inverter
gives $\ell_\alpha=q_{\rm univ}(d_\alpha)$. Since every
$d_\alpha\ge1$, Corollary~\ref{cor:heterogeneous-fallback} therefore implies
the general constructive bound
\[
  q_{\rm auto}=\mathcal O\!\left(\sum_{\alpha\in\Lambda_{\rm blk}}
  d_\alpha^2\right).
\]

For each active block, define its determinant character and trace vector by
\begin{equation}
  D_\beta(x):=\det U_\beta(x)=e^{i\tau_\beta\cdot x},
  \qquad
  (\tau_\beta)_j=\operatorname{tr}H_{j,\beta}.
  \label{eq:block-trace-vector-supp}
\end{equation}

\begin{corollary}[Determinant trace-vector synchronization]
\label{cor:determinant-trace-sync}
Fix the normalized universal block primitives of
Eqs.~\eqref{eq:structured-primitive-count}
and~\eqref{eq:structured-block-primitive}. Suppose there exist vectors
\(v^{(\alpha)}\in\ZZ_{\ge0}^{\Lambda_{\mathrm{blk}}}\), an integer \(q\ge0\),
and a vector \(\tau_\star\in\RR^p\) such that, for every \(\alpha\),
\[
\begin{aligned}
  v^{(\alpha)}_\alpha&\ge s_\alpha,\\
  \sum_\beta d_\beta v^{(\alpha)}_\beta&=q+1,\\
  \sum_\beta v^{(\alpha)}_\beta\tau_\beta&=\tau_\star.
\end{aligned}
\]
Then a clean exact deterministic \(q\)-query protocol implements
\(e^{i\tau_\star\cdot x}U(x)^\dagger\).
\end{corollary}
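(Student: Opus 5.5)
The plan is to obtain the corollary as a direct specialization of Theorem~\ref{thm:block-phase-general-sync}. The local block inverters will be the normalized universal primitives $\mathcal P_\alpha$ of Eqs.~\eqref{eq:structured-primitive-count}--\eqref{eq:structured-block-primitive}. Each uses $\ell_\alpha=d_\alpha s_\alpha-1$ calls and leaves the residual phase
\[
\rho_\alpha(x)=D_\alpha(x)^{s_\alpha}=e^{is_\alpha\tau_\alpha\cdot x},
\]
where we use Eq.~\eqref{eq:block-trace-vector-supp} and the identity $\det e^{iH_\alpha(x)}=e^{i\operatorname{tr}H_\alpha(x)}$.

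For the scalar gadgets we take the determinant circuits $\mathcal D^{(1)}_\beta$ of Lemma~\ref{lem:structured-det-gadget}, one for each $\beta\in\Lambda_{\mathrm{blk}}$. The gadget $\mathcal D^{(1)}_\beta$ uses $w_\beta=d_\beta$ calls and implements $\eta_\beta(x)=D_\beta(x)\,I_{\cH_{\mathrm{red}}}$.

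On branch $\alpha$, the multiplicities are chosen as
\[
n_{\alpha\beta}=v^{(\alpha)}_\beta \quad (\beta\neq\alpha),\qquad n_{\alpha\alpha}=v^{(\alpha)}_\alpha-s_\alpha .
\]
The hypothesis $v^{(\alpha)}_\alpha\ge s_\alpha$ is used exactly here, to guarantee $n_{\alpha\alpha}\ge0$. Conceptually, the inverter already carries $s_\alpha$ units of the $\alpha$ determinant charge, and the gadgets supply the remainder of the inventory $v^{(\alpha)}$.

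It remains to check the two synchronization conditions of Theorem~\ref{thm:block-phase-general-sync}.

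\emph{Cost condition~\eqref{eq:block-phase-general-cost}.} On branch $\alpha$ the total query count is
\[
\ell_\alpha+\sum_\beta n_{\alpha\beta}d_\beta=(d_\alpha s_\alpha-1)+\sum_\beta d_\beta v^{(\alpha)}_\beta-d_\alpha s_\alpha=q,
\]
by the second hypothesis. So $q_{\mathrm{com}}=q$ for every branch.

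\emph{Phase condition~\eqref{eq:block-phase-general-phase}.} On branch $\alpha$ the accumulated exponent is
\[
s_\alpha\tau_\alpha+\sum_\beta n_{\alpha\beta}\tau_\beta=\sum_\beta v^{(\alpha)}_\beta\tau_\beta=\tau_\star,
\]
so every branch carries the same phase $\zeta(x)=e^{i\tau_\star\cdot x}$.

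The theorem then yields a clean $q$-query circuit on $\cH_{\mathrm{red}}$ implementing $e^{i\tau_\star\cdot x}U_{\mathrm{red}}(x)^\dagger$. Proposition~\ref{prop:structured-lift-reduced} lifts it to $\cH_S$ with the same phase and the same query count.

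Two edge cases need attention. First, if $q=0$, the second hypothesis forces a single block with $d=1$ and $v=(1)$, hence $s=1$. This degenerate case should be handled as in the proof of Theorem~\ref{thm:structured-reduced-construction}. Second, an $n_{\alpha\beta}=0$ simply omits that gadget.

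The main point to verify carefully is the normalization. Lemma~\ref{lem:universal-inverter-character} fixes $\rho_\alpha=\det^{s_\alpha}$ only after removing a constant phase. Since these constant phases may differ across blocks, they must be absorbed into each $\mathcal P_\alpha$ before branching. Otherwise the branches would differ by $x$-independent but block-dependent phases, and the coherent recombination would fail. Because the corollary explicitly fixes the normalized primitives, this is already covered by the hypotheses.
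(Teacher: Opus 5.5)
Your proposal is correct and is essentially the paper's proof: you use the same local primitives $(\ell_\alpha,\rho_\alpha)=(d_\alpha s_\alpha-1,\,D_\alpha^{s_\alpha})$, the same determinant gadgets $(w_\beta,\eta_\beta)=(d_\beta,D_\beta)$, and the same multiplicities $n_{\alpha\beta}=v^{(\alpha)}_\beta-s_\alpha\delta_{\alpha\beta}$ in Theorem~\ref{thm:block-phase-general-sync}, with identical cost and charge bookkeeping. The separate $q=0$ discussion is unnecessary, because Lemma~\ref{lem:structured-controlled-sum} already allows $q_{\mathrm{com}}=0$; it is also slightly imprecise, since the second hypothesis alone only forces every block to be one-dimensional with $s_\alpha=1$, and a single block follows only after combining $\tau_\alpha=\tau_\star$ with the inequivalence of the Wedderburn blocks.
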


\begin{proof}
Recall from Eq.~\eqref{eq:block-trace-vector-supp} that
\(D_\beta(x)=\exp(i\tau_\beta\cdot x)\).
Use Theorem~\ref{thm:block-phase-general-sync} with the determinant-charge
block primitives
\[
    \ell_\alpha=d_\alpha s_\alpha-1,
    \qquad
    \rho_\alpha(x)=D_\alpha(x)^{s_\alpha},
\]
and with scalar gadgets given by the determinant circuits
\(\mathcal D_\beta^{(1)}\).  For the gadget labelled by \(\beta\),
\[
    w_\beta=d_\beta,
    \qquad
    \eta_\beta(x)=D_\beta(x).
\]
Given the determinant inventory \(v^{(\alpha)}\), set
\[
    n_{\alpha\beta}
    :=
    v^{(\alpha)}_\beta-s_\alpha\delta_{\alpha\beta}.
\]
The lower bound \(v^{(\alpha)}_\alpha\ge s_\alpha\) makes all
\(n_{\alpha\beta}\) nonnegative.  The query count on branch \(\alpha\) is
\[
\begin{aligned}
    \ell_\alpha+\sum_\beta n_{\alpha\beta}d_\beta
    &=
    (d_\alpha s_\alpha-1)
    +
    \sum_\beta
    d_\beta\bigl(v^{(\alpha)}_\beta
    -s_\alpha\delta_{\alpha\beta}\bigr)\\
	    &=
	    \sum_\beta d_\beta v^{(\alpha)}_\beta-1
	    =
	    q .
	\end{aligned}
	\]
The scalar phase on the same branch is
\[
    D_\alpha(x)^{s_\alpha}
    \prod_\beta
    D_\beta(x)^{
    v^{(\alpha)}_\beta-s_\alpha\delta_{\alpha\beta}}
    =
    \prod_\beta D_\beta(x)^{v^{(\alpha)}_\beta}
    =
    \exp\!\left(
    i x\cdot\sum_\beta v^{(\alpha)}_\beta\tau_\beta
    \right)
    =
    e^{i\tau_\star\cdot x}.
\]
Therefore Theorem~\ref{thm:block-phase-general-sync} applies with
\(q_{\mathrm{com}}=q\) and \(\zeta(x)=e^{i\tau_\star\cdot x}\).
\end{proof}

\begin{corollary}
\label{cor:traceless-block-sharing}
Assume that every active block generator is traceless,
\[
    \operatorname{tr}H_{i,\alpha}=0
    \qquad
    \forall\,i,\alpha .
\]
Then \(D_\alpha(x)=1\) for all blocks and all \(x\), so the trace-vector
condition in Corollary~\ref{cor:determinant-trace-sync} is automatic.
For the fixed local primitives above, and allowing padding only by the
determinant gadgets $\mathcal D_\beta^{(1)}$, let the padding semigroup be
\[
    \mathsf P_{\mathrm{blk}}
    :=
    \left\{
	    \sum_{\beta\in\Lambda_{\mathrm{blk}}}c_\beta d_\beta:
	    c_\beta\in\ZZ_{\ge0}
    \right\},
\]
and
\[
    q_\star
    :=
    \min\left\{
    q\ge0:\;
    q+1-d_\alpha s_\alpha\in\mathsf P_{\mathrm{blk}}
    \ \forall\,\alpha
    \right\}.
\]
Then \(q_\star\) queries suffice. Within this determinant-padding construction,
$q_\star$ is the minimum common branch length. In particular, if all active
blocks have the same dimension \(d\), then
\[
    q_\star=d\max_\alpha s_\alpha-1.
\]
For the universal protocol of Ref.~\smcite{2}, equal-dimensional
blocks use the same chosen primitive and hence the same $s_\alpha$. Thus $r$
equal-dimensional, not necessarily equivalent, traceless active blocks share
one block budget $q_{\rm univ}(d)$, instead of the determinant-sum budget
$r\bigl(q_{\rm univ}(d)+1\bigr)-1$.
\end{corollary}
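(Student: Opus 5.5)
The plan is to derive Corollary~\ref{cor:traceless-block-sharing} from Corollary~\ref{cor:determinant-trace-sync}, and then analyse the integer constraint that remains.

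\emph{Trace condition.} Since every $H_{i,\alpha}$ is traceless, every trace vector vanishes, $\tau_\beta=0$. Hence $D_\beta(x)=e^{i\tau_\beta\cdot x}=1$. The third condition of Corollary~\ref{cor:determinant-trace-sync} then holds with $\tau_\star=0$, for every choice of inventory $v^{(\alpha)}$.

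\emph{Sufficiency of $q_\star$.} Only the length conditions remain: $v^{(\alpha)}_\alpha\ge s_\alpha$ and $\sum_\beta d_\beta v^{(\alpha)}_\beta=q+1$. Write $v^{(\alpha)}=s_\alpha e_\alpha+c^{(\alpha)}$ with $c^{(\alpha)}\in\ZZ_{\ge0}^{\Lambda_{\mathrm{blk}}}$. The conditions become exactly $q+1-d_\alpha s_\alpha=\sum_\beta c^{(\alpha)}_\beta d_\beta\in\mathsf P_{\mathrm{blk}}$ for every $\alpha$, which is the defining condition of $q_\star$. So $q_\star$ queries suffice.

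\emph{Minimality within the construction.} Any determinant-padded branch for block $\alpha$ consists of the fixed primitive, which uses $d_\alpha s_\alpha-1$ calls, plus some number of copies of the gadgets $\mathcal D^{(1)}_\beta$, each using $d_\beta$ calls. Its length therefore has the form $d_\alpha s_\alpha-1+p$ with $p\in\mathsf P_{\mathrm{blk}}$. Coherent recombination through Lemma~\ref{lem:structured-controlled-sum} needs a common branch length $q$, so every such $q$ satisfies the constraint defining $q_\star$, and $q_\star$ is minimal.

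\emph{Equal dimensions.} Suppose all $d_\alpha=d$. Then $\mathsf P_{\mathrm{blk}}=d\ZZ_{\ge0}$, and the condition reads $q+1\equiv 0\pmod d$ with $q+1\ge ds_\alpha$ for all $\alpha$. The minimum is $q+1=d\max_\alpha s_\alpha$. For the universal protocol of Ref.~[2], equal dimension means the same chosen primitive, hence a common $s$. Then $q_\star=ds-1=q_{\rm univ}(d)$, using $q_{\rm univ}(d)+1=ds$ from Lemma~\ref{lem:universal-inverter-character}.

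The main point needing care is this last identification. The primitives must be the same protocol, so that the $s_\alpha$ coincide; otherwise the bound only gives $d\max_\alpha s_\alpha-1$. No other step poses an obstacle, since the rest is bookkeeping on top of Corollary~\ref{cor:determinant-trace-sync}.
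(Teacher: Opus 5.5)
Your proposal is correct and follows essentially the same route as the paper. Vanishing trace vectors give $D_\beta\equiv1$, and your shift $v^{(\alpha)}=s_\alpha e_\alpha+c^{(\alpha)}$ is exactly the paper's padding count $n_{\alpha\beta}=v^{(\alpha)}_\beta-s_\alpha\delta_{\alpha\beta}$; this turns Corollary~\ref{cor:determinant-trace-sync} into the semigroup condition defining $q_\star$, with the same within-construction minimality argument and the same equal-dimension reduction to $q+1\in d\ZZ$, $q+1\ge d\max_\alpha s_\alpha$.
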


\begin{proof}
For each block,
\[
    D_\alpha(x)
    =
    \det e^{iH_\alpha(x)}
    =
    \exp\!\bigl(i\,\operatorname{tr}H_\alpha(x)\bigr)
    =
    \exp(i\,\tau_\alpha\cdot x).
\]
If every generator \(H_{i,\alpha}\) is traceless, then
\(\tau_\alpha=0\) and hence \(D_\alpha(x)=1\) for every \(x\).  The trace-vector
condition in Corollary~\ref{cor:determinant-trace-sync} is therefore automatic.

It remains only to make all branch query lengths equal.  Branch \(\alpha\)
already has length \(d_\alpha s_\alpha-1\).  Adding
\(n_{\alpha\beta}\) copies of the determinant gadget for block \(\beta\)
adds \(\sum_\beta n_{\alpha\beta}d_\beta\) queries and no phase.
Thus, within this determinant-padding construction, a common query count \(q\)
is feasible exactly when
\[
    q+1-d_\alpha s_\alpha
    =
    \sum_\beta n_{\alpha\beta}d_\beta
    ,\qquad
    n_{\alpha\beta}\in\ZZ_{\ge0}
\]
for every \(\alpha\).  Minimizing over such \(q\) gives \(q_\star\), and
Corollary~\ref{cor:determinant-trace-sync} gives \(q_\star\) queries.
Writing $g=\gcd\{d_\beta:\beta\in\Lambda_{\mathrm{blk}}\}$, feasibility
requires all initial branch lengths $d_\alpha s_\alpha-1$ to be congruent
modulo $g$. In particular, if every available padding gadget has even length,
all initial branch lengths must have the same parity. The semigroup condition
above is the full criterion; parity alone need not be sufficient.

If all active dimensions are equal to \(d\), then each
\(d_\alpha s_\alpha\) is a multiple of \(d\), and choosing
\[
    q=d\max_\alpha s_\alpha-1
\]
pads every branch by an integer number of \(d\)-query determinant gadgets.
No smaller common \(q\) is possible within this construction because it must be at least
\(\max_\alpha d s_\alpha-1\).  This proves the equal-dimensional claim.  For the
universal protocol of Ref.~\smcite{2}, equal dimensions give the same
chosen primitive and hence a common value of \(s_\alpha\), yielding the
displayed shared budget.
No claim of global query optimality is made by this padding argument.
\end{proof}

\section{Worked constructions}\label{appendix:worked-examples}

\subsection{Three-slot spectral circuit}

Consider the distinct spectrum
\[
\Lambda=\{-5,-4,1,3,5\}.
\]
Here \(K=5\), so the universal \(K-1\) scalar-block routing uses \(4\) queries.
However, the matrix
\begin{equation}\label{eq:example-N}
N=
\begin{pmatrix}
1&0&0&0&2\\
0&1&0&1&1\\
1&0&1&1&0\\
1&0&2&0&0\\
2&0&0&0&1
\end{pmatrix}
\end{equation}
satisfies
\[
N\mathbf{1}_5=3\mathbf{1}_5,
\qquad
N\bm{\lambda}=-\bm{\lambda},
\qquad
\bm{\lambda}=(-5,-4,1,3,5)^T.
\]
Thus the spectrum is \(3\)-barycentrically symmetric around \(m=0\).

One corresponding choice of triples is
\[
\begin{aligned}
-5 &\longmapsto (-5,5,5),\\
-4 &\longmapsto (-4,3,5),\\
1 &\longmapsto (-5,1,3),\\
3 &\longmapsto (-5,1,1),\\
5 &\longmapsto (-5,-5,5).
\end{aligned}
\]
Each input eigenvalue together with its three assigned counterweights has
barycenter \(0\), and the construction above therefore yields an exact
\(3\)-slot inversion circuit.

To see that \(q=2\) is impossible, note that
\[
\Sigma_2(\Lambda)
=
\{-10,-9,-8,-4,-3,-2,-1,0,1,2,4,6,8,10\},
\]
and a direct inspection shows that
\[
\bigcap_{\lambda\in\Lambda}\bigl(\lambda+\Sigma_2(\Lambda)\bigr)=\varnothing.
\]
Hence no real number \(c\) satisfies \(c-\Lambda\subseteq \Sigma_2(\Lambda)\).
The case \(q=0\) is impossible because \(K>1\).  For \(q=1\), the sumset
condition would require \(c-\Lambda=\Lambda\).  The extreme pair gives
\(c=(-5)+5=0\), whereas the next pair gives \(c=(-4)+3=-1\), so no common
\(c\) exists.  Thus \(q=0,1,2\) are all excluded, and the minimum query number
for this spectrum is \(3\).
The speedup over the universal \(4\)-slot construction is thus witnessed by a
row-sum-\(3\) scalar-block synchronization matrix instead of the universal
construction \(J-I_5\) of row sum \(4\).

\begin{figure*}[t]
\centering
\includegraphics[width=0.8\linewidth]{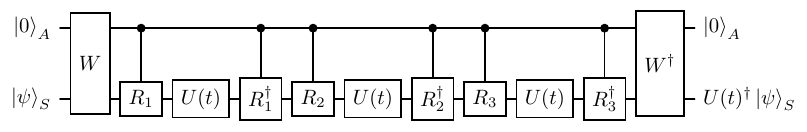}
\caption{Three-query scalar-block routing circuit realising the
matrix witness~\eqref{eq:example-N} for $\Lambda=\{-5,-4,1,3,5\}$, where
the optimum \(3\) improves on the universal bound \(K-1=4\).  The encoder
$W$ writes the eigenvalue label of the input onto the ancilla and compresses
the system register onto the representative subspace.  Each slot then routes
the query register, conditioned on that label, into the partner eigenspace
$\mu_{j,s}$ selected at slot $s$ of the ordered tuple (gate $R_s$), applies
$U(t)$ once, and reverses the routing ($R_s^\dagger$). Every input
eigenvalue has barycenter zero, so the three acquired phases sum to
$-\lambda_j t$; after the decoder $W^\dagger$ the system carries
$U(t)^\dagger\ket{\psi}_S$ and the ancilla returns to $\ket{0}_A$. The diagram
is schematic: $A$ collects the spectral-label, multiplicity, and work
registers used in the explicit construction.}
\label{fig:three-slot-spectral}
\end{figure*}

\section{Application scaling certificates}
\label{appendix:application-scalings}

This section proves the query bounds quoted in the application section of the
Letter. For the finite-$n$ values in Fig.~2, we use
$q_{\rm univ}(1)=0$ and
$q_{\rm univ}(d)=d\lceil\pi/[2\arcsin(1/d)]\rceil-1$ for $d\ge2$;
hence $q_{\rm univ}(d)=\mathcal O(d^2)$~\smcite{2}. The
arbitrary-$X$ curves are exact
finite-$n$ evaluations of constructive upper bounds, while the two fixed-basis
curves use the optima proved below. Fixed basis changes, encoders, and routing
gates are not charged.

\subsection{Multimode Tavis--Cummings dynamics}
\label{appendix:TC-scaling}

Consider $n$ identical two-level emitters and $M$ resolved bosonic modes, with
\begin{equation}
 H_{\rm TC}
 =\omega_0J_z+\sum_{\mu=1}^{M}\omega_\mu a_\mu^\dagger a_\mu
 +\sum_{\mu=1}^{M}
 \left(g_\mu J_+a_\mu+g_\mu^*J_-a_\mu^\dagger\right).
 \label{eq:supp-TC-H}
\end{equation}
The unknown scalar parameters are $\omega_0,\omega_\mu\in\RR$ and $g_\mu\in\mathbb C$.
Writing $g_\mu=u_\mu+iv_\mu$ makes the known Hermitian generator pairs
explicit:
\begin{equation}
 g_\mu J_+a_\mu+g_\mu^*J_-a_\mu^\dagger
 =u_\mu\left(J_+a_\mu+J_-a_\mu^\dagger\right)
 +iv_\mu\left(J_+a_\mu-J_-a_\mu^\dagger\right).
 \label{eq:supp-TC-real-generators}
\end{equation}
Thus the family fits the real-coupling model of the Letter. It is invariant
under emitter permutations and conserves
\begin{equation}
 \mathcal N=J_z+\frac n2 I+\sum_{\mu=1}^{M}a_\mu^\dagger a_\mu.
 \label{eq:supp-TC-number}
\end{equation}
Indeed, $[\mathcal N,J_+a_\mu]=J_+a_\mu-J_+a_\mu=0$, and the adjoint term
commutes for the same reason.

\begin{proposition}[Tavis--Cummings query separation]
\label{prop:supp-TC-scaling}
Fix $M$ and an excitation cap $N$. For $n\ge2N$, the restriction of
Eq.~\eqref{eq:supp-TC-H} to the spectral subspace
$\mathcal H_{\le N}:=\mathbf 1_{\{\mathcal N\le N\}}\mathcal H$
admits a clean exact inverse obtained from coarse invariant sectors. Here the
restricted joint emitter--mode evolution is the forward operation supplied to
the protocol; fixed Schur transforms and routing gates are not counted. The
query number satisfies
\begin{equation}
 q_{\rm TC}
 \le
 \sum_{j=0}^{N}
 \left[q_{\rm univ}(d_j)+1\right]-1,
 \qquad
 d_j=\binom{M+N-j+1}{N-j}.
 \label{eq:supp-TC-query}
\end{equation}
Hence $q_{\rm TC}=\mathcal O_{M,N}(1)$. The same supported Hilbert space has
dimension $\mathcal O(n^N)$, so treating its unitary as arbitrary gives the
universal $\mathcal O(n^{2N})$ benchmark.
\end{proposition}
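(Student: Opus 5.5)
The plan is to feed a coarse block decomposition of $\mathcal H_{\le N}$ into the blockwise machinery of Section~\ref{appendix:structured-blockwise}: first Theorem~\ref{thm:structured-multiplicity-decoupling} to strip multiplicities, then Corollary~\ref{cor:heterogeneous-fallback} with universal local inverters. The blocks do not need to be irreducible. Lemma~\ref{lem:structured-intertwining} and Propositions~\ref{prop:structured-lift-reduced}--\ref{prop:structured-reduce-full} only use a fixed decomposition $\cH\cong\bigoplus_\alpha \cM_\alpha\otimes\cV_\alpha$ in which every generator acts as $I_{\cM_\alpha}\otimes H_{i,\alpha}$. Likewise, a universal $U(d_\alpha)$ inverter inverts any unitary on $\cV_\alpha$, reducible or not. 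So it suffices to exhibit such a coarse decomposition and count its block dimensions.

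\emph{Step 1 (coarse sectors).} Apply the Schur transform on the emitters, $(\CC^2)^{\otimes n}\cong\bigoplus_{j=0}^{\lfloor n/2\rfloor}\cM_j\otimes\cV_{S_j}$ with $S_j=n/2-j$. Every collective operator $J_z,J_\pm$ acts as $I_{\cM_j}\otimes J^{(S_j)}_{z,\pm}$, while the mode operators act on the separate Fock factor. Hence all generators, namely $J_z$, $a_\mu^\dagger a_\mu$ and the two real combinations in Eq.~\eqref{eq:supp-TC-real-generators}, have the form $I_{\cM_j}\otimes(\cdot)$ on $\cM_j\otimes(\cV_{S_j}\otimes\mathcal F)$. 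Since $\mathcal N$ commutes with every generator and is itself of this form, $\mathbf 1_{\{\mathcal N\le N\}}$ also respects the decomposition. The restricted family is therefore $\bigoplus_j I_{\cM_j}\otimes U_j(x)$, where $U_j$ acts on $\cV_j:=\mathbf 1_{\{\mathcal N\le N\}}(\cV_{S_j}\otimes\mathcal F)$.

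\emph{Step 2 (dimension count).} On $\cV_{S_j}$, the operator $J_z+n/2$ takes the values $e=j,j+1,\dots,n-j$. Therefore $\mathcal N\le N$ forces $j\le N$, and it leaves $e'=e-j\ge0$ spin quanta together with photon occupations $(k_1,\dots,k_M)$ subject to $e'+\sum_\mu k_\mu\le N-j$. The hypothesis $n\ge 2N$ guarantees that the upper cap $e\le n-j$ never binds. The admissible tuples are then exactly the monomials of degree $\le N-j$ in $M+1$ variables, so
\[
d_j=\dim\cV_j=\binom{M+N-j+1}{N-j},
\]
independent of $n$, with exactly $N+1$ active labels $j=0,\dots,N$.

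\emph{Step 3 (assembly and benchmark).} Choose the universal inverter of Ref.~\smcite{2} on each $\cV_j$, so $\ell_j=q_{\rm univ}(d_j)$. Corollary~\ref{cor:heterogeneous-fallback}, lifted through Proposition~\ref{prop:structured-lift-reduced}, then gives Eq.~\eqref{eq:supp-TC-query}, and each term is bounded in terms of $M,N$ only. For the benchmark, $\dim\mathcal H_{\le N}=\sum_{j\le N}\dim\cM_j\,d_j$ with $\dim\cM_j=\binom nj-\binom n{j-1}=\mathcal O(n^j)$, giving $\mathcal O(n^N)$. The $\mathcal O(d^2)$ universal cost then yields $\mathcal O(n^{2N})$.

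The step I expect to need the most care is Step~1. There one must check that the multiplicity-decoupling compilers really only require the $I\otimes(\cdot)$ form rather than irreducibility, and that truncating by $\mathcal N$ leaves each sector invariant so that the restricted forward call is a legitimate unitary family. If irreducibility turned out to be needed, the fallback is to refine each $\cV_j$ further via Wedderburn. Refinement only splits blocks and passes repeated copies through Theorem~\ref{thm:structured-multiplicity-decoupling}, and the resulting bound is still $\mathcal O_{M,N}(1)$.
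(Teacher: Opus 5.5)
Your proposal is correct and follows the paper's proof: the Schur--Weyl block form $I_{\cM_j}\otimes H^{(S_j)}_{\rm TC}$, the use of $n\ge 2N$ so the spin ladder is never truncated, the coarse-sector dimension $d_j=\binom{M+N-j+1}{N-j}$, removal of the repeated $\cM_j$ copies, and universal inverters assembled by Corollary~\ref{cor:heterogeneous-fallback}. The differences are cosmetic: you count $d_j$ directly as monomials of degree at most $N-j$ in $M+1$ variables, where the paper sums the exact-excitation sectors $d_{j,L}$; you get the benchmark dimension from $\dim\cM_j=\binom nj-\binom n{j-1}$, where the paper counts excited emitters and photons directly; and you state explicitly that the multiplicity-decoupling compilers need only the $I\otimes(\cdot)$ form rather than irreducibility, which the paper uses without comment.
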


\begin{proof}
Schur--Weyl duality decomposes the emitter space as~\smcite{3}
\begin{equation}
 (\mathbb C^2)^{\otimes n}
 \cong
 \bigoplus_{j=0}^{\lfloor n/2\rfloor}
 \mathcal M_j\otimes V_{S_j},
 \qquad S_j=\frac n2-j,
 \label{eq:supp-TC-Schur}
\end{equation}
where $\mathcal M_j$ is a permutation multiplicity and $V_{S_j}$ is the
active spin representation. The collective spin operators act trivially on
$\mathcal M_j$, so the Hamiltonian itself has the block form
\begin{equation*}
 H_{\rm TC}
 \cong
 \bigoplus_{j=0}^{\lfloor n/2\rfloor}
 I_{\mathcal M_j}\otimes H_{\rm TC}^{(S_j)},
 \qquad
 H_{\rm TC}^{(S)}
 =\omega_0J_z^{(S)}+\sum_{\mu=1}^{M}\omega_\mu a_\mu^\dagger a_\mu
 +\sum_{\mu=1}^{M}
 \left(g_\mu J_+^{(S)}a_\mu+g_\mu^*J_-^{(S)}a_\mu^\dagger\right),
\end{equation*}
where $J_z^{(S)}$ and $J_\pm^{(S)}$ are the spin-$S$ matrices. Thus every
copy labelled by $\mathcal M_j$ undergoes the same joint spin--mode evolution.
Write the magnetic label as $m=-S_j+r$, with
$r\ge0$. The emitter excitation number on this vector is
\begin{equation}
 m+\frac n2=j+r.
 \label{eq:supp-TC-spin-excitation}
\end{equation}
Consequently, sector $j$ can occur in $\mathcal H_{\le N}$ only when $j\le N$.
For fixed $j$, a reduced basis vector is labelled by
$(r,m_1,\ldots,m_M)\in\ZZ_{\ge0}^{M+1}$. Its exact total excitation is
\begin{equation}
 L=j+r+\sum_{\mu=1}^{M}m_\mu.
 \label{eq:supp-TC-exact-excitation}
\end{equation}
The Hamiltonian preserves $L$, so the fixed-$j$ space is itself a direct sum
of exact-excitation sectors. The sector $(j,L)$ has dimension
\begin{equation}
 d_{j,L}=\binom{M+L-j}{L-j},
 \qquad j\le L\le N.
 \label{eq:supp-TC-exact-sector-dimension}
\end{equation}
The condition $n\ge2N$ ensures that the upper end of the spin ladder never
truncates these labels. For the stated compact bound, group the sectors
$L=j,\ldots,N$ into one coarse invariant sector. Its dimension is
\begin{equation}
 d_j
 =\sum_{L=j}^{N}d_{j,L}
 =\binom{M+N-j+1}{N-j}.
 \label{eq:supp-TC-coarse-dimension}
\end{equation}
The Wedderburn-reduction theorem therefore removes the repeated
$\mathcal M_j$ copies. The universal inversion protocol of
Ref.~\smcite{2} on each coarse
sector, followed by
Corollary~\ref{cor:heterogeneous-fallback}, gives
Eq.~\eqref{eq:supp-TC-query}. All $d_j$ depend only on $M,N$, proving the first
scaling claim. Resolving the exact-excitation sectors separately supplies a
finer block list and can only improve this certified upper bound.

For the universal comparison, choose $k$ excited emitters and at most
$N-k$ bosons. The supported dimension is
\begin{equation}
 D_{\rm TC}(n,M,N)
 =\sum_{k=0}^{N}\binom nk\binom{M+N-k}{N-k}.
 \label{eq:supp-TC-ambient-dimension}
\end{equation}
For fixed $M,N$, the $k=N$ term is $\binom nN$ and all other terms have lower
polynomial degree, so $D_{\rm TC}=\mathcal O(n^N)$. Universal exact inversion on
an arbitrary $D_{\rm TC}$-dimensional unitary has query complexity
$\mathcal O(D_{\rm TC}^2)=\mathcal O(n^{2N})$.
\end{proof}

The proposition bounds the coherent backward evolution of the joint
emitter--mode system used in the OTOC setting of the Letter. It assumes that
all resolved modes remain available to the circuit.
Cavity loss, spontaneous emission, and other dissipative processes are not
reversed and lie outside this ideal unitary query model.

\subsection{Echo-verified collective dynamics}
\label{appendix:collective-blocks}

Let $J_\mu=\frac12\sum_{r=1}^{n}\sigma_r^\mu$ be the collective-spin operator along $\mu\in\{x,y,z\}$, where $\sigma_r^\mu$ is the Pauli operator on spin $r$. We consider
\begin{equation}
 H_{\rm col}
 =\sum_{\mu=x,y,z}h_\mu J_\mu
 +\frac1n\sum_{\mu\le\nu}\chi_{\mu\nu}
 \left(J_\mu J_\nu+J_\nu J_\mu\right),
 \label{eq:supp-collective-family}
\end{equation}
where $h_\mu,\chi_{\mu\nu}\in\RR$ are the unknown scalar parameters. This includes linear collective fields and
one- and two-axis twisting subfamilies. Every term commutes with the emitter
permutation action.

\begin{proposition}[Collective-family query bounds]
\label{prop:supp-collective-scaling}
The family in Eq.~\eqref{eq:supp-collective-family} admits the certified bound
\begin{equation}
 q_{\rm col}
 \le
 \sum_{j=0}^{\lfloor n/2\rfloor}
 \left[q_{\rm univ}(n-2j+1)+1\right]-1
 =\mathcal O(n^3).
 \label{eq:supp-collective-cubic}
\end{equation}
\end{proposition}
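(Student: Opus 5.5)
The plan mirrors the Tavis--Cummings argument in Proposition~\ref{prop:supp-TC-scaling}, with the bosonic modes removed. First I will note that every term of Eq.~\eqref{eq:supp-collective-family} is a polynomial in the collective operators $J_x,J_y,J_z$. Each $J_\mu$ acts as $\sum_j I_{\mathcal M_j}\otimes J_\mu^{(S_j)}$ under the Schur--Weyl decomposition of Eq.~\eqref{eq:supp-TC-Schur}, so $H_{\rm col}$ inherits this block form. Consequently the generated algebra $\operatorname{alg}^*(I,J_\mu,\{J_\mu,J_\nu\})$ acts on $(\mathbb C^2)^{\otimes n}$ through the spin-$S_j$ irreps $V_{S_j}$, $j=0,\dots,\lfloor n/2\rfloor$, each tensored with a passive permutation multiplicity $\mathcal M_j$. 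Every copy in $\mathcal M_j$ therefore undergoes the same unknown $U_j(x)=e^{-itH_{\rm col}^{(S_j)}}$.

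I will not need the blocks $V_{S_j}$ to be exactly the irreducible Wedderburn factors of the generated algebra. The linear fields $J_\mu$ already generate all of $\mathcal B(V_{S_j})$, but the family may be restricted to subfamilies, for example pure twisting terms that preserve parity. For the upper bound I will work with the coarse invariant decomposition $\bigoplus_j \mathcal M_j\otimes V_{S_j}$. The split/load wirings $\Xi,\Upsilon$ of Lemma~\ref{lem:structured-intertwining} and Propositions~\ref{prop:structured-lift-reduced} and~\ref{prop:structured-reduce-full} use only two facts: $U(x)$ has this block form, and $U(x)$ acts identically on the copies. So Theorem~\ref{thm:structured-multiplicity-decoupling} applies verbatim with these coarse blocks, and the problem reduces to $U_{\rm red}=\bigoplus_j U_j$ with $d_j=2S_j+1=n-2j+1$.

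On each block I will take the universal clean inverter of Ref.~\smcite{2} as the local primitive, with $\ell_j=q_{\rm univ}(d_j)$. Corollary~\ref{cor:heterogeneous-fallback} then gives a clean exact protocol with $\sum_j[q_{\rm univ}(n-2j+1)+1]-1$ queries. This is exactly the displayed bound, and the same phase-completion argument handles any residual phase $\rho_j$, so no trace or determinant condition is needed. Lifting back to the full system is immediate from Proposition~\ref{prop:structured-lift-reduced}.

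The remaining step is the scaling estimate. Since $q_{\rm univ}(d)=\mathcal O(d^2)$, and concretely $q_{\rm univ}(d)\le d(\pi d/2+1)$ via $\arcsin(1/d)\ge 1/d$, the sum is $\mathcal O\bigl(\sum_{j\le n/2}(n-2j+1)^2\bigr)=\mathcal O(n^3)$, together with $\lfloor n/2\rfloor+1$ additive unit terms. The only point I expect to need care is the legitimacy of the coarse blocks, i.e.\ checking that the multiplicity-decoupling compilers never use irreducibility. I will verify this by rereading Lemma~\ref{lem:structured-intertwining}: its proof uses only Eq.~\eqref{eq:structured-block-family}, so the check should go through.
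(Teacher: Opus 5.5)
Your proposal is correct and follows the paper's proof essentially step for step. Both arguments use the Schur--Weyl block form of $H_{\rm col}$, remove the $S^{(n-j,j)}$ multiplicities, place universal inverters on the $d_j=n-2j+1$ blocks, combine them via Corollary~\ref{cor:heterogeneous-fallback}, and finish with $\sum_j(n-2j+1)^2=\mathcal O(n^3)$. Your extra check that the decoupling compilers never use irreducibility is valid. It is not needed for this full family, though, because the linear fields $J_\mu$ already make the $V_{S_j}$ irreducible, and their distinct dimensions make them pairwise inequivalent under the generated algebra.
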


\begin{proof}
Schur--Weyl duality gives~\smcite{3}
\begin{equation}
 (\mathbb C^2)^{\otimes n}
 \cong
 \bigoplus_{j=0}^{\lfloor n/2\rfloor}
 S^{(n-j,j)}\otimes V_{S_j},
 \qquad
 S_j=\frac n2-j,
 \label{eq:app-collective-sw}
\end{equation}
where $S^{(n-j,j)}$ is the permutation multiplicity space. Since each
collective operator acts as $I_{S^{(n-j,j)}}\otimes J_\mu^{(S_j)}$, the
Hamiltonian decomposes as
\begin{equation*}
 H_{\rm col}
 \cong
 \bigoplus_{j=0}^{\lfloor n/2\rfloor}
 I_{S^{(n-j,j)}}\otimes H_{\rm col}^{(S_j)},
 \qquad
 H_{\rm col}^{(S)}
 =\sum_{\mu=x,y,z}h_\mu J_\mu^{(S)}
 +\frac1n\sum_{\mu\le\nu}\chi_{\mu\nu}
 \left(J_\mu^{(S)}J_\nu^{(S)}+J_\nu^{(S)}J_\mu^{(S)}\right).
\end{equation*}
The multiplicity space therefore carries no Hamiltonian dependence. The active
spin space has dimension
\begin{equation}
 d_j=\dim V_{S_j}=2S_j+1=n-2j+1.
 \label{eq:app-collective-dimension}
\end{equation}
Applying the universal protocol of Ref.~\smcite{2} to each active
block and using
Corollary~\ref{cor:heterogeneous-fallback} yields the sum in
Eq.~\eqref{eq:supp-collective-cubic}. Since
\begin{equation}
 \sum_{j=0}^{\lfloor n/2\rfloor}(n-2j+1)^2=\mathcal O(n^3),
 \label{eq:supp-collective-square-sum}
\end{equation}
the broad family has the stated cubic certified budget. Treating the full
$2^n$-dimensional unitary as structure blind instead gives
$q_{\rm univ}(2^n)=\mathcal O(4^n)$.
\end{proof}

Quadratic collective terms need not be traceless on every active block. Their
block-dependent determinant phases are why the robust broad-family estimate
uses automatic completion. A further reduction for a parity-refined LMG
subfamily would require a separate trace-vector synchronization certificate.

\subsection{Passive multimode-link hierarchy}
\label{appendix:passive-link-scaling}

For $n$ physical modes, let
\begin{equation}
 H_{\rm link}=\sum_{r,s=1}^{n}X_{rs}a_r^\dagger a_s.
 \label{eq:supp-passive-H}
\end{equation}
Here $a_r$ is a mode annihilation operator and $X=(X_{rs})\in\mathbb C^{n\times n}$ is the unknown Hermitian single-particle coupling matrix. Thus $X_{rs}$ are scalar coefficients, while $H_{\rm link}$ acts on the Fock space. The total number
$\widehat N=\sum_r a_r^\dagger a_r$ is conserved.

\begin{proposition}[Shared single-particle inversion]
\label{prop:supp-passive-window}
Let $N\ge1$. On the at-most-$N$ excitation space,
\begin{equation}
 \mathcal H_{\le N}
 =\bigoplus_{r=0}^{N}\operatorname{Sym}^r(\mathbb C^n),
 \qquad
 d_r(n)=\binom{n+r-1}{r},
 \label{eq:supp-passive-sectors}
\end{equation}
Here $\operatorname{Sym}^r(\mathbb C^n)$ is the subspace of
$(\mathbb C^n)^{\otimes r}$ invariant under permutations of the $r$ tensor
factors, and $\operatorname{Sym}^r(e^{iX})$ denotes the restriction of
$(e^{iX})^{\otimes r}$ to this subspace. The forward evolution is
\begin{equation}
 U_{\le N}(X)
 =\bigoplus_{r=0}^{N}\operatorname{Sym}^r(e^{iX}).
 \label{eq:supp-passive-second-quantized-unitary}
\end{equation}
If the single-particle transformation $e^{iX}$ has a clean $q$-query inverter, then
$U_{\le N}(X)$ has a clean exact inverter using $N(q+1)$ forward calls to
$U_{\le N}(X)$. In particular,
\begin{equation}
 q_{\rm pass}
 \le N\left[q_{\rm univ}(n)+1\right]
 =\mathcal O(Nn^2).
 \label{eq:supp-passive-budget}
\end{equation}
Thus the bound is $\mathcal O(n^2)$ for fixed $N$, including $N=2$,
$\mathcal O(n^{5/2})$ for $N=\lceil\sqrt n\rceil$, and $\mathcal O(n^3)$ for $N=n$.
\end{proposition}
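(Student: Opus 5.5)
The plan is to reduce everything to queries on the one-photon sector, invert each photon separately, and then equalize phases and query counts across photon-number sectors with scalar gadgets and trivial vacuum calls. The pieces are then glued together with Lemma~\ref{lem:structured-controlled-sum}.

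\textbf{Single-particle access.} The one-photon sector $r=1$ of $U_{\le N}(X)$ is exactly $e^{iX}$. Let $\Upsilon_1:\CC^n\to\mathcal H_{\le N}$ be the fixed isometry $\ket{s}\mapsto a_s^\dagger\ket{\mathrm{vac}}$. By the argument of Lemma~\ref{lem:structured-intertwining}, $\Upsilon_1^\dagger U_{\le N}(X)\Upsilon_1=e^{iX}$ and $\operatorname{Ran}\Upsilon_1$ is invariant. So one forward call to $U_{\le N}(X)$, wrapped by $\Upsilon_1$ and $\Upsilon_1^\dagger$, realizes one call to $e^{iX}$ on any single-particle work register. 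The given $q$-query inverter therefore yields, with $q$ calls to $U_{\le N}$, an operation $\mathcal P[e^{iX}]=\rho(X)e^{-iX}$ on such a register. Lemma~\ref{lem:scalar-from-local-inverter} then gives a clean scalar gadget with $q+1$ calls and phase $\rho(X)$.

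A third ingredient is needed for padding. A call on an auxiliary register prepared in $\ket{\mathrm{vac}}$ uses one query, since $\operatorname{Sym}^0(e^{iX})=1$. It acts as the identity and is automatically clean.

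\textbf{Branch for sector $r$.} Fix $1\le r\le N$. Apply the fixed isometry $E_r$ that maps $\operatorname{Sym}^r(\CC^n)$ into $(\CC^n)^{\otimes r}$ (first quantization); it is extended unitarily with clean ancillas. On each of the $r$ tensor factors run the single-particle inverter, for $rq$ calls in total. This produces $\rho(X)^r\,(e^{-iX})^{\otimes r}$. Because $(e^{-iX})^{\otimes r}$ commutes with permutations of the factors, it preserves $\operatorname{Ran}E_r$ and restricts there to $\operatorname{Sym}^r(e^{iX})^\dagger$. Hence $E_r^\dagger$ returns the data to sector $r$ with the work registers clean.

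Next, append $N-r$ copies of the scalar gadget, giving phase $\rho^{N}$ overall at cost $(N-r)(q+1)$. Then append $r$ vacuum calls. The branch length becomes
\[
rq+(N-r)(q+1)+r=N(q+1),
\]
and the branch output is $\rho(X)^N\operatorname{Sym}^r(e^{iX})^\dagger$. For $r=0$, use $N$ scalar gadgets, which cost $N(q+1)$ and give $\rho^N\cdot 1$.

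\textbf{Combination and bound.} All branches now have the same length $q_{\rm com}=N(q+1)$ and the same phase $\zeta=\rho^N$. Lemma~\ref{lem:structured-controlled-sum}, controlled on the photon-number label (extracted by a fixed isometry), combines them into one clean circuit implementing $\rho(X)^N U_{\le N}(X)^\dagger$. Inserting $q=q_{\rm univ}(n)=\mathcal O(n^2)$ gives Eq.~\eqref{eq:supp-passive-budget}, and the three stated scalings follow by substituting $N$.

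\textbf{Main obstacle.} The delicate point is the synchronization: different sectors naturally consume different numbers of calls ($rq$) and carry different phases ($\rho^r$). The plan above resolves the phase mismatch with closed gadgets and the residual length mismatch with vacuum calls; checking that this really lands on exactly $N(q+1)$ in every branch is the key step. A secondary check is that the first-quantization embedding remains clean on coherent superpositions, which follows from permutation invariance of $(e^{-iX})^{\otimes r}$.
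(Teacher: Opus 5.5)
Your proposal is correct and follows essentially the same route as the paper. Single-particle calls are realized through one-photon loading, the inverter is applied to each of the $r$ first-quantized particles, and $N-r$ closed $(q+1)$-call loops plus $r$ vacuum calls give every photon-number branch exactly $N(q+1)$ calls with common phase $\rho(X)^N$; the branches are then recombined via Lemma~\ref{lem:structured-controlled-sum}. The one step you take as given, which the paper proves, is $U_{\le N}(X)=\bigoplus_{r}\operatorname{Sym}^r(e^{iX})$, obtained from $[H_{\rm link},a_k^\dagger]=\sum_s X_{sk}a_s^\dagger$ and the fact that $H_{\rm link}$ annihilates the vacuum.
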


\begin{proof}
The $r$-excitation sector of $n$ bosonic modes is the symmetric power in
Eq.~\eqref{eq:supp-passive-sectors}; its occupation-number basis has
$\binom{n+r-1}{r}$ elements. The commutation relations give
\[
 [H_{\rm link},a_k^\dagger]
 =\sum_{s=1}^{n}X_{sk}a_s^\dagger,
 \qquad
 e^{iH_{\rm link}}a_k^\dagger e^{-iH_{\rm link}}
 =\sum_{s=1}^{n}(e^{iX})_{sk}a_s^\dagger.
\]
Since $H_{\rm link}$ annihilates the vacuum, applying this identity to a
product of $r$ creation operators gives
\[
 e^{iH_{\rm link}}a_{k_1}^\dagger\cdots a_{k_r}^\dagger|0\rangle
 =\prod_{j=1}^{r}
 \left[\sum_{s=1}^{n}(e^{iX})_{s k_j}a_s^\dagger\right]|0\rangle.
\]
Thus the restriction to the $r$-photon sector is
$\operatorname{Sym}^r(e^{iX})$, proving
Eq.~\eqref{eq:supp-passive-second-quantized-unitary}.

Let $\mathcal P[e^{iX}]=\rho(X)e^{-iX}$ be a clean single-particle inverter
using $q$ forward calls to $e^{iX}$. One such call can be implemented using one call to
$U_{\le N}(X)$: a fixed isometry loads the single-particle state into the
one-photon sector of a Fock register, the full evolution is applied once, and
the state is unloaded.

Consider first the branch with exactly $r$ photons. A fixed first-quantization
isometry identifies $\operatorname{Sym}^r(\mathbb C^n)$ with the symmetric
subspace of $(\mathbb C^n)^{\otimes r}$. Applying $\mathcal P[e^{iX}]$ to each
of the $r$ particle registers uses $rq$ calls. On the symmetric subspace, its
action is
\begin{equation}
 \left.\mathcal P[e^{iX}]^{\otimes r}
 \right|_{\operatorname{Sym}^r(\mathbb C^n)}
 =\rho(X)^r
 \left.(e^{-iX})^{\otimes r}
 \right|_{\operatorname{Sym}^r(\mathbb C^n)}
 =\rho(X)^r\operatorname{Sym}^r(e^{-iX}).
 \label{eq:supp-passive-particle-inversion}
\end{equation}
To make the phase and query number independent of $r$, run an additional copy
of $\mathcal P[e^{iX}]$ on an auxiliary single-particle register and follow it
with one forward call to $e^{iX}$. This $(q+1)$-call sequence acts as
\begin{equation}
 e^{iX}\mathcal P[e^{iX}]=\rho(X)I.
 \label{eq:supp-passive-scalar-closure}
\end{equation}
Running $N-r$ such closures changes the residual phase in
Eq.~\eqref{eq:supp-passive-particle-inversion} to $\rho(X)^N$. At this point
the branch has used
\begin{equation}
 rq+(N-r)(q+1)=N(q+1)-r
 \label{eq:supp-passive-branch-before-vacuum}
\end{equation}
calls. Finally, make $r$ additional calls with the queried Fock register in
the vacuum sector. Since $\operatorname{Sym}^0(e^{iX})=1$, these calls change
neither the state nor the phase. Every photon-number branch now uses exactly
$N(q+1)$ calls and implements
\begin{equation}
 \rho(X)^N\operatorname{Sym}^r(e^{-iX}).
 \label{eq:supp-passive-common-branch-output}
\end{equation}
Applying the inverse first-quantization isometry returns the state to the
$r$-photon Fock sector and resets its work registers.
The photon number can be written to a work register by a fixed unitary. The
same synchronized-branch construction as in
Lemma~\ref{lem:structured-controlled-sum} therefore combines the branches
coherently and then erases this register. All first-quantization, loading, and
routing operations are fixed, and every query slot contains the same ordinary
forward evolution $U_{\le N}(X)$. The resulting circuit is clean and has the
common global phase $\rho(X)^N$.

Taking $q=q_{\rm univ}(n)=\mathcal O(n^2)$ proves
Eq.~\eqref{eq:supp-passive-budget}. The three stated asymptotic regimes follow
by substituting the corresponding values of $N$.
\end{proof}

\begin{proposition}[Circulant and bright-mode certificates]
\label{prop:supp-passive-characters}
Restrict Eq.~\eqref{eq:supp-passive-H} to $\mathcal H_{\le2}$.

If $X$ is Hermitian circulant, the exact query number is $q_{\rm circ}=n$.

If $n\ge2$ and
\begin{equation}
 X=x_0I+x_1|b\rangle\!\langle b|,
 \qquad
 |b\rangle=\frac1{\sqrt n}\sum_{r=1}^{n}|r\rangle,
 \label{eq:supp-bright-X}
\end{equation}
with $x_0,x_1\in\RR$, then the exact query number on $\mathcal H_{\le2}$ is $q_{\rm bright}=2$,
independently of $n$. On the exactly-two-excitation sector, the exact query
number is one.
\end{proposition}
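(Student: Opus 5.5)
The plan is to reduce both statements to the multiparameter sumset criterion, Corollary~\ref{cor:multiparameter-sumset}. In each case a fixed, known basis diagonalizes every member of the family on $\mathcal H_{\le2}$, so $U$ has the form $\sum_j e^{i\lambda_j\cdot x}\Pi_j$. The work is then to list the character set $\Lambda$ and solve the small integer problem $c-\Lambda\subseteq\Sigma_q(\Lambda)$ exactly: exhibit a $c$ for the claimed $q$ and rule out all smaller $q$. Multiplicities are irrelevant by Theorem~\ref{thm:supp-spectral-seed} and its corollary, so only distinct characters matter.

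\emph{Circulant case.} A Hermitian circulant $X$ is diagonal in the Fourier basis $\{\ket{f_k}\}_{k=1}^n$, with eigenvalues $\theta_k$. These are $n$ arbitrary real numbers, obtained from the entries of $X$ by a fixed invertible linear map, so I parametrize the family by $\theta\in\RR^n$. The second-quantized evolution is then diagonal in the Fourier occupation basis, with characters
\[
\Lambda=\{0\}\cup\{e_k\}_{k}\cup\{e_k+e_l\}_{k\le l}\subset\ZZ_{\ge0}^n .
\]
Here $e_k$ is the $k$th standard basis vector of $\RR^n$; the value $k=l$ is included, since two photons may occupy the same Fourier mode. Every element of $\Sigma_q(\Lambda)$ is a nonnegative integer vector of total degree at most $2q$.

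For the lower bound, note that $c-0\in\Sigma_q(\Lambda)$ forces $c$ to be a nonnegative integer vector. Next, $c-2e_k\in\Sigma_q(\Lambda)$ forces $c_k\ge2$ for every $k$, so $c$ has total degree at least $2n$. Since $c$ itself has degree at most $2q$, we get $q\ge n$.

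For the upper bound, take $c=2\mathbf 1$ and $q=n$, and pad with the zero character where needed:
\begin{itemize}
\item $c=\sum_l 2e_l$, which is $n$ terms;
\item $c-2e_k=\sum_{l\ne k}2e_l+0$;
\item $c-e_k=e_k+\sum_{l\ne k}2e_l$;
\item $c-e_k-e_l=(e_k+e_l)+\sum_{m\ne k,l}2e_m+0$ for $k\neq l$.
\end{itemize}
Each of these is a sum of exactly $n$ characters, which gives $q_{\rm circ}=n$.

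\emph{Bright-mode case.} The single-particle operator $X$ has eigenvalue $x_0+x_1$ on $\ket b$ and $x_0$ on the $(n-1)$-dimensional dark complement. Use the parameters $(x_0,x_1)$ and a basis built from $\ket b$ and a fixed dark basis. The two-dark-photon character $(2,0)$ exists because $n\ge2$. The distinct characters, written as the number of photons and the number of bright photons, are
\[
(0,0),\ (1,1),\ (1,0),\ (2,2),\ (2,1),\ (2,0).
\]
For the lower bound, $q=0$ is impossible since $K>1$. For $q=1$, the map $\lambda\mapsto c-\lambda$ must send $\Lambda$ into $\Lambda$. Comparing first coordinates forces $c_1=2$. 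The character $(2,0)$ must map to $(0,c_2)\in\Lambda$, which forces $c_2=0$. The character $(2,2)$ must map to $(0,c_2-2)\in\Lambda$, which forces $c_2=2$. This is a contradiction, so $q\ge2$.

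For the upper bound, take $c=(4,2)$ and $q=2$:
\begin{itemize}
\item $(4,2)=(2,2)+(2,0)$;
\item $(3,1)=(2,1)+(1,0)$;
\item $(3,2)=(2,1)+(1,1)$;
\item $(2,0)=(2,0)+(0,0)$;
\item $(2,1)=(2,1)+(0,0)$;
\item $(2,2)=(2,2)+(0,0)$.
\end{itemize}
These are the required decompositions of $c-\lambda$ for the six characters in the order listed above. On the exactly-two-excitation sector only $\{(2,2),(2,1),(2,0)\}$ remain. Zero queries fail because there are three distinct characters, and $c=(4,2)$ with $q=1$ works because $\lambda\mapsto c-\lambda$ permutes this set.

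The step I expect to need the most care is the circulant lower bound. It relies on including the doubly occupied characters $2e_k$, and on the parametrization by the Fourier eigenvalues being a genuine linear reparametrization of the hidden coefficients, so that the characters are distinct vectors and Corollary~\ref{cor:multiparameter-sumset} applies verbatim. I would verify both points explicitly before running the degree count.
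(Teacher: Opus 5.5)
Your proposal is correct and follows the paper's route. You reduce to Corollary~\ref{cor:multiparameter-sumset} and take $c=2\mathbf 1_n$ and $c=(4,2)$ for the upper bounds. For the lower bounds, the vacuum character forces $c\in\Sigma_q$, and the doubly occupied characters ($2e_k$, respectively $(2,0)$ and $(2,2)$) finish the argument. The only cosmetic difference is that you write out explicit decompositions of each $c-\lambda$, where the paper identifies $\Sigma_q(\Lambda_{\rm circ})$ with all nonnegative integer vectors of $\ell_1$-norm at most $2q$.
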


\begin{proof}
Every Hermitian circulant matrix is diagonalized by the same discrete Fourier
transform. The map from a Hermitian circulant matrix to its Fourier
eigenvalues is an invertible real-linear map onto \(\RR^n\), so these real
eigenvalues can vary independently. In the transformed modes $b_k$,
\begin{equation}
 H_{\rm link}=\sum_{k=1}^{n}\lambda_k(X)b_k^\dagger b_k.
 \label{eq:supp-circulant-diagonal}
\end{equation}
The fixed occupation basis therefore diagonalizes the entire hidden family.
Its character vectors on $\mathcal H_{\le2}$ form
\begin{equation}
 \Lambda_{\rm circ}
 =\left\{m\in\ZZ_{\ge0}^{n}:\lVert m\rVert_1\le2\right\}.
 \label{eq:supp-circulant-characters}
\end{equation}
The $q$-fold sumset is
\begin{equation}
 \Sigma_q(\Lambda_{\rm circ})
 =\left\{v\in\ZZ_{\ge0}^{n}:\lVert v\rVert_1\le2q\right\}.
 \label{eq:supp-circulant-sumset}
\end{equation}
For sufficiency in Corollary~\ref{cor:multiparameter-sumset}, choose
$c=2\mathbf 1_n$. Then
$c-\Lambda_{\rm circ}\subseteq\Sigma_n(\Lambda_{\rm circ})$, so $n$ queries
suffice. Conversely, because $2e_i\in\Lambda_{\rm circ}$, nonnegativity of
$c-2e_i$ requires $c_i\ge2$ for every $i$. The vacuum character also belongs
to $\Lambda_{\rm circ}$, so $c\in\Sigma_q(\Lambda_{\rm circ})$ and
$\lVert c\rVert_1\le2q$. Hence $2n\le\lVert c\rVert_1\le2q$, proving
$q\ge n$.

For Eq.~\eqref{eq:supp-bright-X}, the assumption $n\ge2$ leaves at least one
dark mode. Use the bright mode $b$ and any fixed basis of the orthogonal dark
manifold. A phase character is determined only by the
pair $(N,k_b)$, where $N$ is the total occupation and $k_b$ is the bright
occupation. On $N\le2$, the allowed pairs are
\begin{equation}
 (0,0),\quad (1,0),(1,1),\quad (2,0),(2,1),(2,2).
 \label{eq:supp-bright-pairs}
\end{equation}
Let this set be $\Lambda_{\rm bright}$. The choice $c=(4,2)$ satisfies
$c-\Lambda_{\rm bright}\subseteq\Sigma_2(\Lambda_{\rm bright})$, so two
queries suffice. One query cannot work: because $(0,0)\in\Lambda_{\rm bright}$,
it would require $c\in\Lambda_{\rm bright}$; because $(2,2)$ is also present
and $c-(2,2)$ must be nonnegative, this forces $c=(2,2)$, but then
$c-(2,0)=(0,2)\notin\Lambda_{\rm bright}$. Thus $q_{\rm bright}=2$.

On exactly $N=2$, the character set is
$\Lambda_2=\{(2,0),(2,1),(2,2)\}$. For the same $c=(4,2)$,
$c-\Lambda_2=\Lambda_2$, so one query suffices; zero queries are impossible
for three distinct characters. The growing dark-mode degeneracy changes only
multiplicity and does not enter either optimum.
\end{proof}

\section{Assisted phase certificates and catalogue synchronization}
\label{appendix:assisted-phase-certificates}

The automatic completion above starts from an inverse for each individual
block. A more specialized Hamiltonian family may provide something shorter:
one block may be inverted by using the full reduced operation, or a circuit
may return only a useful scalar phase. This section shows how any finite list
of such verified circuits can be combined. Whenever the resulting integer
program is feasible, it gives a certified upper bound. A finite list does not,
by itself, prove that the bound is optimal.

\subsection{Sector and scalar certificates}

Let exactness be required on a linear parameter domain \(L\subseteq\RR^p\).  In
the main text \(L=\RR^p\), but the quotient notation below makes linear
constraints harmless.  Define the charge space
\begin{equation}
    \mathfrak X_L
    :=
    (\RR^p)^*
    \big/
    \{r:r\cdot x=0\ \text{for all }x\in L\}.
    \label{eq:assisted-charge-space}
\end{equation}
A charge \(r\in\mathfrak X_L\) represents the character \(e^{ir\cdot x}\), and all charge
equalities below are equalities in \(\mathfrak X_L\).

\begin{definition}[Assisted sector certificate]
\label{def:assisted-sector-certificate}
For a block \(\alpha\), a query number \(\ell\), and a charge \(r\in\mathfrak X_L\), an
assisted sector certificate is a clean \(\ell\)-query circuit
\(\mathcal B_\alpha\) on \(\cH_{\mathrm{red}}\) such that
\begin{equation}
    \mathcal B_\alpha[U_{\mathrm{red}}](x)\iota^R_\alpha
    =
    e^{ir\cdot x}\,
    \iota^R_\alpha U_\alpha(x)^\dagger
    \qquad
    \forall x\in L .
    \label{eq:assisted-sector-certificate}
\end{equation}
\end{definition}

Only the input sector is specified. The branch may call the whole reduced
operation \(U_{\mathrm{red}}\), and it need not invert the other sectors. A local
block inverter imported by Lemma~\ref{lem:structured-embed-block} is the
special case in which the branch stays inside \(\iota^R_\alpha(\cV_\alpha)\).
Cross-block shortcuts are also included. If a fixed unitary isomorphism
\(T:\cV_\alpha\to \cV_\beta\), and hence $d_\alpha=d_\beta$, satisfies
\begin{equation}
    T^\dagger H_{j,\beta}T
    =
    a_j I_{\cV_\alpha}-H_{j,\alpha}
    \qquad
    \forall j,
\end{equation}
then
\begin{equation}
    T^\dagger U_\beta(x)T
    =
    e^{ia\cdot x}U_\alpha(x)^\dagger.
\end{equation}
Routing the \(\alpha\)-sector to \(\beta\), applying the reduced operation once, and
routing back gives a one-query assisted sector certificate with charge \(a\).

\begin{definition}[Scalar certificate]
\label{def:assisted-scalar-certificate}
For a query number \(w\) and a charge \(h\in\mathfrak X_L\), a scalar certificate is a
clean \(w\)-query circuit \(\mathcal S\) on \(\cH_{\mathrm{red}}\) such that
\begin{equation}
    \mathcal S[U_{\mathrm{red}}](x)
    =
    e^{ih\cdot x}I_{\cH_{\mathrm{red}}}
    \qquad
    \forall x\in L .
    \label{eq:assisted-scalar-certificate}
\end{equation}
\end{definition}

The determinant circuits are scalar certificates with query number
\(d_\beta\) and charge \(\tau_\beta\).  The closure lemma used for local
inverters has the same proof for assisted sector certificates.

\begin{lemma}[Closing an assisted sector certificate]
\label{lem:closing-assisted-sector}
If \((q,r)\) is certified for block \(\alpha\) by
Eq.~\eqref{eq:assisted-sector-certificate}, then \((q+1,r)\) is a scalar
certificate.
\end{lemma}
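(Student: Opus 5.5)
The plan mirrors the proof of Lemma~\ref{lem:scalar-from-local-inverter}. The only change is that the assisted circuit \(\mathcal B_\alpha\) is already a circuit on \(\cH_{\mathrm{red}}\), so no embedding step through Lemma~\ref{lem:structured-embed-block} is needed.

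\textbf{Construction.} Leave the data register of \(\cH_{\mathrm{red}}\) untouched. Introduce a clean auxiliary register \(R'\simeq\cH_{\mathrm{red}}\) and prepare it, by a fixed unitary, in a fixed unit vector \(\iota^R_\alpha\ket{v_0}\) with \(\ket{v_0}\in\cV_\alpha\). Run the \(q\)-query certificate \(\mathcal B_\alpha\) with \(R'\) as its input register, using its own clean ancilla. Then send \(R'\) through one more forward call of \(U_{\mathrm{red}}(x)\); when several system-isomorphic registers are present, this call is implemented by the fixed swap-to-port convention of Definition~\ref{def:structured-sequential-protocol}. Finally, undo the fixed preparation. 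The total query count is \(q+1\).

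\textbf{Computation.} By the cleanliness of \(\mathcal B_\alpha\) and Eq.~\eqref{eq:assisted-sector-certificate}, after the first stage the ancilla of \(\mathcal B_\alpha\) is back in \(\ket{0}\) and \(R'\) holds
\[
e^{ir\cdot x}\iota^R_\alpha U_\alpha(x)^\dagger\ket{v_0}
\]
for every \(x\in L\). By Eq.~\eqref{eq:structured-reduced-block-relations}, \(U_{\mathrm{red}}(x)\iota^R_\alpha=\iota^R_\alpha U_\alpha(x)\). The extra call therefore returns \(R'\) to \(e^{ir\cdot x}\iota^R_\alpha\ket{v_0}\). Undoing the preparation resets \(R'\) and leaves only the scalar \(e^{ir\cdot x}\). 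This scalar is independent of the data state, because the data register never interacts with any gate, so the whole circuit acts on the data as \(e^{ir\cdot x}I_{\cH_{\mathrm{red}}}\).

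\textbf{Checks.} Cleanliness holds for every data input, including superpositions across blocks, by linearity: the auxiliary computation is the same regardless of the data. Since the phase identity is only needed for \(x\in L\), the charge is read in \(\mathfrak X_L\). This gives exactly a scalar certificate with data \((q+1,r)\) in the sense of Definition~\ref{def:assisted-scalar-certificate}.

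\textbf{Main obstacle.} The only delicate point is that \(\mathcal B_\alpha\) is specified only on the \(\alpha\)-sector, and its output on other sectors is unconstrained. The construction avoids this because the auxiliary input is deliberately placed inside \(\iota^R_\alpha(\cV_\alpha)\), so only the certified action is ever used. The argument also relies on the certificate's cleanliness on that sector, which is part of Definition~\ref{def:assisted-sector-certificate}.
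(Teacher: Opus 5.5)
Your proposal is correct and follows essentially the same route as the paper. Both bypass the data register, run \(\mathcal B_\alpha\) on an auxiliary register prepared inside \(\iota^R_\alpha(\cV_\alpha)\), close it with one extra call so that \(U_\alpha(x)e^{ir\cdot x}U_\alpha(x)^\dagger=e^{ir\cdot x}I_{\cV_\alpha}\), and read off the released scalar phase at cost \(q+1\); your remarks on cleanliness, the swap-to-port convention, and restricting to \(x\in L\) only make explicit what the paper's short proof leaves implicit.
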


\begin{proof}
Bypass the data register. Prepare an auxiliary register in
\(\iota^R_\alpha(\cV_\alpha)\), run the assisted sector certificate on that
register, and then apply \(U_{\mathrm{red}}(x)\) once to the same register. On the legal
sector the auxiliary state undergoes
\[
    U_\alpha(x)\,e^{ir\cdot x}U_\alpha(x)^\dagger
    =
    e^{ir\cdot x}I_{\cV_\alpha}.
\]
The auxiliary register is restored, and the bypassed data register acquires the scalar
phase \(e^{ir\cdot x}\).  The cost is \(q+1\).
\end{proof}

\subsection{Character rigidity}

The catalogue ILP below synchronizes charges.  The next lemma explains why an
exact assisted sector certificate still has a character residual phase, even
though the branch may use all blocks.

\begin{lemma}[Character residual phase]
\label{lem:assisted-character-rigidity}
Let \(\mathcal B_\alpha\) be a clean \(q\)-query circuit on
\(\cH_{\mathrm{red}}\).  Suppose there is a phase function
\(\rho_\alpha:L\to U(1)\) such that
\begin{equation}
    \mathcal B_\alpha[U_{\mathrm{red}}](x)\iota^R_\alpha
    =
    \rho_\alpha(x)\,
    \iota^R_\alpha U_\alpha(x)^\dagger
    \qquad
    \forall x\in L .
    \label{eq:assisted-character-hypothesis}
\end{equation}
After multiplying the circuit by a fixed phase, there exists
\(r_\alpha\in\mathfrak X_L\) such that
\begin{equation}
    \rho_\alpha(x)=e^{ir_\alpha\cdot x}
    \qquad
    \forall x\in L .
\end{equation}
\end{lemma}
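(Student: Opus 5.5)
The plan is to reuse the exponential-polynomial argument from the necessity part of Theorem~\ref{thm:supp-spectral-seed} and Corollary~\ref{cor:multiparameter-sumset}, now for the matrix-valued reduced family. First I would fix a basis of \(L\) and write \(x=\sum_k y_k e_k\) with \(y\in\RR^{k}\). The problem is that \(U_{\mathrm{red}}(x)\) is no longer an exponential polynomial in \(x\) with a fixed basis, because the generators need not commute. So I would not expand in frequencies. Instead I would use the clean structure: every matrix entry of \(\mathcal B_\alpha[U_{\mathrm{red}}](x)\) is a polynomial of degree \(q\) in the entries of \(U_{\mathrm{red}}(x)\), exactly as in the proof of Lemma~\ref{lem:universal-inverter-character}.

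Pick a fixed nonzero \(v\in\cV_\alpha\) and multiply Eq.~\eqref{eq:assisted-character-hypothesis} on the left by \(U_{\mathrm{red}}(x)\) (restricted to the sector), then take the inner product with \(\iota^R_\alpha v\). This gives
\[
\rho_\alpha(x)\|v\|^2=\langle \iota^R_\alpha v,\,U_{\mathrm{red}}(x)\mathcal B_\alpha[U_{\mathrm{red}}](x)\iota^R_\alpha v\rangle=:F(x).
\]
The right-hand side is a polynomial of degree \(q+1\) in the entries of \(U_{\mathrm{red}}(x)\). It is therefore a real-analytic function of \(x\in L\), and \(|F(x)|=\|v\|^2\) is constant. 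The key structural step is to show that \(\rho_\alpha\) is a continuous homomorphism \(L\to U(1)\), or at least a character up to a constant.

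The main obstacle is that a general analytic unimodular function on \(L\) need not be a character; for example \(e^{i y^2}\) is analytic and unimodular. So the proof must use more than analyticity. I would use the one-parameter subgroup structure along rays. For fixed direction \(u\in L\), \(t\mapsto U_{\mathrm{red}}(tu)=e^{itH_{\mathrm{red}}(u)}\) is a one-parameter group with fixed spectral decomposition. Hence along each ray the entries are finite exponential sums in \(t\), and so is \(F(tu)\). The largest-minus-smallest frequency argument from Theorem~\ref{thm:supp-spectral-seed} then forces \(\rho_\alpha(tu)=c_u e^{i\omega(u)t}\) with a single frequency \(\omega(u)\).

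Setting \(t=0\) gives \(c_u=\rho_\alpha(0)\), which is independent of \(u\). Multiplying the circuit by \(\rho_\alpha(0)^{-1}\) normalizes this to \(c_u=1\), so \(\rho_\alpha(x)=e^{i\omega(x)}\) with \(\omega\) homogeneous of degree one along rays (\(\omega(tu)=t\omega(u)\)). It remains to show \(\omega\) is linear, and here I expect the real difficulty. I would write \(\omega(x)=-i\log\rho_\alpha(x)\) locally near \(0\). Since \(\rho_\alpha\) is real-analytic (as \(F\) is) and \(\omega\) is positively homogeneous of degree one and analytic at the origin, its Taylor expansion can contain only the degree-one term. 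Hence \(\omega\) is linear, and \(\omega(x)=r_\alpha\cdot x\) for some \(r_\alpha\in(\RR^p)^*\), well defined modulo the annihilator of \(L\), that is, as an element of \(\mathfrak X_L\).

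The same homogeneity-plus-analyticity argument requires a branch of the logarithm. This is available globally: \(\rho_\alpha(tu)=e^{i\omega(u)t}\) already defines \(\omega\) on every ray, consistently with the local branch. That completes the plan.
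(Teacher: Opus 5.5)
Your proposal is correct and follows essentially the paper's route. You use the single-mode argument along each ray $t\mapsto tc$, extracting $\rho_\alpha$ from the matrix element $F$ where the paper uses a normalized trace over $\cV_\alpha$, and then get linearity of the frequency from smoothness of the circuit output at the origin. The paper gets that linearity directly by differentiating the clean circuit at $t=0$, where each first-order term of the product rule carries one insertion of $iH_{\mathrm{red}}(c)$. Your logarithm/Taylor detour can likewise be shortened to $\omega(u)=-i\,\partial_t\rho_\alpha(tu)|_{t=0}$, which is linear in $u$ because $\rho_\alpha$ is differentiable at $0$.
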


\begin{proof}
Fix \(c\in L\) and set \(x=tc\).  Write
\[
    H_{\mathrm{red}}(c)
    =
    \bigoplus_{\beta\in\Lambda_{\mathrm{blk}}}
    H_\beta(c),
    \qquad
    H_\beta(c)=\sum_j c_jH_{j,\beta}.
\]
Then \(U_{\mathrm{red}}(tc)=e^{itH_{\mathrm{red}}(c)}\). Expanding the
\(q\) forward calls in a spectral decomposition of
\(H_{\mathrm{red}}(c)\), every matrix element of
\(\mathcal B_\alpha[U_{\mathrm{red}}](tc)\) is a finite exponential
polynomial in \(t\). The phase can be extracted by
\begin{equation}
    \rho_\alpha(tc)
    =
    \frac{1}{d_\alpha}
    \operatorname{tr}_{\cV_\alpha}
    \left[
    (\iota^R_\alpha)^\dagger
    \mathcal B_\alpha[U_{\mathrm{red}}](tc)
    \iota^R_\alpha U_\alpha(tc)
    \right],
    \label{eq:assisted-phase-trace}
\end{equation}
so \(\rho_\alpha(tc)\) is also a finite exponential polynomial. Since its
modulus is one for every \(t\), it has only one Fourier mode.  Otherwise
\(|\rho_\alpha(tc)|^2\) would contain a nonzero highest frequency.  Thus, after
a fixed phase normalization,
\begin{equation}
    \rho_\alpha(tc)=e^{it r_\alpha(c)}
\end{equation}
for some real number \(r_\alpha(c)\).

It remains to see that \(r_\alpha(c)\) is linear in \(c\).  Differentiating
Eq.~\eqref{eq:assisted-character-hypothesis} at \(t=0\) gives, on the
\(\alpha\)-sector,
\[
    i\bigl(r_\alpha(c)I_{\cV_\alpha}-H_\alpha(c)\bigr).
\]
The same derivative computed from the clean circuit is real-linear in \(c\):
the product rule gives a sum over query slots, and each first-order term contains
exactly one insertion of \(iH_{\mathrm{red}}(c)\).  Taking the
normalized trace of the restricted first-order coefficient shows that
\(r_\alpha(c)\) is real-linear in \(c\).
This linear functional defines a charge in \(\mathfrak X_L\).
\end{proof}

\subsection{Catalogue synchronization}

Define the true sector and scalar frontiers by
\begin{align}
    R_\alpha(q)
    &:=
    \{r\in\mathfrak X_L:\text{there is a \(q\)-query assisted sector certificate for }
    \alpha\text{ with charge }r\},\\
    R_{\mathrm{sc}}(q)
    &:=
    \{h\in\mathfrak X_L:\text{there is a \(q\)-query scalar certificate with charge }h\}.
\end{align}
Appending a scalar certificate gives
\begin{equation}
    r\in R_\alpha(q),\quad h\in R_{\mathrm{sc}}(s)
    \quad\Longrightarrow\quad
    r+h\in R_\alpha(q+s).
    \label{eq:assisted-frontier-closure}
\end{equation}
The exact but abstract optimality statement is
\begin{equation}
    q_{\mathrm{opt}}
    =
    \min\left\{
    q:\bigcap_{\alpha\in\Lambda_{\mathrm{blk}}}R_\alpha(q)\ne\varnothing
    \right\}.
    \label{eq:assisted-true-frontier-opt}
\end{equation}
Indeed, a global inverse restricts to one sector certificate for each block.
Conversely, a common charge in all \(R_\alpha(q)\) gives same-length branches
with the same residual phase, which Lemma~\ref{lem:structured-controlled-sum}
combines coherently.

A finite certificate catalogue \(\cC\) consists of sector atoms
\[
    (\ell_{\alpha,k},r_{\alpha,k})
    \in \mathsf B_\alpha
\]
and scalar atoms
\[
    (w_\gamma,h_\gamma)\in\mathsf S_{\mathrm{atom}},
\]
each backed by an explicit clean circuit. The catalogue-generated frontier is
\begin{equation}
    R_\alpha^\cC(q)
    :=
    \left\{
    r_{\alpha,k}+\sum_\gamma n_{\alpha\gamma}h_\gamma:
    \ell_{\alpha,k}+\sum_\gamma n_{\alpha\gamma}w_\gamma=q,\ 
    n_{\alpha\gamma}\in\ZZ_{\ge0}
    \right\}.
    \label{eq:assisted-catalogue-frontier}
\end{equation}
By construction, \(R_\alpha^\cC(q)\subseteq R_\alpha(q)\).

Choose a basis of the finite-dimensional charge space $\mathfrak X_L$ and
represent all charges by their coordinates in that basis. The best protocol
generated by \(\cC\) is then the following mixed-integer linear program.
Minimize \(q\) over
\[
    z_{\alpha,k}\in\{0,1\},\qquad
    n_{\alpha\gamma}\in\ZZ_{\ge0},\qquad
    r_\star\in\mathfrak X_L,
\]
subject to, for every \(\alpha\),
\begin{align}
    \sum_k z_{\alpha,k}&=1,\\
    \sum_k z_{\alpha,k}\ell_{\alpha,k}
    +\sum_\gamma n_{\alpha\gamma}w_\gamma&=q,
    \label{eq:assisted-catalogue-ilp-cost}\\
    \sum_k z_{\alpha,k}r_{\alpha,k}
    +\sum_\gamma n_{\alpha\gamma}h_\gamma&=r_\star
    \qquad\text{in }\mathfrak X_L.
    \label{eq:assisted-catalogue-ilp-charge}
\end{align}
Equivalently,
\begin{equation}
    q(\cC)
    =
    \min\left\{
    q:\bigcap_\alpha R_\alpha^\cC(q)\ne\varnothing
    \right\}.
    \label{eq:assisted-catalogue-value}
\end{equation}
with the convention $q(\cC)=+\infty$ when the feasible set is empty.
Feasibility at \(q\) is exactly the condition needed by the controlled branch
compiler: on branch \(\alpha\), choose one sector atom and append the listed
scalar atoms.  Equations~\eqref{eq:assisted-catalogue-ilp-cost} and
\eqref{eq:assisted-catalogue-ilp-charge} make the branch lengths and charges
common. Hence every finite catalogue with $q(\cC)<+\infty$ gives a certified
upper bound,
\begin{equation}
    q_{\mathrm{opt}}\le q(\cC).
    \label{eq:assisted-catalogue-upper}
\end{equation}
The determinant trace-vector construction is the special case
\[
    (\ell_\alpha,r_\alpha)
    =
    (d_\alpha s_\alpha-1,\ s_\alpha\tau_\alpha),
    \qquad
    (w_\beta,h_\beta)
    =
    (d_\beta,\tau_\beta).
\]

\subsection{Hierarchy and finite optimality}

If
\(\cC_1\subseteq\cC_2\subseteq\cdots\) is an increasing sequence of finite
catalogues and \(q_m=q(\cC_m)\), then
\begin{equation}
    q_{\mathrm{opt}}\le q_{m+1}\le q_m.
    \label{eq:assisted-monotone-hierarchy}
\end{equation}
Adding a local inverter, a cross-block shortcut, a scalar gadget, or a
nonminimal but phase-useful sector primitive can only improve the catalogue
upper bound.

The certificate language is complete in the formal exhaustive limit.  Let
\(\mathfrak C\) be the directed collection of all finite certified catalogues.
Then
\begin{equation}
    q_{\mathrm{opt}}
    =
    \inf_{\cC\in\mathfrak C}q(\cC).
    \label{eq:assisted-exhaustive-completeness}
\end{equation}
The inequality \(q_{\mathrm{opt}}\le q(\cC)\) holds for every finite
catalogue. Conversely, take an optimal global reduced-system circuit
and restrict it to each active sector.  These restrictions are finitely many
assisted sector certificates, all with the same cost and charge, so they form a
finite catalogue whose ILP value is \(q_{\mathrm{opt}}\).

At a finite stage, optimality needs more information.  A catalogue is
frontier-complete if
\begin{equation}
    R_\alpha(q)=R_\alpha^\cC(q)
    \qquad
    \forall \alpha,\ q\ge0.
    \label{eq:assisted-frontier-complete}
\end{equation}
This condition implies \(q(\cC)=q_{\mathrm{opt}}\).  The weaker condition
\begin{equation}
    \bigcap_\alpha R_\alpha(q)
    =
    \bigcap_\alpha R_\alpha^\cC(q)
    \qquad
    \forall q<q(\cC)
    \label{eq:assisted-common-frontier-complete}
\end{equation}
is already sufficient, because it rules out any unlisted common phase below the
catalogue value.  Thus the finite ILP is an optimality theorem only after a
frontier-completeness or lower-bound argument has been supplied.

\begin{remark}
Outer relaxations can provide such lower-bound information.  If sets
\(\overline R_\alpha(q)\supseteq R_\alpha(q)\) have empty intersection below
some \(q_{\mathrm{lb}}\), then \(q_{\mathrm{lb}}\le q_{\mathrm{opt}}\).  When this
lower bound meets a catalogue upper bound, the catalogue protocol is globally
optimal.  This is a certification route, not an extra construction used in the
main examples.
\end{remark}

\section*{References}
\begingroup
\small
\newcounter{smref}
\begin{list}{[\arabic{smref}]}{%
  \usecounter{smref}
  \setlength{\leftmargin}{2.4em}
  \setlength{\labelwidth}{1.8em}
  \setlength{\labelsep}{0.6em}
  \setlength{\itemsep}{0.25em}
  \setlength{\parsep}{0pt}
  \setlength{\topsep}{0.5em}
}
\item P. Etingof, O. Golberg, S. Hensel, T. Liu, A. Schwendner,
D. Vaintrob, and E. Yudovina, \emph{Introduction to Representation Theory},
Student Mathematical Library, Vol.~59 (American Mathematical Society,
Providence, RI, 2011).

\item Y.-A. Chen, Y. Mo, Y. Liu, L. Zhang, and X. Wang,
arXiv:2403.04704 [quant-ph] (2024).

\item D. Bacon, I. L. Chuang, and A. W. Harrow,
Phys. Rev. Lett. \textbf{97}, 170502 (2006).
\end{list}
\endgroup

\end{document}